\newcommand{\qed}{{$\square$}}
\newcommand{\calC}{{\cal C}}
\newcommand{\calP}{{\cal P}}
\newcommand{\calU}{{\cal U}}
\newcommand{\calX}{{\cal X}}
\newcommand{\calY}{{\cal Y}}
\newcommand{\veps}{{\varepsilon}}
\newcommand{\eps}{{\epsilon}}
\DeclareMathOperator*{\argmin}{arg\,min}
\numberwithin{equation}{section}
\def\E{{\mathbb{E}}}
\def\eps{\epsilon}
\newcommand{\half}{\frac{1}{2}\:}
\newcommand{\diag}{{\rm diag}}
\newcommand{\real}{{\mathbb{R}}}
\newcommand{\pr}{{\mathbb{P}}}
\newtheorem{theorem}{Theorem}[section]
\newtheorem{lemma}[theorem]{Lemma}
\newtheorem{proposition}[theorem]{Proposition}
\newtheorem{properties}[theorem]{Properties}
\newtheorem{definition}[theorem]{Definition}
\newtheorem{rem}[theorem]{Remark}
\begin{document}
\title{
Comparing Multivariate Distributions: A Novel Approach Using Optimal Transport-based Plots}

\author{Sibsankar SINGHA\textsuperscript{(a)} \\[.5ex]
Supervisors: Marie KRATZ\textsuperscript{(b)} and Sreekar VADLAMANI\textsuperscript{(c)}
\\[1ex]
\small
\textsuperscript{(a)} TIFR-CAM, Bangalore, India \& ESSEC CREAR, France; Email: sibsankar@tifrbng.res.in \\
\small
\textsuperscript{(b)} ESSEC Business School, CREAR, Cergy-Pontoise, France; Email: kratz@essec.edu \\
\small
\textsuperscript{(c)} TIFR-CAM, Bangalore, India; Email: sreekar@tifrbng.res.in
}

\maketitle

\vspace{.5cm}

\begin{abstract}
    
 Quantile-Quantile (Q-Q) plots are widely used for assessing the distributional similarity between two datasets. Traditionally, Q-Q plots are constructed for univariate distributions, making them less effective in capturing complex dependencies present in multivariate data. In this paper, we propose a novel approach for constructing multivariate Q-Q plots, which extend the traditional Q-Q plot methodology to handle high-dimensional data. Our approach utilizes optimal transport (OT) and entropy-regularized optimal transport (EOT) to align the empirical quantiles of the two datasets. Additionally, we introduce another technique based on OT and EOT potentials which can effectively compare two multivariate datasets. Through extensive simulations and real data examples, we demonstrate the effectiveness of our proposed approach in capturing multivariate dependencies and identifying distributional differences such as tail behaviour. We also propose two test statistics based on the Q-Q and potential plots to compare two distributions rigorously.

\end{abstract}

{\it Keywords:} Q-Q plots; multivariate quantile; optimal transport; entropy regularisation; hypothesis testing; geometric quantile; tail behavior

\newpage

\section{Introduction}

Univariate Quantile-Quantile (Q-Q) plot is a graphical tool used in statistics to assess whether two samples follow a similar distribution by comparing their respective quantiles. It is a simple yet powerful visualization technique that provides valuable insights into the shape, location, and scale of two samples.


While the univariate Q-Q plot is well defined, there is not much literature on multivariate Q-Q plots. This perhaps can be attributed to the absence of natural ordering. 
\cite{Easton1990} introduced a permutation based multivariate Q-Q plot, where samples of equal length are matched against each other such that the total $\ell^2$ distance between them is minimised. Subsequently, the corresponding components of matched samples are plotted against one another in different bivariate plots. These plots then show visual evidence of similarity of underlying distributions of the two sets of samples. In \cite{Dhar2014}, the authors used the crucial property of unique characterisation of underlying distributions by geometric quantile to develope componentwise Q-Q plots. Specifically, samples are matched according to their geometric rank, and then a componentwise plotting of matched samples provides insight into the underlying distributions.

In this context, we aim at extending this graphical tool using the optimal transport (OT) map, to which we refer as OT Q-Q plot. 
We prove that, as the sample size increases, the Q-Q plots are concentrating around the straight line passing through the origin and with slope $1$, if and only if the two sets of samples are drawn from the same distribution. This characteristic enables us to visually compare two distributions.


Moreover, in the context of very large dimensions, the componentwise plot may not be ideal due to a large number of bivariate plots. So, we suggest an alternative approach based on OT potential function, which allows for the comparison of two sets of multivariate samples in a single bivariate plot, named OT potential plot. Similar to the componentwise plots, we establish that, as the sample size increases, the potential plot becomes arbitrarily close to the straight line with slope $1$ and passing through the origin if and only if the underlying distributions of the two samples are identical. 

Computing empirical OT maps can be costly, especially when sample sizes are large; see \cite{Huetter2021}.
Therefore, practical solutions have been proposed in the literature, one of the most popular approaches being the entropy regularisation, introduced in \cite{Cuturi2013}. By selecting the regularisation parameter to be sufficiently small, the entropy regularised OT (EOT) map and EOT potential can closely approximate the OT map and OT potential, respectively. Moreover, we prove 
that the EOT map and EOT potential can also uniquely characterize the distribution. 

These significant properties of EOT map and EOT potential serve as our driving force to also build Q-Q plots (we call it Q-Q, although EOT maps are not quantiles) and potential plots based on them. Finally, as given in \cite{Shapiro1965, Dhar2014}, we propose test statistics to assess the relevance of the proposed Q-Q plots and potential plots.

We apply this (E)OT approach on examples from simulated and real data, then compare (E)OT Q-Q plots with geometric ones.

Our contributions are the following:
\begin{itemize}
    \item[(i)] We construct OT Q-Q plots for comparing multivariate distributions, as we can prove that these plots are concentrating, as the sample size increases, around the straight line passing through the origin and with slope $1$, if and only if the two sets of samples are drawn from the same distribution. 
    \item[(ii)] We also propose OT potential plot, which is shown to share the same unique characterization property as for OT maps. This tool is quite interesting as it gives a single bivariate plot, whatever the dimension of the distribution.
    \item[(iii)] We develop the same tools and characteristic properties as for OT when considering the EOT approach, widely used because less computationally challenging than the OT approach.  
    \item[(iv)] We propose test statistics to assess the relevance of the  Q-Q and potential plots as tools for comparing multivariate distributions, for the EOT approach.
    \item[(v)] We question the benefit of using (E)OT Q-Q or potential plots, to retrieve specific features of the distributions considered, specifically in the tail. We do so via an extensive simulation study.
    \item[(vi)] We compare (E)OT Q-Q plots with geometric Q-Q plots developed in \cite{Dhar2014}, on simulated and real data. We observe that the former is better performing visually in general (on our examples) than the latter to characterize tail distributions. 
\end{itemize}
\vspace{1ex}
{\it Notation:}  $\|\cdot\|$ denotes the $\ell^2$ norm, $P^{conv}_d$  the set of all probability measures on $\real^d$ with convex support, $\langle \cdot \rangle$ the usual inner product on $\real^d$, and $X  \sim \nu$ means that $X$ has  distribution $\nu$.  $\pi\ll P$ implies that the measure $\pi$ is absolutely continuous with respect to the measure $P$. The measure $\mu$ is assumed to be uniform on the open unit ball $B^d$ unless otherwise specified. $\Pi(\mu,\nu)$ is the set of all couplings with first marginal $\mu$ and second marginal $\nu$, and $\nabla \equiv (\frac{\partial}{\partial x_1},\cdots,\frac{\partial}{\partial x_n})$. Let $\nu$ be a probability measure and $K$ be a $\nu-$measurable set with $\nu(K)>0$. Then define the probability measure $\nu|_K(\cdot)$ such that $\nu|_K(A)=\frac{\nu(A \cap K)}{\nu(K)}$ for all $\nu-$measurable sets $A$.


\vspace{1ex}
\textbf{\it Structure of the paper:}
In Section \ref{section:OT and EOT review}, we briefly recall the concepts related to OT, EOT maps, and potentials, together with their connections to the multivariate quantile function. Section \ref{sec:mult Q-Q plot} includes discussions and results related to constructions of multivariate Q-Q plots and potential plots based on OT and EOT. In Section \ref{sec: Test statistics}, we propose two test statistics and study their asymptotics. In Section~\ref{sec: Illustration}, we illustrate our (E)OT approach on simulated samples to compare multivariate distributions on various scenarios and experiments. In Section~\ref{sec:ex_real_data}, we apply the method on real data. In Section~\ref{sec:comparisonGeom}, we compare results obtained with geometric and (E)OT Q-Q plots, respectively, first on simulated data, then on real data.

\section{Background and preliminaries on multivariate quantile and potential function} 
\label{section:OT and EOT review}
 \subsection{Optimal transport approach}

 The motivation for defining multivariate quantile through optimal transport arises from certain inherent properties of the univariate quantile functions. One key property of the univariate quantile function is its nondecreasing nature. Another compelling aspect is that any probability measure is the pushforward of the uniform measure on the interval $(0,1)$ by its quantile function. These two properties together uniquely characterize a quantile function; any function meeting these criteria must be a quantile. This property of transporting measure in high dimensions is at the heart of the theory of optimal transport and has a very rich literature (see \cite{villani2003topics, villani2009oldnew}, \cite{Santambrogio2015} for an exhaustive review). In particular, the following theorem due to \cite{Brenier1991PolarFA} and \cite{McCann1995Existence}, proves the existence and uniqueness of a map $T$ that pushes forward a measure $\mu$ to $\nu$ if $\mu$ is absolutely continuous with respect to the Lebesgue measure. Moreover, the map $T$ is the gradient of a convex function, i.e. $T=\nabla \phi$. The property that $T$ can be written as the gradient of a convex function, is analogous to the monotonicity property of univariate quantile function.

\begin{theorem}{\bf (Brenier--McCann)}
\label{Thm:Brennier-McCann}
    Let $\mu$ and $\nu$ be two probability measures on $\real^d$. Assume that $\mu$ is absolutely continuous with respect to the Lebesgue measure. Then, there exists a convex function $\varphi$ on $\real^d$, which is $\mu$ almost everywhere unique up to an additive constant, such that $\nabla \varphi$ pushes forward the measure $\mu$ to the measure $\nu$, notationally $\nabla \varphi \sharp \mu =\nu$. If $\nu$ is also absolutely continuous with respect to the Lebesgue measure, then $\nabla \varphi$ is invertible and $(\nabla\varphi)^{-1}=\nabla \varphi^*$, where $\varphi^*$ is the Fenchel--Legendre transform of $\varphi$. 
   Additionally, if both $\mu$ and $\nu$ have finite second moment, then $\nabla\varphi$ uniquely solves the {\it Monge problem}:
   \begin{equation}
    \label{eqn:Monge-problem}
    \argmin_{T:T\sharp \mu =\nu} \int_{\real^d}\|x-T(x)\|^2 \mu(dx).
 \end{equation}
\end{theorem}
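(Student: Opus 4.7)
The plan is to establish existence via the Kantorovich relaxation of the Monge problem together with convex duality, and then derive uniqueness, invertibility, and Monge optimality as structural consequences. I would work first under the simplifying assumption that $\mu$ and $\nu$ have finite second moments (where duality is cleanest) and indicate how to remove this hypothesis for the bare existence statement via a localization argument.

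First, set up the Kantorovich problem: minimize $\int \|x-y\|^2 d\pi(x,y)$ over $\pi \in \Pi(\mu,\nu)$. Expanding $\|x-y\|^2 = \|x\|^2+\|y\|^2-2\langle x,y\rangle$ reduces this, modulo constants fixed by the marginals, to maximizing $\int \langle x,y\rangle d\pi$. Tightness of $\Pi(\mu,\nu)$ and continuity of the cost yield an optimizer $\pi^*$. Kantorovich duality expresses the primal value as
\begin{equation*}
\sup\Bigl\{\int \varphi\,d\mu + \int \psi\,d\nu \;:\; \varphi(x)+\psi(y)\geq \langle x,y\rangle\Bigr\},
\end{equation*}
and the constraint is saturated by a Fenchel conjugate pair $(\varphi,\varphi^*)$ with $\varphi$ convex and lower semicontinuous. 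Complementary slackness then gives $\varphi(x)+\varphi^*(y)=\langle x,y\rangle$ on $\mathrm{supp}(\pi^*)$, i.e., $y\in\partial\varphi(x)$ for $\pi^*$-almost every $(x,y)$.

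Next, exploit $\mu \ll$ Lebesgue: any convex function is differentiable outside a Lebesgue-null set, so $\partial\varphi(x)=\{\nabla\varphi(x)\}$ holds $\mu$-a.e. Hence $\pi^*$ is concentrated on the graph of $\nabla\varphi$, forcing $\pi^*=(\mathrm{id}\times\nabla\varphi)\sharp\mu$ and therefore $\nabla\varphi\sharp\mu=\nu$. For uniqueness up to an additive constant: two convex $\varphi_1,\varphi_2$ with $\nabla\varphi_i\sharp\mu=\nu$ both induce optimal couplings, and cyclical monotonicity of their common support forces $\nabla\varphi_1=\nabla\varphi_2$ $\mu$-a.e., whence $\varphi_1-\varphi_2$ is constant on the convex hull of $\mathrm{supp}(\mu)$. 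When $\nu$ is also absolutely continuous, running the construction with $\mu$ and $\nu$ swapped produces a convex $\psi$ with $\nabla\psi\sharp\nu=\mu$; uniqueness identifies $\psi=\varphi^*$, giving $(\nabla\varphi)^{-1}=\nabla\varphi^*$. Under finite second moments, the optimal Kantorovich coupling concentrated on this graph realizes a Monge map, and any competitor $T$ in (\ref{eqn:Monge-problem}) produces a coupling with no smaller cost, so $\nabla\varphi$ attains and uniquely solves the Monge infimum.

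The step I expect to be the main obstacle is removing the finite-second-moment hypothesis from the existence claim, since textbook Kantorovich duality requires integrability of the cost. I would handle this by a localization scheme, restricting $\mu$ and $\nu$ to growing balls via the conditional measures $\mu|_{B_R}$ and $\nu|_{B_R}$ from the paper's notation, producing local Brenier potentials $\varphi_R$, and patching them by using compactness of convex functions on bounded sets together with the stability of cyclical monotonicity as $R\to\infty$. The resulting globally convex $\varphi$ transports $\mu$ to $\nu$, though without finite second moments the interpretation through the Monge minimization (\ref{eqn:Monge-problem}) no longer applies.
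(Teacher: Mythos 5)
The paper does not supply a proof of this theorem; it is stated as background and attributed directly to \cite{Brenier1991PolarFA} and \cite{McCann1995Existence}. So there is no ``paper's proof'' to compare against — what follows is an assessment of your sketch against the known arguments.

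Your outline of the finite-second-moment case is essentially the standard Brenier argument and is sound: Kantorovich relaxation, duality with a Fenchel-conjugate pair, complementary slackness giving $y\in\partial\varphi(x)$ on $\mathrm{supp}(\pi^*)$, and $\mu\ll$ Lebesgue forcing single-valuedness of $\partial\varphi$ $\mu$-a.e. One point is glossed: your uniqueness step says ``cyclical monotonicity of their common support forces $\nabla\varphi_1=\nabla\varphi_2$,'' but the two induced couplings need not have the same support. The usual fix is to observe that $\tfrac12(\pi_1+\pi_2)$ is also optimal, hence concentrated on a single cyclically monotone set $\partial\phi$, so both $\nabla\varphi_1(x)$ and $\nabla\varphi_2(x)$ lie in $\partial\phi(x)$ for $\mu$-a.e.\ $x$, which is a singleton $\mu$-a.e. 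Without this intermediary the step does not close.

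The genuine gap is the removal of the finite-second-moment hypothesis, which is precisely McCann's contribution and which you flag yourself. Your proposed localization-and-patching scheme is not how McCann proceeds, and as written it has a hole: from the local potentials $\varphi_R$ you would get local couplings $(\mathrm{id}\times\nabla\varphi_R)\sharp(\mu|_{B_R})$, and after extracting a weak limit you obtain a coupling $\pi\in\Pi(\mu,\nu)$, but nothing yet forces $\pi$ to be supported on the graph of a single gradient. You would still need a stability-of-cyclical-monotonicity lemma for the supports under weak convergence, plus Rockafellar's theorem to re-synthesize a global convex $\varphi$ — and at that point you have recreated McCann's actual argument, which bypasses duality entirely: approximate $\mu,\nu$ by finitely supported measures, verify cyclical monotonicity of the discrete optimal matching, pass to the weak limit using stability of cyclical monotonicity, then apply Rockafellar. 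The ``patching'' metaphor of gluing normalized potentials $\varphi_R$ directly is the misleading part; the object that passes to the limit cleanly is the coupling (or equivalently its cyclically monotone support), not the potential. You would also need a separate argument for uniqueness in the infinite-moment regime, which your sketch does not address.
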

Let $\mu$ be the uniform probability measure on the unit ball $B^d$ and $\nu$ be any probabilty measure on $\real^d$, then Theorem \ref{Thm:Brennier-McCann} ensures the existence of a unique monotone map (the gradient of a convex function), which pushes forward $\mu$ to $\nu$. 
The following definition of multivariate quantile function is from \cite{Chernozhukov2017} with minor modification, as explained in Remark~\ref{rk:modification_dfn_quantile}.
\begin{definition}{\bf (OT quantile; \cite{Chernozhukov2017})} Let $\mu$ be the uniform probability measure on $B^d$. The function  defined by $T_{\nu}=\nabla \varphi |_{B^d}$ such that  $\nabla \varphi\sharp \mu=\nu$ for a convex function $\varphi$, is defined as the OT quantile function of the probability measure $\nu$.
\end{definition}
Although the OT quantile is almost everywhere unique, the convex functions in Theorem \ref{Thm:Brennier-McCann} are unique up to additive constants. Therefore setting its infimum to $0$ will ensure their uniqueness.
\begin{definition}{\bf (OT Potential; \cite{Chernozhukov2017})} \label{dfn:Monge potential}
    Let $\mu$ be the uniform probability measure on the unit ball $B^d$. A convex function $\varphi_{\nu}$ satisfying $\nabla \varphi_{\nu}\sharp \mu=\nu$  and $\displaystyle\inf_{u \in B^d}\varphi_{\nu}(u)=0$ is called the OT potential of $\nu$.
\end{definition}
\begin{rem}
    \label{rk:modification_dfn_quantile}
    \begin{enumerate}
        \item Since any convex function on bounded open set is bounded below by a finite number we can subtract the infimum to satisfy the condition $\displaystyle\inf_{u \in B^d}\varphi_{\nu}(u)=0$. This will ensure that the potential function is non-negative everywhere.
        \item In the definitions above, the fixed measure $\mu$ is assumed to be the uniform probability measure on the unit ball. However, one can choose a different reference distribution; for instance, in \cite{Chernozhukov2017}, \cite{Hallin2021}, $\mu$ is considered to be the spherically uniform measure. We refer the reader to \cite[Remark 3.11]{Ghosal2022} for discussion on the effect of choosing different reference measures.
    \end{enumerate}
\end{rem} 
Later in this paper, we will extensively use the following properties of both OT quantile and OT potential, which are a straightforward consequence of Theorem~\ref{Thm:Brennier-McCann}.
\begin{properties}{\bf(Unique characterisation of measure)} 
\label{Prop:unique characterisation of the measure}
Let $\nu_1$ and $\nu_2$ be two probability measures. Then,
    \begin{enumerate}[(i)]
        \item $\nu_1=\nu_2$ if and only if $\varphi_{\nu_1}(u)=\varphi_{\nu_2}(u),\,\,\, \text{ for }\mu-\text{almost every } u \in B^d$, where $\varphi_{\nu_1},\varphi_{\nu_2}$ are OT potentials.
        \item $\nu_1=\nu_2$ if and only if $T_{\nu_1}(u)=T_{\nu_2}(u), \,\, \text{ for }\mu-\text{almost every } u \in B^d$, where $T_{\nu_1},T_{\nu_2}$ are the OT quantiles.
    \end{enumerate}
\end{properties}

\paragraph{Empirical OT quantile - }
When a distribution on $\mathbb{R}^d$ is supported on finite points (i.i.d. samples), the empirical quantile function can be defined in either of the following ways: 
\begin{itemize}
    \item Discrete to discrete: In this case, the empirical quantile function is an optimal transport map between two sets of samples. Thus, clearly, the quantile function is supported on the finite sample, generated uniformly on $B^d$. 
    \item Continuous to discrete or semi-discrete: In this case, the quantile function is the transport map between the uniform measure on $B^d$ and a finite sample. Therefore, the map is defined on the whole unit ball. See \cite{Chernozhukov2017} for more details.
\end{itemize}
  For simplicity, we consider the discrete to discrete version of empirical OT quantiles; and refer the reader to \cite{Chernozhukov2017,Hallin2021,Ghosal2022} for more properties on empirical OT quantiles.

For the purpose of this work, we shall consider i.i.d. samples ${\cal X}^n=\{X_1,X_2, ...,X_n\}$ with $X_i \sim \nu$ and ${\cal U}^n =\{U_1,U_2,\cdots,U_n\}$ with $U_i \sim \mu$, $\mu$ being the uniform measure on $B^d$. Then, we define the empirical measures corresponding to ${\cal X}^n$ and ${\cal U}^n$ as

\begin{equation}
    \label{eq:enp_measure}
    \nu_n(\cdot)=\frac{1}{n}\sum_{i=1}^n\delta_{X_i}(\cdot)  \qquad \text{and} \qquad  \mu_n(\cdot)=\frac{1}{n}\sum_{i=1}^n\delta_{U_i}(\cdot).
\end{equation}

\begin{definition}{\bf (Empirical OT quantile; \cite{Hallin2021})}
The empirical quantile function of the empirical measure $\nu_n$ with respect to the reference measure $\mu_n$ is a function $T_{\nu_n}:{\cal U}^n \to {\cal X}^n$ such that
\begin{equation}
\label{eqn:dfn of emp quantile}
    \sum_{i=1}^n\|U_i-T_{\nu_n}(U_i)\|^2=\min_{\sigma_n\in {\cal S}_n} \sum_{i=1}^n\|X_i-U_{\sigma_n(i)}\|^2 ,
\end{equation}
where ${\cal S}_n$ denotes the set of permutations of the set $\{1,\ldots,n\}$.
\end{definition}
\begin{rem}
    The minimization problem \eqref{eqn:dfn of emp quantile} resembles the {\it Monge problem} in discrete setting. Any solution to this problem satisfies the cyclical monotonicity property, which is a unique feature of the OT map (see \cite{villani2003topics,villani2009oldnew}).
\end{rem}
\begin{rem}
     Observe that the function $T_{\nu_n}$ is random, and is supported on the random set ${\cal U}^n$. 
Although in our notation the empirical OT quantile $T_{\nu_n}$ seems to depend only on the measure $\nu_n$, note that it is defined with respect to a reference $\calU^n$. In \cite{Hallin2021}, instead of taking a random sample ${\cal U}^n$, the authors construct a deterministic grid on $B^d$ to define empical OT quantiles.
\end{rem}
\begin{rem}
    The map $T_{\nu_n}$ may not be unique. For example, let  $\,{\cal U}^2=\{(0,0),(.5,.5)\}$ and ${\cal X}^2=\{(0,.5),(.5,0)\}$, and note that, there are only two possible bijective maps from $\calU^2$ to $\calX^2$ and the cost remains the same for each of them. Therefore there are two possibilities for the map $T_{\nu_2}$. However, since ${\cal U}^n$ is a sample drawn from an absolutely continuous distribution $\mu$, the event that $T_{\nu_n}$ is not unique has probability zero; see \cite{Hallin2021}.
\end{rem}
For the empirical OT potential, we will adopt the definition from \cite{Chernozhukov2017}, where it is defined such that its gradient coincides with the empirical quantile function on its finite support. The definition is based on the dual formulation of the {\it Monge problem} \eqref{eqn:Monge-problem}. 
\begin{definition}{\bf (Empirical OT potential; \cite{Chernozhukov2017})}  
    Let $\mu_n$ and $\nu_n$ be as defined in \eqref{eq:enp_measure}.  The empirical OT potential of $\nu_n$ with respect to  $\mu_n$ (the reference measure) is a convex function $\varphi_n$ such that $(\varphi_n, \varphi_n^{*}) \in \Phi_0$ solves the following dual problem 
\begin{equation}
    \int \varphi_n \,d\mu_n + \int \varphi_n^* \,d\nu_n=\inf _{(\psi, \psi^*) \in \Phi_0} \left(\int \psi \,d\mu_n + \int \psi^* \,d\nu_n\right),
\end{equation}
where $\Phi_0$ denotes the set of all pairs of convex conjugates $(\psi,\psi^*)$ such that \\$\displaystyle \inf_{u \in B^d}\psi(u)=0$.
\end{definition}

Note that $\nabla \varphi_n$, when restricted to the set ${\cal U}^n$, coincides with the empirical quantile function $T_{{\cal X}^n}$.

    \begin{lemma}
    \label{lemma:quantile convergence}
        Let $\nu$ be an absolutely continuous probability measure with convex support. Let $T_{\nu}$ and $T_{\nu_n}$ be population and empirical quantile functions with respect to $\mu$ and $\mu_n$, respectively. Then, for any compact set $K \subset B^d$,
        \begin{equation*}
            \max_{\substack{U_i \in K \\{1\leq i \leq n}}}\left|T_{\nu_n}(U_i)-T_{\nu}(U_i)\right| \stackrel{a.s.}{ \longrightarrow} 0, \,\,\,\,\,\, \text{ as } n \to +\infty.
        \end{equation*}
    \end{lemma}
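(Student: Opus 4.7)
The plan is to reduce the statement to a (deterministic) stability property of the Brenier map under weak convergence of marginals, and then combine it with almost sure Glivenko--Cantelli. Broadly, I would first establish local uniform convergence $\varphi_n \to \varphi$ of the normalized convex OT potentials on the interior of $B^d$, and then use convex analysis to transfer this to convergence of their gradients $\nabla\varphi_n \to T_\nu = \nabla\varphi$.

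In detail, by Varadarajan's theorem the empirical measures $\mu_n$ and $\nu_n$ converge weakly to $\mu$ and $\nu$, respectively, almost surely; I would work on this full-measure event, so that the remainder of the argument is deterministic. Let $\varphi_n$ denote the empirical OT potential normalized to have infimum $0$ on $B^d$. The crucial analytical step is to show that the family $\{\varphi_n\}$ is precompact in the topology of local uniform convergence on $\text{int}\,B^d$. This uses Kantorovich duality: Young's inequality $\varphi_n(u) + \varphi_n^*(x) \ge \langle u,x\rangle$ together with a uniform bound on $\int \varphi_n^*\, d\nu_n$ (obtained from the optimal dual value being bounded above by the primal cost of a simple admissible plan) yields a local upper bound on $\varphi_n$, which convexity plus the normalization promotes to equi-Lipschitz estimates on compact subsets of $\text{int}\,B^d$. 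Any subsequential limit is then a convex function whose gradient pushes $\mu$ forward to $\nu$ and whose infimum on $B^d$ is $0$; the uniqueness part of Theorem~\ref{Thm:Brennier-McCann} forces that limit to be $\varphi$, so the whole sequence converges locally uniformly.

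Since $\mu$ is uniform on the convex open set $B^d$ and $\nu$ has convex support, Caffarelli's regularity theorem gives $\varphi \in C^1(\text{int}\,B^d)$, so $T_\nu = \nabla\varphi$ is continuous on $\text{int}\,B^d$. A standard fact from convex analysis then ensures that local uniform convergence of convex functions to a $C^1$ limit upgrades to uniform convergence of any measurable selection from the subdifferential on compact subsets. Since by construction $T_{\nu_n}(U_i) \in \partial\varphi_n(U_i)$, restricting to the compact $K \subset B^d$ yields $\max_{U_i\in K}\|T_{\nu_n}(U_i) - T_\nu(U_i)\| \to 0$ almost surely, as required.

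The most delicate step is the local upper bound on $\varphi_n$: unlike the compactly supported case, $\nu$ may have unbounded support, so some care is needed to obtain a uniform-in-$n$ control on $\int \varphi_n^*\, d\nu_n$ (one typically fixes a smooth admissible dual pair, evaluates the dual functional there, and uses the $\nu_n \Rightarrow \nu$ convergence with a mild integrability assumption on $\nu$). Once this hurdle is cleared, the remainder is a routine combination of compactness for convex functions, uniqueness of the Brenier map, and the classical transfer from convergence of convex functions to convergence of their subdifferentials.
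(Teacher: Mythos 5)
Your argument is mathematically sound in outline but takes a genuinely different route from the paper's. The paper avoids the convex-analysis machinery entirely: writing $T_{\nu_n}(U_i)=X_{\sigma_n(i)}$, it observes the algebraic identity $T_{\nu_n}(U_i)-T_\nu(U_i)=T_\nu\bigl(F_\nu(X_{\sigma_n(i)})\bigr)-T_\nu\bigl(F_{\nu_n}(X_{\sigma_n(i)})\bigr)$, where $F_\nu=T_\nu^{-1}$ and $F_{\nu_n}=T_{\nu_n}^{-1}$, and then simply invokes the Glivenko--Cantelli result for center-outward distribution functions (\cite[Prop.\ 2.4]{Hallin2021}) together with uniform continuity of $T_\nu$ on compacts (which follows from the homeomorphism property of OT maps onto convex targets). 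That reduction is shorter and needs no compactness-of-potentials argument, no dual estimates, and no regularity theorem. Your route --- Varadarajan, equi-Lipschitz compactness of potentials, uniqueness of the Brenier limit, then subdifferential convergence --- is the standard stability-of-Brenier-maps proof, and it has the advantage of being more self-contained and of making the mechanism transparent; but two points deserve care. First, the continuity of $T_\nu$ here does \emph{not} come from Caffarelli's theorem (which requires densities bounded above and below); it comes from the weaker homeomorphism theorem for transport onto a measure with convex support (Figalli, del Barrio--Gonz\'alez-Sanz--Hallin), which is exactly what the paper cites implicitly. Second, in the discrete-to-discrete setting used here the potential $\varphi_n$ is a priori defined only on the finite grid $\mathcal U^n$, so before speaking of local uniform convergence on $\mathrm{int}\,B^d$ you need to specify a convex extension (e.g.\ the biconjugate of the discrete dual variable) and check that the equi-Lipschitz bound applies to it; you gloss over this. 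Your integrability concern for the dual upper bound is real, but the paper sidesteps it entirely via the inverse-function trick, which is precisely what makes their proof work without explicit moment hypotheses beyond what Brenier--McCann already requires.
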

\begin{proof}
     Since $\nu$ is absolutely continuous, the quantile function $T_{\nu}$ is bijective. The inverse of the quantile is defined as the multivariate distribution function or OT distribution function and denoted by $F_{\nu}$, i.e. $T_{\nu}=F^{-1}_{\nu}$. Similarly, the empirical distribution function $F_{\nu_n}$ is the inverse of the empirical quantile, i.e. $T_{\nu_n}=(F_{\nu_n})^{-1}$. 
     
     Note that the support of $\nu$ is convex, $T_{\nu}:B^d\to \real^d$ is a homeomorphism between $B^d$ and interior of $supp(\nu)$. Let $T_{\nu_n}(U_i)=X_{\sigma_n(i)}$. Then we can write, 
        \begin{equation*}
            \label{eqn:inv-identity}
            \left|T_{\nu_n}(U_i)-T_{\nu}(U_i)\right|=  \left|X_{\sigma_n(i)}-T_{\nu}(U_i)\right|
             = \left|T_{\nu}(F_{\nu}(X_{\sigma_n(i)}))-T_{\nu}(F_{\nu_n}(X_{\sigma_n(i)}))\right|,
        \end{equation*}
    which, combined with Proposition 2.4 in \cite{Hallin2021} and the fact that $T_{\nu}$ is uniformly continuous on compact sets, concludes the proof of Lemma~\ref{lemma:quantile convergence}.\hfill\qed
\end{proof}
\begin{lemma}
 Assume that the support of $\nu$ $(supp(\nu))$ is compact and the optimal transport map $T_{\nu}$ is a homeomorphism from the open unit ball $B^d$ to the interior of $supp(\nu)$. Then, for any compact $K \subset B^d$, 
\begin{equation*}
            \sup_{u \in K }\left|\varphi_{\nu_n}(u)-\varphi_{\nu}(u)\right| \stackrel{\pr^*}{ \longrightarrow} 0, \,\,\,\,\,\, \text{ as } n \to +\infty,
\end{equation*}
where $\stackrel{\pr^*}{ \longrightarrow}$ denotes the convergence in outer probability in the sense of \cite{Vaart1996}.
\end{lemma}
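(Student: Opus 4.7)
My strategy is to promote the gradient-level convergence of Lemma~\ref{lemma:quantile convergence} to function-level uniform convergence, via an equi-Lipschitz / Arzel\`a--Ascoli argument, while carefully pinning down the additive constant imposed by the normalization $\inf_{B^d}\varphi=0$. First, since $\mathrm{supp}(\nu)$ is compact, both $\nabla\varphi_\nu=T_\nu$ and each $\nabla\varphi_{\nu_n}$ (which takes values in $\{X_1,\dots,X_n\}\subset\mathrm{supp}(\nu)$, extended piecewise-linearly off $\mathcal{U}^n$) are bounded by the common constant $L:=\sup_{x\in\mathrm{supp}(\nu)}\|x\|$. Consequently $\varphi_\nu$ and every $\varphi_{\nu_n}$ are $L$-Lipschitz on $B^d$, so the family $\{\varphi_{\nu_n}\}$ is equicontinuous on $B^d$.

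Next, I would prove pointwise convergence $\varphi_{\nu_n}(u)\to\varphi_\nu(u)$ on a countable dense subset $D\subset B^d$. For any convex function on $B^d$ and a.e.\ pair $u,v\in B^d$, the identity
\[
\varphi(u)-\varphi(v)=\int_0^1 \bigl\langle \nabla\varphi\bigl(v+t(u-v)\bigr),\,u-v\bigr\rangle\,dt
\]
is valid since $\varphi$ is differentiable Lebesgue-a.e. Applied to $\varphi_{\nu_n}$ and to $\varphi_\nu$ with a common reference $v\in B^d$, the convergence $T_{\nu_n}\to T_\nu$ from Lemma~\ref{lemma:quantile convergence}---transferred from the sample points to the a.e.-defined gradient along the segment by exploiting that $\nabla\varphi_{\nu_n}$ is piecewise constant on Laguerre cells which shrink as $\mathcal{U}^n$ becomes dense in $B^d$---combined with dominated convergence (the integrand is bounded by $L\|u-v\|$) yields $\varphi_{\nu_n}(u)-\varphi_{\nu_n}(v)\to\varphi_\nu(u)-\varphi_\nu(v)$. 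To anchor the additive constant, I would let $v_n^*\in\overline{B^d}$ be a minimizer of $\varphi_{\nu_n}$ (so $\varphi_{\nu_n}(v_n^*)=0$); extracting a subsequence $v_{n_k}^*\to v^\dagger$ by compactness and using equi-Lipschitzness together with $\inf\varphi_\nu=0$, one identifies $v^\dagger$ as a minimizer of $\varphi_\nu$, whence $\varphi_\nu(v^\dagger)=0$. Setting $v=v_{n_k}^*$ in the above then gives $\varphi_{\nu_{n_k}}(u)\to\varphi_\nu(u)$ for $u\in D$, and the standard every-subsequence-has-a-further-subsequence argument extends this to the whole sequence.

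Finally, equi-Lipschitzness together with pointwise convergence on the dense subset $D$ of the open convex set $B^d$ yields uniform convergence on every compact $K\subset B^d$, either by Arzel\`a--Ascoli directly or by the classical convergence-of-convex-functions result (e.g.\ Theorem~10.8 in Rockafellar's \emph{Convex Analysis}). The outer probability in the statement is only a measurability precaution in the sense of \cite{Vaart1996}; continuity of each $\varphi_{\nu_n}$ reduces the sup over $K$ to a sup over a countable dense subset, making the quantity measurable. The hard part will be the two intertwined technicalities of the pointwise-convergence step: transferring the convergence of $T_{\nu_n}$ from the sample points (where Lemma~\ref{lemma:quantile convergence} applies) to the a.e.-defined gradient $\nabla\varphi_{\nu_n}$ along arbitrary line segments, and pinning down the additive constant in spite of possible non-uniqueness of minimizers of $\varphi_\nu$. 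The remaining steps are clean applications of standard convex-analytic tools.
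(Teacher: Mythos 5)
Your proposal takes a genuinely different route from the paper, which gives no argument at all and simply refers the reader to Theorem~3.1 of Chernozhukov et al.\ (2017) with the remark that the potential case is a ``straightforward adaptation'' of their result for maps. Your from-scratch plan --- (1) equi-Lipschitzness from compactness of $\mathrm{supp}(\nu)$, (2) pointwise convergence via the convex-integral formula, (3) pinning down the additive normalization, (4) Arzel\`a--Ascoli --- is a reasonable architecture and is in the spirit of the literature. The equi-Lipschitz observation is correct, as is the final step (equi-Lipschitz plus pointwise convergence on a dense set implies uniform convergence on compacta, e.g.\ Rockafellar Thm.~10.8), and the measurability remark about the outer probability is appropriate.

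However, the two places you flag as ``intertwined technicalities'' are genuine gaps, and the hints you offer do not close them. First, the integral formula requires a.e.\ convergence of $\nabla\varphi_{\nu_n}$ along arbitrary line segments, whereas Lemma~\ref{lemma:quantile convergence} gives convergence only on the countable (hence Lebesgue-null) random set $\mathcal{U}^n$. Your suggested fix --- ``Laguerre cells shrink as $\mathcal{U}^n$ becomes dense'' --- conflates two different tessellations: the cells associated with the sources $U_i$ (which do densify) and the Laguerre/power cells associated with the targets $X_i$ (which are the regions on which $\nabla\varphi_{\nu_n}$ is constant, and whose sizes are governed by the configuration of $\{X_i\}$, not by the density of $\{U_i\}$). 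Making this shrinkage rigorous requires a separate argument, e.g.\ a Polya-type theorem for monotone maps (convergence on a dense set plus continuity of the limit forces locally uniform convergence), and is not an innocuous technicality. Second, the identification of the additive constant via approximate minimizers $v_n^*$ is delicate because $\varphi_{\nu_n}$ and $\varphi_\nu$ are normalized by an infimum over the \emph{open} ball $B^d$, which need not be attained, so the compactness-of-minimizers argument as stated does not directly apply. A cleaner and standard way to circumvent both gaps is the compactness-and-identification route: use the uniform Lipschitz bound and the normalization $\varphi_{\nu_n}\ge 0$ to extract a locally uniformly convergent subsequence $\varphi_{\nu_{n_k}}\to\tilde\varphi$; pass to the limit in the pushforward relation to get $\nabla\tilde\varphi\sharp\mu=\nu$; invoke Brenier--McCann uniqueness to conclude $\tilde\varphi=\varphi_\nu+c$; identify $c=0$ from the normalization preserved along the convergence; and conclude for the full sequence by the subsequence principle. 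That strategy avoids the integral formula altogether and hence the need for a.e.\ gradient convergence along segments.
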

\begin{proof}
    It is a straightforward adaptation of the proof of \cite[Theorem 3.1]{Chernozhukov2017} developed for OT maps, to OT potentials.
\end{proof}

\subsection{Entropy regularised optimal transport}
While the optimal transport map provides a meaningful definition of the multivariate quantile function, it comes along with practical challenges. Estimating the optimal transport for large samples is costly, with $O(n^3\log n)$ complexity for a sample size $n$. \cite{Cuturi2013} proposed an entropy regularised optimal transport (EOT), which closely approximates the optimal transport, and reduces the complexity to $O(n^2)$. 
In the following, we shall briefly review the basic concepts and properties of EOT.
\\
Given two measures $\mu$ and $\nu$, a weak version of the {\it Monge problem}, known as the {\it Kantorovich relaxation}, is given by
\begin{equation}
\label{prblm:Kantorovich relaxation}
    \argmin_{\pi\in \Pi(\mu,\nu)} \int_{\real^d \times \real^d} \half \|x-y\|^2 d\pi(x,y),
\end{equation}
where $\Pi(\mu,\nu)$ is the set of all couplings between $\mu$ and $\nu$. Since $\Pi(\mu,\nu)$ is compact in the weak topology and the cost function is lower semicontinuous, therefore the existence of a minimiser to the {\it Kantorovich problem} is guaranteed (see \cite[Theorem $4.1$]{villani2009oldnew}), whereas the minimiser does not always exist in the {\it Monge problem}. Additionally, the {\it Kantorovich problem} has a unique minimiser, denoted by $\pi^*$, whenever both $\mu$ and $\nu$ have finite second moment and $\mu$ is absolutely continuous with respect to the Lebesgue measure. Furthermore, the unique minimiser $\pi^*$ is supported on the graph of the OT map (Brenier map) between $\mu$ and $\nu$. In particular, $\pi^*=(\nabla \varphi, I)_{\sharp}\mu$, where $\nabla \varphi$ is the OT map and the convex function $\varphi$ is the OT potential. Alternatively, we can write
\begin{equation}
    \label{eqn:OT map through conditional expectation}
    \nabla \varphi(u)=E_{\pi^*}[X|U=u], \text{ for $\mu$ - almost every $u$, }
\end{equation}
where $(U,X) \sim \pi^*$. This establishes link between OT map ($\nabla \varphi$) and the solution $(\pi^*)$ of the {\it Kantorovich problem}. %

We will now discuss the entropy regularised version of the {\it Kantorovich problem}, which is defined by adding an extra penalty in the cost function $\int_{\real^d \times \real^d} \half \|x-y\|^2 d\pi(x,y)$. Specifically, for $\veps>0$ and probability measures $\mu$ and $\nu$ which have finite second moment, the entropy regularised {\it Kantorovich problem} is defined as
\begin{equation}
\label{EOT}
\argmin_{\pi \in \Pi(\mu,\nu)} \left(\int_{\real^d \times \real^d} \half\|u-x\|^2 d\pi(u,x) + \veps KL(\pi|P)\right),
\end{equation}
where $P=\mu \otimes \nu$ (the product measure) and $KL(\pi|P)$ is the the Kullback–Leibler divergence given by:
$$KL(\pi|P) = \left\{\begin{array}{cc}\E_{\pi}[\log\frac{d\pi}{dP}], & \text{if } \pi\ll P, \\ \infty, & \text{otherwise.} \end{array}\right.$$\\
This problem is known to have a unique minimiser $\pi^*_{\veps}\in \Pi(\mu,\nu)$, as proven in \cite[Theorem $4.2$]{Nutznote2022}.\\
Now, we recall the relation between OT map and the {\it Kantorovich solution} in \eqref{eqn:OT map through conditional expectation}, and analogously define the EOT map. In particular, the EOT map is defined as the conditional expectation with respect to $\pi^*_{\veps}$.
\begin{definition}{\bf (EOT map; \cite{pooladian2022entropic})}\label{dfn:EOT map}
    Let $\nu$ be any probability measure defined on $\real^d$ with finite second moment. Let $\mu$ be the uniform probability measure on $B^d$. Consider $(U,X) \sim \pi^*_{\veps}$, then the EOT map $T^{\veps}_{\nu}(u)$ is defined as:
    \begin{equation*}
        \label{eqn:EOT map}
            T^{\veps}_{\nu}(u)=E_{\pi_{\veps}^*}[X|U=u],\, \,\, \text{for }\mu-\text{almost every } u.
        \end{equation*}
\end{definition}
We shall now explore an alternative representation of the EOT map. Specifically, analogous to OT map, we observe that EOT map can also be represented as the gradient of some potential function (not convex in general). This is achieved in duality. The EOT problem \eqref{EOT} admits strong duality, shown in \cite[Theorem 7]{Genevay2019}, in the following way: let $C_{\veps}$ denote the cost of the regularised version of the {\it Kantorovich problem}, i.e. 
\begin{equation}
    C_{\veps}= \min_{\pi \in \Pi(\mu,\nu)} \left(\int \half\|u-x\|^2 d\pi(u,x) + \veps KL(\pi|P)\right),
\end{equation}
then, 
\begin{equation}
    \label{eqn:EOT dual}
    C_{\veps}=\max_{(\varphi^{\veps},\psi^{\veps})\in L^1(\mu) \times L^1(\nu)} \left(\int \varphi^{\veps} d\mu + \int \psi^{\veps} d\nu + \veps\int e^{\frac{\varphi^{\veps}(u) + \psi^{\veps}(x)-\half \|u-x\|^2}{\veps}}d(\mu\otimes \nu) -\veps\right).
\end{equation}
If $\mu$ and $\nu$ have finite second moment, the maximiser of \eqref{eqn:EOT dual} is unique, $\mu\otimes \nu$--almost surely, up to an additive constant. More precisely, if $(\varphi^{\veps}_1, \psi^{\veps}_1)$ is another pair of solutions to the dual Kantorovich problem, then there exists a constant $a$ such that $\varphi^{\veps}_1=\varphi^{\veps}+a$ and $\psi^{\veps}_1=\psi^{\veps}-a$.
Importantly, the EOT map $T^{\veps}_{\nu}$ can be expressed as the gradient of an EOT potential function (see Section 3 in \cite{pooladian2022entropic}), i.e.
\begin{equation}
    \label{eqn:EOT map as grad of potential}
    T^{\veps}_{\nu}(u)=u-\nabla \varphi^{\veps}(u)=\nabla \left(\frac{\|u\|^2}{2}-\varphi^{\veps}(u)\right).
\end{equation}
\begin{definition}{\bf(EOT potential)}\label{dfn:EOT potential}
    Let $(\varphi^{\veps},\psi^{\veps})$ be a maximiser of \eqref{eqn:EOT dual}, where $\mu$ is the uniform probability measure on the unit ball $B^d$ and $\nu$ is any probability measure defined on $\real^d$ with finite second moment. Then the function defined by $$\varphi^{\veps}_{\nu}(u)=\frac{\|u\|^2}{2}-\varphi^{\veps}(u)-\left[\varphi^{\veps}(u_0)-\frac{\|u_0\|^2}{2}\right],\,\, u \in B^d,$$ where $u_0 \in B^d$ is a fixed point, is called the EOT potential corresponding to $\nu$.
\end{definition}
\begin{rem}
    Whenever minimisers exist, we will set $\displaystyle u_0=\argmin_{u \in B^d} \left[\varphi^{\veps}(u)-\frac{\|u\|^2}{2}\right]$, which will ensure that the EOT potential is non-negative everywhere.
\end{rem}
Now, we move to the unique characterisation property of EOT maps and potentials. 
\begin{proposition}
    \label{thm:unique characterisation by EOT}
        Let $\mu$ be the uniform measure on the unit ball $B^d$ and $\nu_1$, $\nu_2$ be two probability measures with finite second moment. Let $\veps>0$ be fixed.
        \begin{itemize}
            \item[(i)] Let $\varphi_{\nu_1}^{\veps}$ and $\varphi_{\nu_2}^{\veps}$ be EOT potentials corresponding to $\nu_1$ and $\nu_2$, respectively. Then, $\nu_1=\nu_2$ if and only if $\varphi_{\nu_1}^{\veps}(u)=\varphi_{\nu_2}^{\veps}(u),$ for $\mu$-almost every $u$.
            \item[(ii)] Consider two EOT maps $T^{\veps}_{\nu_1}$ and $T^{\veps}_{\nu_2}$ corresponding to the measures $\nu_1$ and $\nu_2$. Then, $\nu_1=\nu_2$ if and only if $T^{\veps}_{\nu_1}(u)=T^{\veps}_{\nu_2}(u)$, for $\mu$-almost every $u$.
        \end{itemize}
\end{proposition}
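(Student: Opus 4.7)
The ``if'' direction in both (i) and (ii) is immediate: if $\nu_1=\nu_2$, the regularised primal--dual pair \eqref{EOT}--\eqref{eqn:EOT dual} coincides for the two problems and is unique up to the usual additive-constant ambiguity in $(\varphi^\veps,\psi^\veps)$, an ambiguity under which both the normalised EOT potential of Definition~\ref{dfn:EOT potential} and the conditional expectation of Definition~\ref{dfn:EOT map} are invariant. For (ii), I would reduce to (i): by \eqref{eqn:EOT map as grad of potential} one has $T^\veps_{\nu_i}=\nabla\varphi^\veps_{\nu_i}$, and since EOT potentials are $C^\infty$ on $B^d$ for $\veps>0$ (a straightforward Gaussian-smoothing argument on the Schr\"odinger equations, cf.\ \cite{pooladian2022entropic}), equality of the two maps $\mu$-a.e.\ forces equality of the gradients everywhere on the connected open ball, hence of the potentials themselves up to an additive constant that is pinned down by the normalization in Definition~\ref{dfn:EOT potential}.

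For (i), suppose $\varphi^\veps_{\nu_1}=\varphi^\veps_{\nu_2}$ $\mu$-a.e. Unwinding Definition~\ref{dfn:EOT potential}, the dual maximisers $\varphi^\veps_1,\varphi^\veps_2$ of \eqref{eqn:EOT dual} differ by a constant; absorbing this constant into the conjugate $\psi^\veps$'s, I may assume $\varphi^\veps_1=\varphi^\veps_2=:\varphi$. The first-order (Sinkhorn) optimality conditions for \eqref{eqn:EOT dual} read
\begin{equation*}
\int_{B^d}\! e^{(\varphi(u)+\psi^\veps_i(x)-\|u-x\|^2/2)/\veps}d\mu(u)=1\ (\nu_i\text{-a.e. }x),\quad
\int_{\real^d}\! e^{(\varphi(u)+\psi^\veps_i(x)-\|u-x\|^2/2)/\veps}d\nu_i(x)=1\ (\mu\text{-a.e. }u).
\end{equation*}
Crucially, the first identity determines $\psi^\veps_i$ on $\mathrm{supp}(\nu_i)$ by a formula depending only on $\varphi$ and $\mu$, not on $\nu_i$: setting $\tilde\psi(x):=-\veps\log\int_{B^d} e^{(\varphi(u)-\|u-x\|^2/2)/\veps}\,d\mu(u)$ for every $x\in\real^d$, I obtain $\psi^\veps_i=\tilde\psi$, $\nu_i$-a.e. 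Substituting into the second identity for $i=1,2$, subtracting and cancelling the common factor $e^{\varphi(u)/\veps}$, I arrive at
\begin{equation*}
\int_{\real^d} e^{\tilde\psi(x)/\veps}\,e^{-\|u-x\|^2/(2\veps)}\,d(\nu_1-\nu_2)(x)=0 \qquad\text{for }\mu\text{-a.e. }u\in B^d.
\end{equation*}

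Writing $f(x):=e^{\tilde\psi(x)/\veps}>0$ and $\sigma:=f\cdot(\nu_1-\nu_2)$, the displayed identity asserts that the Gauss--Weierstrass transform $u\mapsto\int e^{-\|u-x\|^2/(2\veps)}d\sigma(x)$ vanishes on the open ball $B^d$. The Gaussian kernel extends to an entire function of $u$, so this transform is real-analytic in $u\in\real^d$; vanishing on $B^d$ therefore forces vanishing on all of $\real^d$, and a Fourier transform together with $\widehat{e^{-\|\cdot\|^2/(2\veps)}}(\xi)=(2\pi\veps)^{d/2}e^{-\veps\|\xi\|^2/2}\neq 0$ yields $\widehat\sigma\equiv 0$, hence $\sigma=0$; since $f>0$, $\nu_1=\nu_2$. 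The one technical obstacle is that $\tilde\psi$ can grow like $\|x\|^2/2$ at infinity (the compactly supported $\mu$ makes the defining integral Gaussianly decaying), so under the stated second-moment assumption alone $f$ need not be $\nu_i$-integrable and $\sigma$ may fail to be a finite signed measure. I would handle this by truncating $\nu_i$ to $B(0,R)$, applying the Gauss-injectivity step to the finite signed measure $f\cdot(\nu_1|_{B(0,R)}-\nu_2|_{B(0,R)})$, and passing to $R\to\infty$ via dominated convergence with the Gaussian $e^{-\|u-x\|^2/(2\veps)}$ as majorant (it is globally $\nu_i$-integrable once multiplied by $f$, by the original Sinkhorn identity), or equivalently by testing against compactly supported approximate identities in the $u$ variable.
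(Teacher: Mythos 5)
Your argument is correct and follows the same opening strategy as the paper's: use the Schr\"odinger system to observe that the conjugate potential $\psi^{\veps}$ is determined by $\varphi^{\veps}$ and $\mu$ alone (and hence is the same for $\nu_1$ and $\nu_2$), then feed this back into the second Schr\"odinger equation. Where you diverge is at the final, decisive step. The paper stops after deducing $\psi^{\veps}_{\nu_1}=\psi^{\veps}_{\nu_2}$ and simply defers to Theorem~4.2 of \cite{Nutznote2022}, leaving the actual injectivity mechanism implicit. You instead carry the cancellation through and land on the explicit identity $\int e^{\tilde\psi(x)/\veps}e^{-\|u-x\|^2/(2\veps)}\,d(\nu_1-\nu_2)(x)=0$ for $\mu$-a.e.\ $u\in B^d$, then prove the needed injectivity from first principles: real-analyticity of the Gauss--Weierstrass transform propagates the vanishing from $B^d$ to all of $\real^d$, and Fourier inversion with the nowhere-vanishing Gaussian symbol forces the signed measure to vanish. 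This is more self-contained and more informative than the paper's citation, and you additionally flag (with a plausible truncation/dominated-convergence fix) an integrability subtlety that the paper does not address at all: under the stated second-moment hypothesis $e^{\tilde\psi/\veps}(\nu_1-\nu_2)$ need not be a finite signed measure, so the Fourier step cannot be applied naively. Your treatment of part~(ii) — reduce to part~(i) via $T^{\veps}_{\nu}=\nabla\varphi^{\veps}_{\nu}$, smoothness of entropic potentials for $\veps>0$, connectedness of $B^d$, and the normalization in Definition~\ref{dfn:EOT potential} — matches the paper's (which says "the proof is similar to (i)" after invoking \eqref{eqn:EOT map as grad of potential}). In short, your proof is a correct and sharper version of the paper's argument, trading the paper's black-box citation for a transparent Gaussian-injectivity computation.
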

\noindent{\bf Proof:} (i)  If $\nu_1=\nu_2$, the proof follows from the fact that the optimisation problem in \eqref{eqn:EOT dual} has a unique minimiser up to the addition of a constant.\\
    Now let us prove the reverse implication by contradiction. Suppose $\varphi^{\veps}_{\nu_1}(u) = \varphi^{\veps}_{\nu_2}(u)$ for $\mu$ almost every $u$ and for some $\veps>0$, but $\nu_1 \neq \nu_2$. Since any pair of EOT potentials $(\varphi^{\eps}_{\nu},\psi^{\eps}_{\nu})$ satisfies the Schr\"odinger system (see \cite[Section $1$]{Nutznote2022}), 
    \begin{equation*}
        \int e^{\frac{\varphi^{\veps}_{\nu}(u) + \psi^{\veps}_{\nu}(x)-\half \|u-x\|^2}{\veps}}d\mu(u) =1, \qquad\text{for } \nu-\text{ almost every }x,
    \end{equation*}

    \begin{equation*}
        \int e^{\frac{\varphi^{\veps}_{\nu}(u) + \psi^{\veps}_{\nu}(x)-\half \|u-x\|^2}{\veps}}d\nu(x) =1, \qquad\text{for } \mu-\text{ almost every }u,
    \end{equation*}
    
    we have
    \begin{equation}
        \begin{aligned}
        \psi^{\veps}_{\nu_1}(x)=-\veps \log \int e^{\frac{\varphi^{\veps}_{\nu_1}(u) -\half \|u-x\|^2}{\veps}}d\mu(u)
        =\psi^{\veps}_{\nu_2}(x)
        \end{aligned}
    \end{equation} 
where the second equality follows by replacing $\varphi^{\veps}_{\nu_1}(u)$ with $\varphi^{\veps}_{\nu_2}(u)$. Then, the proof follows from \cite[Theorem $4.2$]{Nutznote2022}.\\
(ii)  The proof is  similar to (i) after observing that the EOT map can be written in terms  of potential function, i.e. $T^{\veps}_{\nu}(u)=u-\nabla \varphi^{\veps}_{\nu}(u)$ ( see \eqref{eqn:EOT map as grad of potential}).\hfill\qed
\\

Let us now consider the empirical counterpart of EOT map and potential, which will be later used to develop Q-Q and potential plots.
\paragraph{Empirical  EOT map and its convergence} 
\label{sec:empirical EOT}

Let ${{\cal X}_n}=\{X_1, \cdots , X_n\}$  and ${\cal U}^n=\{U_1,U_2,\cdots,U_n \}$ be two sets of i.i.d. samples drawn from $\nu$ and $\mu$ (the uniform measure on $B^d$), respectively. Let $\nu_n$ and $\mu_n$ be empirical measures induced from $\calX^n$ and $\calU^n$, as defined in \eqref{eq:enp_measure}. 

\begin{definition}{\bf (Empirical EOT map and potential)}\label{dfn:emp EOT map and potential}
    The empirical EOT map and potential are obtained by replacing $\nu,\mu$ with their empirical counterparts $\nu_n,\mu_n$ in Definitions \ref{dfn:EOT map} and \ref{dfn:EOT potential}, respectively. We denote the empirical EOT map by $T^{\veps}_{{\cal X}_n}$ and the potential by $\varphi^{\veps}_{{\cal X}^n}$.
\end{definition}
\begin{rem}
    \label{rk:empirical EOT}
    \begin{enumerate}
        \item \label{rk:supp eot map} Notice that, although the measure $\mu_n$ is supported on a finite set $\mathcal{U}^n$, the empirical EOT map $T^{\varepsilon}_{{\mathcal{X}}^n}$ is by default defined on $B^d$. However, recall that the empirical OT quantile is defined only on sample points $\mathcal{U}^n$.
        \item If $\nu$ is supported on a compact set, then, for a fixed $\veps>0$, the empirical function $(T^{\veps}_{{\calX}^n},\varphi^{\veps}_{{\cal X}^n})$ converges almost surely to $(T^{\veps}_{\nu},\varphi^{\veps}_{\nu})$ as $n \to \infty$; see \cite[Lemma $1(iii)$]{goldfeld2023limit}.
        \item Note that in Definition~\ref{dfn:emp EOT map and potential}, the number of observations in $\calX^n$ and $\calU^n$ is the same. In the case of two different sample sizes, i.e. $\calX^n$ and $\calU^m$ with $n \neq m$, we can similary define emprical EOT map and potential by replacing $\mu_n$ with $\mu_m$ in the definition.
    \end{enumerate}
\end{rem}

 \section{Construction of multivariate Q-Q plots}
 \label{sec:mult Q-Q plot}

 Let ${\cal X}^n=\{X_1,\cdots,X_n\}$ and ${\cal Y}^n=\{Y_1, \cdots, Y_n\}$ be two sets of i.i.d. samples generated from $\nu_X$ and $\nu_Y$, respectively, and ${\cal U}^n$ generated from the uniform measure on $B^d$. Consider $T_{{\cal X}^n}$ and $T_{{\cal Y}^n}$ the empirical quantile functions with respect to the reference sample ${\cal U}^n$. For each $i\in \{1,\cdots,d\}$ and $K \subset B^d$, define the set
 \begin{equation}\label{eqn:construction of delta one sample}
 \Delta_{i}({\cal X}^n,{\cal Y}^n:{\cal U}^n \cap K)=\left\{\left( \left\langle T_{{\cal X}^n}(U_j),e_i\right\rangle, \langle T_{{\cal Y}^n}(U_j),e_i\rangle\right): U_j \in K, j=1, \cdots,n\right\},
\end{equation} 
 where $e_i$ is the $i-$th canonical basis vector. Therefore $\Delta_{i}({\cal X}^n,{\cal Y}^n:{\cal U}^n \cap K)$ is a subset of $\real^2$, where each element is a pair of the $i-$th components of the OT quantile functions $T_{{\cal X}^n}$ and $T_{{\cal Y}^n}$. 
 Now consider the empirical optimal potentials $\varphi_{{\cal X}_n}$ and $\varphi_{{\cal Y}_n}$. Then, for $K\subset B^d$, define
\begin{equation}\label{eqn:construction of Gamma}
    \Gamma({\cal X}^n,{\cal Y}^n:{\cal U}^n \cap K)=\left\{ (\varphi_{{\cal X}_n}(U_j),\varphi_{{\cal Y}_n}(U_j)): U_j \in K, \,\,j=1,\cdots,n \right\}.
\end{equation}
In Theorem \ref{thm: distinguish two sample OT}, we show that, as $n$ increases, $\Delta_{i}({\cal X}^n,{\cal Y}^n:{\cal U}^n \cap K)$ and $\Gamma({\cal X}^n,{\cal Y}^n:{\cal U}^n \cap K)$ become arbitrarily close to the straight line with slope $1$ and intercept $0$  if and only if $\nu_X(\cdot)=\nu_Y(\cdot)$. With this property, we can define the OT Q-Q plots and OT potential plots as follows: 
 \begin{definition}
    \label{dfn:OT plots}
     Let $K\subset B^d$ be a compact set. An OT Q-Q plot is a collection of $d$ individual scatter plots, where the $i$--th one displays the set $\Delta_{i}({\cal X}^n,{\cal Y}^n:{\cal U}^n \cap K)$ defined in \eqref{eqn:construction of delta one sample}. Similarly, a scatter plot of the set $\Gamma({\cal X}^n,{\cal Y}^n:{\cal U}^n \cap K)$ defined in \eqref{eqn:construction of Gamma}, is called OT potential plot.
 \end{definition}
  Let $L_{\eta}$ be the $\eta$-neighbourhood of the diagonal line $L=\{(u,u):u \in \real\}$, i.e.
  \begin{equation}
      L_{\eta}=\{(x,y) \in \real^2: (x-u)^2+(y-u)^2<\eta^2 , \text{ for some }u \in \real\}.
  \end{equation}
\begin{theorem}
\label{thm: distinguish two sample OT}
Let $\nu_X,\nu_Y$ be probability measures, defined in $\real^d$, with convex support, i.e. $\nu_X,\nu_Y\in \calP^{conv}_d$. Let ${\cal X}^n=\{X_1,\cdots,X_n\}$ and ${\cal Y}^n=\{Y_1, \cdots, Y_n\}$ be two i.i.d. samples drawn from $\nu_X$ and $\nu_Y$, respectively. Let ${\cal U}^n=\{U_1, \cdots,U_n\}$, where $U_i$'s are i.i.d. with common distribution $\mu$.
    Then, $\nu_X=\nu_Y$ if and only if one of the following holds
    \begin{enumerate}[(i)]
        \item 
        $\displaystyle
    \label{probability:component-i-Q-Q}
        \pr\left(\liminf_{n \to \infty}\bigcap_{i=1}^d\left\{\Delta_{i}({\cal X}^n,{\cal Y}^n:{\cal U}^n \cap K)\subseteq L_{\eta}\right\}\right)=1
    $, for all compact $K \subset B^d$ and all $\eta>0$.
    \item $\displaystyle
    \lim_{n \to \infty}\pr^*\left(\left\{\Gamma(\mathcal{X}^n,\mathcal{Y}^{n}:{\cal U}^n \cap K)\subseteq L_{\eta}\right\}\right)=1$, for all compact $K \subset B^d$ and all $\eta>0$, where $\pr^*$ is the outer probability, in the sense of \cite{Vaart1996}.
    \label{probability:potential closeness}
    \end{enumerate}
\end{theorem}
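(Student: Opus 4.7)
Both parts of the theorem take the form ``diagonal concentration $\iff$ equality of distributions'', and I will prove them in parallel via a two-step scheme. The first step is to reduce the geometric condition ``inclusion in $L_\eta$'' to a scalar closeness statement: by minimising the defining quadratic at $u=(a+b)/2$, a point $(a,b)\in\mathbb{R}^2$ belongs to $L_\eta$ iff $|a-b|<\sqrt{2}\,\eta$. The second step is to swap the empirical OT quantiles and potentials for their population counterparts using Lemma \ref{lemma:quantile convergence} (and its analogue for OT potentials), after which the remaining equivalence becomes exactly the unique characterisation of the distribution, i.e.\ Properties \ref{Prop:unique characterisation of the measure}.

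\textbf{Forward direction.} Assume $\nu_X=\nu_Y$. Then $T_{\nu_X}=T_{\nu_Y}$ and $\varphi_{\nu_X}=\varphi_{\nu_Y}$ on $B^d$ by Properties \ref{Prop:unique characterisation of the measure}. A triangle inequality gives
\[
\max_{U_j\in\mathcal{U}^n\cap K}\|T_{\mathcal{X}^n}(U_j)-T_{\mathcal{Y}^n}(U_j)\|\le\max_{U_j\in K}\|T_{\mathcal{X}^n}(U_j)-T_{\nu_X}(U_j)\|+\max_{U_j\in K}\|T_{\mathcal{Y}^n}(U_j)-T_{\nu_Y}(U_j)\|,
\]
and each summand tends to zero almost surely by Lemma \ref{lemma:quantile convergence}. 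Projecting onto $e_i$ and using the $L_\eta$ characterisation yields $\Delta_i(\mathcal{X}^n,\mathcal{Y}^n:\mathcal{U}^n\cap K)\subseteq L_\eta$ for all $n$ large, almost surely, which is (i). For (ii) the argument is identical after replacing the quantile lemma by the potential lemma; the resulting convergence is in outer probability, matching the weaker $\pr^*$ formulation.

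\textbf{Reverse direction.} Suppose by contradiction that the concentration holds but $\nu_X\neq\nu_Y$. For (i), Properties \ref{Prop:unique characterisation of the measure}(ii) produces a coordinate $i$ and a set of positive $\mu$-measure where $\langle T_{\nu_X}-T_{\nu_Y},e_i\rangle\neq 0$; continuity of the population Brenier maps on $B^d$ lets me shrink it to a compact $K\subset B^d$ of positive $\mu$-measure with $|\langle T_{\nu_X}(u)-T_{\nu_Y}(u),e_i\rangle|>3\delta$ for some $\delta>0$. Since the $U_j$ are i.i.d.\ $\mu$ and $\mu(K)>0$, infinitely many $U_j$ fall in $K$ almost surely; on those, Lemma \ref{lemma:quantile convergence} gives that for $n$ large both $\|T_{\mathcal{X}^n}(U_j)-T_{\nu_X}(U_j)\|$ and $\|T_{\mathcal{Y}^n}(U_j)-T_{\nu_Y}(U_j)\|$ are below $\delta/2$. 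The triangle inequality then forces the empirical $i$-th coordinate difference to exceed $2\delta$, contradicting $\Delta_i\subseteq L_\delta$ since $\sqrt{2}\,\delta<2\delta$. Part (ii) is analogous, using Properties \ref{Prop:unique characterisation of the measure}(i) and the potential convergence lemma, and passing to an almost-surely convergent subsequence to translate the outer-probability hypothesis into the regime where the above contradiction runs.

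\textbf{Main obstacle.} The delicate ingredient is the reverse direction, where a ``set-level'' inclusion statement on the random finite sets $\mathcal{U}^n\cap K$ has to be turned into a pointwise equality of continuous maps on the continuum so that the uniqueness Properties can be invoked. This uses continuity of the population OT quantile/potential on $B^d$ to produce a compact set with a uniform gap, plus a Borel--Cantelli-type fact that i.i.d.\ $\mu$-samples visit any positive-$\mu$-measure set infinitely often. For part (ii) the outer-probability formulation adds a layer of technicality, requiring subsequence extraction before the pointwise contradiction can be executed.
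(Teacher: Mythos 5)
Your proof is correct and follows essentially the same route as the paper: the forward direction via the $|a-b|<\sqrt{2}\,\eta$ characterisation of $L_\eta$, a triangle inequality, and the uniform convergence of empirical quantiles/potentials (Lemma \ref{lemma:quantile convergence} and its potential analogue), and the reverse direction via Properties \ref{Prop:unique characterisation of the measure} together with the fact that the i.i.d.\ reference sample almost surely visits every positive-$\mu$-measure compact subset of $B^d$. The only cosmetic difference is that you run the reverse direction by contradiction (the contrapositive of the paper's direct argument that $\{U_j\}$ is a.s.\ dense so $T_{\nu_X}=T_{\nu_Y}$ pointwise), and for part (ii) you sketch a subsequence extraction where the paper avoids it by passing immediately to the deterministic population potentials and using the set $A_\eta=\{|\varphi_{\nu_X}-\varphi_{\nu_Y}|>2\eta\}$; both routes rest on the same ingredients.
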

\begin{proof}
    (i) First assume $\nu_X=\nu_Y$. Then the corresponding quantile functions are also equal, $T_{\nu_X}=T_{\nu_Y}$. By the construction of the set $\Delta_{i}({\cal X}^n,{\cal Y}^n:{\cal U}^n \cap K)$ (see \eqref{eqn:construction of delta one sample}), it follows that, for any fixed $\veps>0$,
    \begin{equation}\label{eqn:delta and quantile function inequality}
        \left\{\Delta_{i}({\cal X}^n,{\cal Y}^n:{\cal U}^n \cap K) \subseteq L_{\eta}, \forall i=1,\cdots,d\right\} \supseteq \left\{ \max_{\substack{1 \leq j \leq n \\ {U_j \in K}}} \|T_{{\cal X}_n}(U_j)-T_{{\cal Y}_n}(U_j)\|^2 \leq 2 \eta^2 \right\},
    \end{equation}
    where $T_{{\cal X}_n}$ and $T_{{\cal Y}_n}$ are discrete to discrete optimal transport maps from  ${{\cal U}_n}$ to ${{\cal X}_n}$ and ${{\cal U}_n}$ to ${{\cal Y}_n}$, respectively. To prove the first part of the theorem, it is enough to show that the probability of the event in the right side of \eqref{eqn:delta and quantile function inequality} converges to $1$ as $n$ increases to infinity. Now the sample quantile functions $T_{{\cal X}_n}$, $T_{{\cal Y}_n}$ converge uniformly to $T_{\nu_X}$ and $T_{\nu_Y}$, respectively (by Lemma \ref{lemma:quantile convergence}), and $\displaystyle\lim_{n\to\infty} T_{{\cal X}_n}=\lim_{n\to \infty} T_{{\cal Y}_n}$ by assumption, hence the result.

Conversely, assume that \eqref{probability:component-i-Q-Q} holds for all  $\eta>0$ and for any compact subset $K \subset B^d$. Therefore, almost surely, $\|T_{{\cal X}_n}(U_j)-T_{{\cal Y}_n}(U_j)\|$ converges to $0$ as $n$ goes to infinity. Now, the convergence of empirical OT quantile to its population counterpart (see Lemma \ref{lemma:quantile convergence}) implies that $T_{\nu_X}(U_j)=T_{\nu_Y}(U_j), \forall j \geq 1$. Since the random set $\{U_n: n\geq 1\}$ is almost surely dense in the unit ball $B^d$, and $T_{\nu_X}, T_{\nu_Y}$ are continuous, thus $T_{\nu_X}(u)=T_{\nu_Y}(u),\,\,\, \forall u \in B^d$. Now from Proposition \ref{Prop:unique characterisation of the measure}, we conclude that $\nu_X=\nu_Y$. 
\end{proof}\\\\
(ii) If $\nu_X=\nu_Y$, the proof is similar to the first part of \eqref{probability:component-i-Q-Q}. \\
Conversely, assume that \eqref{probability:potential closeness} holds for all compact subsets $K$ and $\eta>0$. Then we have, 
\begin{equation}
    \label{prob:potential distance}
    \pr^*\left(\max_{\substack{\\ 1 \leq j \leq n\\ \\ U_j \in K}}|\varphi_{\nu_X}(U_j)-\varphi_{\nu_Y}(U_j)|>\eta\right)\underset{n \to \infty}{\longrightarrow} 0.
\end{equation}
Let $A_{\eta}=\{x:|\varphi_{\nu_X}(x)-\varphi_{\nu_Y}(x)|> 2 \eta\}$. If $\mu(A_{\eta})=0,\,\, \forall \eta>0$, then $\varphi_{\nu_X}=\varphi_{\nu_Y}$.

If there exists some $\eta>0$ such that $\mu(A_{\eta})>0$, choose a compact subset $K$ such that $\mu(K \cap A_{\eta})>0$. Now,
\begin{equation}
    \begin{aligned}
    \pr^*\left(\max_{\substack{\\ 1 \leq j \leq n\\ \\ U_j \in K}}|\varphi_{\nu_X}(U_j)-\varphi_{\nu_Y}(U_j)|>\eta\right)
     &> \pr^*\left(\max_{\substack{\\ 1 \leq j \leq n\\ \\ U_j \in K \cap A_{\eta}}}|\varphi_{\nu_X}(U_j)-\varphi_{\nu_Y}(U_j)|>\eta\right)\\
    &=\pr^* \left( U_j \in K \cap A_{\eta}\,\, \text{ for at least one } j \right).\\
    \end{aligned}
\end{equation}
The last identity follows by the definition of the set $A_{\eta}$. Since $\mu(A_{\eta})>0$, it follows that $\pr^* \left( U_j \in K \cap A_{\eta}\,\, \text{ for at least one } j \right) \underset{n \to \infty}{\longrightarrow} 1$.
This contradicts \eqref{prob:potential distance}, hence $\varphi_{\nu_X}=\varphi_{\nu_Y}$. Now the proof follows from Proposition \ref{Prop:unique characterisation of the measure} $(i)$ .
\hfill \qed
\\\\
Let $\veps>0$ be fixed and $K \subset \real^d$ be such that ${\cal X}_n \cap K$ and ${\cal Y}_n \cap K$ are non empty. Recall that $\calU^n$ is a reference sample drawn from the uniform distribution $\mu$ on $B^d$. Let $T^{\veps}_{{\cal X}_n \cap K}$ and $T^{\veps}_{{\cal Y}_n \cap K}$ be two empirical EOT maps (for the subsamples ${\cal X}_n \cap K$ and ${\cal Y}_n \cap K$, respectively), and $\varphi^{\veps}_{{\cal X}^n \cap K}$ and $\varphi^{\veps}_{{\cal Y}^n \cap K}$ be two EOT potentials, as defined in Section \ref{sec:empirical EOT}. Then, for $i=1,\cdots,n$, define the sets,
\begin{equation}
    \Delta^{\veps}_i({\cal X}^n \cap K,{\cal Y}^n \cap K:{\cal U}^n)=\left\{\left( \left\langle T^{\veps}_{{\cal X}_n \cap K}(U_j),e_i\right\rangle, \langle T^{\veps}_{{\cal Y}_n \cap K}(U_j),e_i\rangle\right): j=1, \cdots,n \right\}
\end{equation}
and
\begin{equation}
    \Gamma^{\veps}({\cal X}^n \cap K,{\cal Y}^n \cap K:{\cal U}^n)=\left\{ (\varphi^{\veps}_{{\cal X}_n \cap K}(U_j),\varphi^{\veps}_{{\cal Y}_n \cap K}(U_j)): \,\,j=1,\cdots,n \right\}.
\end{equation}

\begin{definition}
    \label{dfn:EOT plots}
     Let $K\subset \real^d$ be a compact set. An EOT Q-Q plot is a collection of $d$ individual scatter plots, where the $i$--th one displays the set $\Delta^{\veps}_i({\cal X}^n \cap K,{\cal Y}^n \cap K:{\cal U}^n)$. Similarly, a scatter plot of the set $\Gamma^{\veps}({\cal X}^n \cap K,{\cal Y}^n \cap K:{\cal U}^n)$ is called EOT potential plot.
 \end{definition}

The next theorem proves similar results as in Theorem \ref{thm: distinguish two sample OT} but for EOT.
\begin{theorem}
    \label{thm: distinguish two sample EOT}
    Let $\nu_X$ and $\nu_Y$ be two probability measures with finite second moment. Let ${\cal X}^n=\{X_1,\cdots,X_n\}$ and ${\cal Y}_n=\{Y_1, \cdots, Y_n\}$ be two i.i.d. samples drawn from $\nu_X$ and $\nu_Y$, respectively. Let ${\cal U}^n=\{U_1, \cdots,U_n\}$, where $U_i$'s are i.i.d. with common distribution $\mu$. Then $\nu_X=\nu_Y$ if and only if one of the following holds.
    \begin{enumerate}[(i)]
        \item  
    $\displaystyle
    \label{probability:EOT potential closeness}
        \pr\left(\liminf_{n \to \infty}\left\{\Gamma^{\veps}(\mathcal{X}^n \cap K,\mathcal{Y}^{n} \cap K :{\cal U}^n)\subseteq L_{\eta}\right\}\right)=1
    $, for all compact $K \subset B^d$ and all $\eta>0$.
    \item $\displaystyle
    \label{probabilityEOT map closeness}
        \pr\left(\liminf_{n \to \infty}\bigcap_{i=1}^d\left\{\Delta^{\veps}_i(\mathcal{X}^n \cap K,\mathcal{Y}^{n} \cap K:{\cal U}^n)\subseteq L_{\eta}\right\}\right)=1
    $, for all compact $K \subset B^d$ and all $\eta>0$.
    
    \end{enumerate}
\end{theorem}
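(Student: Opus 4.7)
The strategy mirrors the proof of Theorem~\ref{thm: distinguish two sample OT}, substituting the uniqueness result for OT (Proposition~\ref{Prop:unique characterisation of the measure}) by its entropic analogue (Proposition~\ref{thm:unique characterisation by EOT}) and Lemma~\ref{lemma:quantile convergence} by the almost sure uniform convergence of empirical EOT potentials and maps stated in Remark~\ref{rk:empirical EOT}(2).

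For the direct implication in (i) (and analogously (ii)), assume $\nu_X=\nu_Y$. Then $\nu_X|_K=\nu_Y|_K$ for every $K$, and Proposition~\ref{thm:unique characterisation by EOT} ensures that the population EOT potentials and maps associated with these restricted measures coincide $\mu$-a.e. Both empirical potentials $\varphi_{\calX^n\cap K}^{\veps}$ and $\varphi_{\calY^n\cap K}^{\veps}$ therefore share a common population limit, to which each converges uniformly on compact subsets of $B^d$ by Remark~\ref{rk:empirical EOT}(2). A triangle inequality then yields
\[
\max_{1\le j\le n}\bigl|\varphi_{\calX^n\cap K}^{\veps}(U_j)-\varphi_{\calY^n\cap K}^{\veps}(U_j)\bigr|\ \stackrel{a.s.}{\longrightarrow}\ 0.
\]
Since $(a,b)\in L_\eta$ whenever $|a-b|<\sqrt{2}\,\eta$, the event $\{\Gamma^{\veps}(\calX^n\cap K,\calY^n\cap K:\calU^n)\subseteq L_\eta\}$ holds for all $n$ large enough, proving the direct part of (i). The same reasoning applied componentwise to the EOT maps, using the gradient representation~\eqref{eqn:EOT map as grad of potential}, gives the direct part of (ii).

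For the converses, I argue by contraposition. If $\nu_X\neq\nu_Y$, one can select a sufficiently large compact $K\subset\real^d$ so that the restrictions still satisfy $\nu_X|_K\neq\nu_Y|_K$. By Proposition~\ref{thm:unique characterisation by EOT}(i), the set $A_\eta:=\{u\in B^d:|\varphi^{\veps}_{\nu_X|_K}(u)-\varphi^{\veps}_{\nu_Y|_K}(u)|>3\eta\}$ has positive $\mu$-measure for some $\eta>0$. Since the $U_j$ are i.i.d.\ $\mu$, infinitely many fall in $A_\eta$ almost surely, and the uniform convergence of the empirical potentials forces $|\varphi^{\veps}_{\calX^n\cap K}(U_j)-\varphi^{\veps}_{\calY^n\cap K}(U_j)|>\eta$ for such $U_j$ and all sufficiently large $n$, contradicting (i). The converse of (ii) follows identically using Proposition~\ref{thm:unique characterisation by EOT}(ii) together with the fact that distinct EOT maps differ on a set of positive $\mu$-measure in at least one coordinate.

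The principal obstacle is handling the $K$-restriction: the EOT in $\Gamma^{\veps}$ is computed between the full reference sample $\calU^n$ and the subsample $\calX^n\cap K$, whose (random) size scales like $n\,\nu_X(K)\neq n$. Consequently, Remark~\ref{rk:empirical EOT}(2) must be invoked in the unequal-sample-size form noted in Remark~\ref{rk:empirical EOT}(3), and the population limit of the empirical quantities must be identified as the EOT potential/map from $\mu$ to $\nu_X|_K$. Once this identification is in place, the rest of the proof is a verbatim entropic transcription of the argument in Theorem~\ref{thm: distinguish two sample OT}.
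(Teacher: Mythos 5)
Your proposal is correct and follows essentially the same route as the paper: restrict the measures to $K$, invoke the almost-sure convergence of empirical EOT potentials/maps from \cite{goldfeld2023limit} (Lemma~1(iii), i.e.\ Remark~\ref{rk:empirical EOT}(2)), use Proposition~\ref{thm:unique characterisation by EOT} for the unique-characterisation step, and finish by the same limiting argument as in Theorem~\ref{thm: distinguish two sample OT}; the paper leaves most of the converse direction implicit (``follows through a similar argument'') whereas you spell out the contraposition, and your handling of the unequal-sample-size issue via Remark~\ref{rk:empirical EOT}(3) matches the paper's intent. The only nitpick is a constant mismatch in the converse: a point $(a,b)$ lies outside $L_\eta$ when $|a-b|\ge\sqrt{2}\,\eta$, not when $|a-b|>\eta$, so the threshold you need to carry through is $\sqrt{2}\,\eta$ rather than $\eta$ (your $3\eta$ margin absorbs this, but the stated inequality should read $>\sqrt{2}\,\eta$).
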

\begin{proof}
    $(i) $ First assume that $\nu_X=\nu_Y$. Let $\nu_X|_K$ be a probability measure obtained by restricting $\nu_X$ on the compact set $K$. Similarly, $\nu_Y|_K$ be the restriction $\nu_Y$ on $K$. Then the potentials for the restricted measures are equal, i.e. $\varphi^{\eps}_{\nu_X|_K}=\varphi^{\eps}_{\nu_Y|_K}$. 
    Since both (random) empirical measures $\nu_{{\cal X}_n \cap K}$ and $\nu_{{\cal Y}_n \cap K}$ converge (weakly) to the limit $\nu_X|_K ( =\nu_Y|_K)$
     almost surely, therefore Lemma $1(iii)$ of \cite{goldfeld2023limit} implies that $\|\varphi^{\eps}_{{{\cal X}_n} \cap K}-\varphi^{\eps}_{{{\cal Y}_n}\cap K}\|_{{\cal C}^s}$ converges to $0$ almost surely, for any integer $s \geq 1$. Here, $\|\cdot\|_{\calC^s}$
      defined as the usual H\"older norm with exponent $s$. The proof now follows through a similar argument as employed in Theorem \ref{thm: distinguish two sample OT}. \\\\
    $(ii)$  The proof is similar as for the first part, after observing that the map $\varphi \to \nabla \varphi$ is continuous and linear from ${\cal C}^{s}$ to ${\cal C}^{s-1}$.
    \hfill\qed
\end{proof}

\vspace{1cm}
\section{Test Statistics for comparing two  distributions} \label{sec: Test statistics}

As in \cite{Shapiro1965, Dhar2014}, we propose two test statistics which are motivated from our Q-Q plots. Consider the  i.i.d random variables $\calX^n= \{X_1, \cdots ,X_n\}$ and $\calY^n= \{Y_1, \cdots ,Y_n\}$ with common distributions ${\nu_X}$ and ${\nu_Y}$, respectively, and assume the corresponding EOT maps $T^{\veps}_{\calX^n},T^{\veps}_{\calY^n}$ as defined in Section \ref{sec:empirical EOT}. We can measure the total deviation of the EOT Q-Q plots from the straight line $L$ by the quantity $\displaystyle \int_{B^d} \|T^{\veps}_{\calX^n}(u) - T^{\veps}_{\calY^n}(u)\|^2 d\mu(u)$. Similarly, the total deviation in the EOT potential plot can be measured by the quantity $\displaystyle\int_{B^d} \|\varphi^{\veps}_{\calX^n}(u) - \varphi^{\veps}_{\calY^n}(u)\|^2 d\mu(u)$.
 Therefore, we define the following two test statistics
 \begin{equation}
    \label{dfn:teststat}
    E_n=n\int_{B^d} \|T^{\veps}_{\calX^n}(u) - T^{\veps}_{\calY^n}(u)\|^2 d\mu(u) \quad\text{and}\quad 
    F_n=n\int_{B^d} |\varphi^{\veps}_{\calX^n}(u) - \varphi^{\veps}_{\calY^n}(u)|^2 d\mu(u).
 \end{equation}
Since the supports of $\nu_X$ and $\nu_Y$ are compact, it follows from  \cite[Lemma $1$]{goldfeld2023limit} that for any $s \in \mathbb{N}$,  $\|\varphi^{\veps}_{\calX^n}\|_{\calC^s(B^d)} \lor \|\varphi^{\veps}_{\calY^n}\|_{\calC^s(B^d)} \leq R_s,\,\, \forall n \geq 1$, where $R_s$ is a constant depending on $\veps$ and the support of the measures $\nu_X$, $\nu_Y$. In other words, the potentials and their derivatives are uniformly bounded on $B^d$. This implies that $E_n$ and $F_n$ are almost surely finite for all $n\geq1$.

\begin{proposition}\label{prop:cv-test-stat} Let $\nu_X$ and $\nu_Y$ be two probability measures supported on compact subsets of $\real^d$. Let $\nu_X=\nu_Y$, and $\calX^n, \calY^n$ be two independent sets of samples drawn from $\nu_X$ and $\nu_Y$, respectively. Consider $F_n$ and $E_n$ as in \eqref{dfn:teststat}, then we have $$F_n \underset{n\to\infty}{\overset{d}{\longrightarrow}} F \qquad \text{and} \qquad E_n \underset{n\to\infty}{\overset{d}{\longrightarrow}} E,$$
where $E$ and $F$ are non-negative random variables with 
 finite variance.
\end{proposition}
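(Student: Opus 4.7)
\textbf{Proof plan for Proposition \ref{prop:cv-test-stat}.} Under the null $\nu_X=\nu_Y=:\nu$, both empirical potentials $\varphi^{\veps}_{\calX^n}$ and $\varphi^{\veps}_{\calY^n}$ target the same population object $\varphi^{\veps}_{\nu}$, and similarly for the maps. The natural idea is therefore to perform the centering
\begin{equation*}
\varphi^{\veps}_{\calX^n}-\varphi^{\veps}_{\calY^n}=\bigl(\varphi^{\veps}_{\calX^n}-\varphi^{\veps}_{\nu}\bigr)-\bigl(\varphi^{\veps}_{\calY^n}-\varphi^{\veps}_{\nu}\bigr),
\end{equation*}
and likewise $T^{\veps}_{\calX^n}-T^{\veps}_{\calY^n}=(T^{\veps}_{\calX^n}-T^{\veps}_{\nu})-(T^{\veps}_{\calY^n}-T^{\veps}_{\nu})$, so that $F_n$ and $E_n$ become the squared $L^2(\mu)$-norms of scaled empirical fluctuations around a common limit.

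The plan is to invoke the functional CLT for empirical EOT potentials and maps from \cite{goldfeld2023limit}: under compact support, $\sqrt{n}(\varphi^{\veps}_{\calX^n}-\varphi^{\veps}_{\nu})$ and $\sqrt{n}(T^{\veps}_{\calX^n}-T^{\veps}_{\nu})$ converge weakly, in a H\"older space $\calC^s(B^d)$ for every $s\geq 1$, to centered tight Gaussian processes, say $G^{\varphi}_X$ and $G^{T}_X$. Since $\calX^n$ and $\calY^n$ are independent and identically distributed, the same statement holds for the $Y$-sample with independent limits $G^{\varphi}_Y,G^{T}_Y$ (equal in law to $G^{\varphi}_X,G^{T}_X$), and the joint convergence in $\calC^s(B^d)\times \calC^s(B^d)$ follows. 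Subtracting and applying the continuous mapping theorem to the continuous functional $\phi\mapsto \int_{B^d}|\phi(u)|^2\,d\mu(u)$ (respectively $\Psi\mapsto \int_{B^d}\|\Psi(u)\|^2\,d\mu(u)$) from $\calC(B^d)\hookrightarrow L^2(\mu)$ to $\real_{\geq 0}$ yields
\begin{equation*}
F_n\;\underset{n\to\infty}{\overset{d}{\longrightarrow}}\;F:=\int_{B^d}\bigl|G^{\varphi}_X(u)-G^{\varphi}_Y(u)\bigr|^2 d\mu(u),\qquad E_n\;\underset{n\to\infty}{\overset{d}{\longrightarrow}}\;E:=\int_{B^d}\bigl\|G^{T}_X(u)-G^{T}_Y(u)\bigr\|^2 d\mu(u),
\end{equation*}
both of which are non-negative.

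For the finite variance claim, I would use the uniform H\"older bound $\|\varphi^{\veps}_{\calX^n}\|_{\calC^s(B^d)}\vee \|\varphi^{\veps}_{\calY^n}\|_{\calC^s(B^d)}\leq R_s$ recalled right before the statement (and the analogue for $T^{\veps}$, since $T^{\veps}=\mathrm{Id}-\nabla\varphi^{\veps}$). This bound, carried through the weak limit, shows that the limit Gaussian processes are a.s.\ uniformly bounded on $B^d$ by the same constant $R_s$, so that the squared integrands defining $F$ and $E$ are a.s.\ bounded by a deterministic constant depending only on $\veps$ and the supports. Hence $F$ and $E$ have finite moments of all orders, in particular finite variance, concluding the proof.

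The step I expect to be the main obstacle is the passage from the H\"older-space functional CLT of \cite{goldfeld2023limit}, which is stated sample-by-sample, to the joint two-sample limit together with a clean application of the continuous mapping theorem in the weakest space in which $F_n,E_n$ are continuous functionals; the independence of the two samples makes the joint convergence automatic, but care is needed in choosing the H\"older index $s$ so that $\phi\mapsto\|\phi\|_{L^2(\mu)}^2$ is continuous and so that the map case $E_n$, which involves $\nabla\varphi^{\veps}$, still fits within the convergence framework (this is why $s\geq 1$ is invoked).
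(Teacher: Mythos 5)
Your argument for convergence in distribution follows essentially the paper's route: center each empirical potential (respectively map) around its population counterpart, use independence of the two samples to upgrade the one-sample CLT of \cite{goldfeld2023limit} to a joint CLT in $\calC^s(B^d)\times\calC^s(B^d)$, subtract, and apply the continuous mapping theorem to $\phi\mapsto\int_{B^d}|\phi|^2\,d\mu$. The paper handles $E_n$ via the functional delta method from the same reference; your use of $T^{\veps}=\mathrm{Id}-\nabla\varphi^{\veps}$ is the same mechanism in different clothing. Up to that point you are in agreement with the paper.

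The finite-variance step is where your argument breaks. You propose to carry the uniform H\"older bound $\|\varphi^{\veps}_{\calX^n}\|_{\calC^s(B^d)}\le R_s$ through the weak limit and conclude that the limiting Gaussian processes are a.s.\ bounded by the deterministic constant $R_s$. This cannot work, for two independent reasons. First, the bound controls the \emph{un-rescaled} potentials, whereas the CLT concerns $\sqrt{n}(\varphi^{\veps}_{\calX^n}-\varphi^{\veps}_{\nu})$; the only bound you inherit on the rescaled object is $\le 2\sqrt{n}R_s$, which blows up with $n$ and transfers nothing to the limit. Second, even setting the rescaling aside, a centered Gaussian random element that is almost surely bounded in sup-norm by a deterministic constant would, at each fixed point $u$, be a mean-zero Gaussian random variable supported in $[-R_s,R_s]$, hence degenerate at every point — contradicting the non-degeneracy that the subsequent Theorem \ref{thm:p-values} assumes and that makes the test useful. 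The correct tool here, and the one the paper uses, is Fernique's theorem: a tight centered Gaussian random element $Z$ of a separable Banach space satisfies $\E\,e^{\alpha\|Z\|^2}<\infty$ for some $\alpha>0$, which yields finite moments of all orders of $\|Z\|$ and in particular $\operatorname{var}\bigl(\|Z\|_{L^2(\mu)}^2\bigr)<\infty$. You should replace the deterministic-bound argument by an appeal to Fernique's theorem (or any equivalent Gaussian concentration/integrability result) applied to the limit process $Z_X-Z_Y$ in $\calC^s(B^d)$ or $L^2(\mu)$.
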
 
\begin{proof}
    The samples $\calX^n$ and $\calY^n$ are independent of each other, hence $\varphi^{\veps}_{\calX^n}-\varphi^{\veps}_{\nu_X}$ and $\varphi^{\veps}_{\nu_Y}-\varphi^{\veps}_{\calY^n}$ are also independent. Therefore, it follows from \cite[Theorem $1$]{goldfeld2023limit} that, 
\begin{equation}
    \sqrt{n}\left(\varphi^{\veps}_{\calX^n} - \varphi^{\veps}_{\nu_X}, \varphi^{\veps}_{\calY^n}-\varphi^{\veps}_{\nu_Y} \right) \underset{n\to\infty}{\overset{d}{\longrightarrow}}(Z_X, Z_Y),
\end{equation}
where $Z_X$ and $Z_Y$ are i.i.d mean zero Gaussian random elements, taking values in the space $\calC^s(B^d)$. Observe that the map $f \mapsto \int_{B^d} |f(u)|^2du$ is continuous from $\calC^s(B^d)$ to $\real$, for any $s \in \mathbb{N}$. Hence, $F_n \underset{n\to\infty}{\overset{d}{\longrightarrow}} F:=\int_{B^d} \|(Z_X+Z_Y)(u)\|^2d\mu(u)$.  Therefore, by Fernique's theorem, we have $var(F)<\infty$. 


%

The result $E_n \underset{n\to\infty}{\overset{d}{\longrightarrow}} E$ follows by an application of the functional delta method, as proven in \cite[Corollary $1$]{goldfeld2023limit}. A similar line of argument holds for $F_n$. \hfill\qed
\end{proof}

\vspace{1cm}
\noindent We can then deduce the following theorem from Proposition~\ref{prop:cv-test-stat}.
\begin{theorem}
\label{thm:p-values}
Assume that $\nu_X$ and $\nu_Y$ are two probability measures with compact support. Consider the null hypothesis $H_0:\nu_X=\nu_Y$ and the alternate hypothesis $H_1:\nu_X \neq \nu_Y $. Let $E_n, F_n$ and their limits $E, F$ be defined as above, with $E$ and $F$ non-degenerate. Then the following holds: 
\begin{enumerate}[(i)]
    \item For $0<\alpha<1$ and $q_{\alpha}$ the $\alpha $-th quantile of $E$, we have,
    
$\displaystyle \lim_{n \to \infty}\pr(E_n>q_{\alpha}\,\,|H_0)=1-\alpha$ \;and \;$\displaystyle \lim_{n \to \infty}\pr(E_n<q_{\alpha}\,\,|H_1)=0.$
\item For $0<\alpha<1$ and $q_{\alpha}$ the $\alpha$-th quantile of $F$, we have,

$\displaystyle\lim_{n \to \infty}\pr(F_{n}>q_{\alpha}\,\,|H_0)=1-\alpha$ \; and \; $\displaystyle\lim_{n \to \infty}\pr(F_{n}<q_{\alpha}\,\,|H_1)=0.$
\end{enumerate}
\end{theorem}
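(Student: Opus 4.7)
The plan is to reduce both assertions to tools already established in the paper: Proposition~\ref{prop:cv-test-stat} for the null behaviour, and Proposition~\ref{thm:unique characterisation by EOT} combined with the almost sure convergence of empirical EOT quantities recalled in Remark~\ref{rk:empirical EOT}(2) for the alternative. The argument splits cleanly into the two hypotheses and the two statistics.

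Under $H_0$, Proposition~\ref{prop:cv-test-stat} yields $E_n \overset{d}{\to} E$ and $F_n \overset{d}{\to} F$. From the proof of that proposition, $F$ is of the form $\int_{B^d}\|(Z_X+Z_Y)(u)\|^2 d\mu(u)$ for a non-degenerate centred Gaussian random element in $\calC^s(B^d)$, and $E$ admits the analogous representation via the functional delta method. Such non-degenerate quadratic functionals of Gaussian processes have continuous distribution functions, so every quantile $q_\alpha$ is a continuity point of the limit law. The Portmanteau theorem then gives
$\lim_{n\to\infty}\pr(E_n > q_\alpha\,|\,H_0) = \pr(E > q_\alpha) = 1-\alpha$,
and the analogous identity for $F_n$.

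Under $H_1$ the goal is to show that both $E_n$ and $F_n$ diverge almost surely, which will imply $\pr(E_n < q_\alpha\,|\,H_1),\,\pr(F_n < q_\alpha\,|\,H_1) \to 0$ regardless of where $q_\alpha$ sits. Since $\nu_X \neq \nu_Y$, Proposition~\ref{thm:unique characterisation by EOT} guarantees that $T^{\veps}_{\nu_X} \neq T^{\veps}_{\nu_Y}$ and $\varphi^{\veps}_{\nu_X} \neq \varphi^{\veps}_{\nu_Y}$ on sets of positive $\mu$-measure, hence
\begin{equation*}
c_E := \int_{B^d}\|T^{\veps}_{\nu_X}(u)-T^{\veps}_{\nu_Y}(u)\|^2 d\mu(u) > 0, \qquad c_F := \int_{B^d}|\varphi^{\veps}_{\nu_X}(u)-\varphi^{\veps}_{\nu_Y}(u)|^2 d\mu(u) > 0.
\end{equation*}
By Remark~\ref{rk:empirical EOT}(2), $\varphi^{\veps}_{\calX^n}\to\varphi^{\veps}_{\nu_X}$ and $\varphi^{\veps}_{\calY^n}\to\varphi^{\veps}_{\nu_Y}$ almost surely in $\calC^s(B^d)$, and therefore their gradients, the empirical EOT maps, converge to the population maps in $\calC^{s-1}(B^d)$. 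Combined with the uniform $\calC^s$-bound $R_s$ recalled just before Proposition~\ref{prop:cv-test-stat}, bounded convergence justifies passing through the $L^2(\mu)$ integral, giving $n^{-1}E_n \overset{a.s.}{\to} c_E$ and $n^{-1}F_n \overset{a.s.}{\to} c_F$; hence $E_n, F_n \to \infty$ almost surely.

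The main obstacle I anticipate is the continuity of the limit laws at $q_\alpha$, needed to turn the Portmanteau inequality into an equality in the size calculation. This is natural from the Gaussian quadratic representation of $E$ and $F$, but stating it carefully requires either invoking a classical result on absolute continuity of such functionals or adopting a quantile convention under which a one-sided bound suffices. The remaining ingredients — joint almost sure convergence of both empirical EOT objects to their population counterparts in the appropriate $\calC^s$ norm, and the ability to commute the limit with the $L^2(\mu)$ integral — are direct consequences of the compact-support hypothesis and the uniform estimates already used in the paper.
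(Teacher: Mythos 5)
Your proof is correct, and for the alternative hypothesis it follows a genuinely different and arguably cleaner route than the paper's. The paper handles the $H_1$ direction by decomposing $T^{\veps}_{\calX^n}-T^{\veps}_{\calY^n}$ into three pieces, expanding the squared $L^2(\mu)$ norm, rescaling by $\sqrt{n}$, and invoking Corollary~1 of \cite{goldfeld2023limit} (the EOT central limit theorem) to control the cross terms while the deterministic piece $\sqrt{n}\,c_E$ diverges. You instead observe that under $H_1$ Proposition~\ref{thm:unique characterisation by EOT} forces $c_E,c_F>0$, and that the almost sure $\calC^s$-convergence of the empirical EOT potentials and maps (Remark~\ref{rk:empirical EOT}(2), i.e.\ Lemma~1(iii) of \cite{goldfeld2023limit}), combined with the uniform $R_s$ bound and the finiteness of $\mu$, gives $n^{-1}E_n \to c_E$ and $n^{-1}F_n \to c_F$ almost surely, whence $E_n,F_n\to\infty$. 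This bypasses the CLT machinery entirely for the consistency half of the test and reaches the same endpoint more directly; the paper's decomposition buys nothing extra here, since what is needed is only divergence, not a rate. On the null side your use of Proposition~\ref{prop:cv-test-stat} together with Portmanteau is the same route the paper takes (both ultimately rest on Corollary~1 of \cite{goldfeld2023limit}); you are more explicit than the paper in flagging that one must know $q_\alpha$ is a continuity point of the law of $E$ (resp.\ $F$) to turn the Portmanteau inequality into the equality $\pr(E>q_\alpha)=1-\alpha$. The paper silently absorbs this into the hypothesis that $E$ and $F$ are non-degenerate, which is strictly weaker than absolute continuity, so your caveat is a legitimate point of rigor rather than a gap in your own argument.
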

\begin{proof}
\begin{enumerate}[(i)]
    \item 
    Corollary $1$ in \cite{goldfeld2023limit} implies that, under the  null hypothesis, $E_n$ converges in distribution to $E$. Therefore, we have $\displaystyle\lim_{n \to \infty}\pr(E_n>q_{\alpha}|H_0)=\pr(E>q_{\alpha}|H_0)=1-\alpha$.
    
    Moving to the second part of the statement, first note that
    \begin{eqnarray*}
        \frac1n E_n&=& \int_{B^d} \|T^{\veps}_{\calX^n}(u)-T^{\veps}_{{\nu}_X}(u)+T^{\veps}_{{\nu}_X}(u) - T^{\veps}_{\nu_Y}(u)+ T^{\veps}_{\nu_Y}(u)-T^{\veps}_{\calY^n}(u)\|^2 d\mu(u)\\
        &\ge& \int_{B^d}\Big[\|T^{\veps}_{\nu_X}(u) - T^{\veps}_{\nu_Y}(u)\|^2 + 2 \langle T^{\veps}_{\calX^n}(u)-T^{\veps}_{{\nu}_X}(u), T^{\veps}_{{\nu}_X}(u) - T^{\veps}_{\nu_Y}(u) \rangle\\
        && +  2 \langle T^{\veps}_{{\nu}_X}(u) - T^{\veps}_{\nu_Y}(u), T^{\veps}_{{\nu}_Y}(u) - T^{\veps}_{\calY^n}(u) \rangle
         +  2 \langle T^{\veps}_{\calX^n}(u)-T^{\veps}_{{\nu}_X}(u), T^{\veps}_{{\nu}_Y}(u) - T^{\veps}_{\calY^n}(u) \rangle\Big] \,d\mu(u).
    \end{eqnarray*}
    Therefore,
    \begin{eqnarray*}
        \{E_n>q_{\alpha}\} & \supseteq & \Bigg\{ \int_{B^d} \Big[\sqrt{n}\|T^{\veps}_{\nu_X}(u) - T^{\veps}_{\nu_Y}(u)\|^2 + 2 \sqrt{n}\langle T^{\veps}_{\calX^n}(u)-T^{\veps}_{{\nu}_X}(u), T^{\veps}_{{\nu}_X}(u) - T^{\veps}_{\nu_Y}(u) \rangle \\
        && + 2 \sqrt{n}\langle T^{\veps}_{{\nu}_X}(u) - T^{\veps}_{\nu_Y}(u), T^{\veps}_{{\nu}_Y}(u) - T^{\veps}_{\calY^n}(u) \rangle \\
        &&+ 2 \sqrt{n}\langle T^{\veps}_{\calX^n}(u)-T^{\veps}_{{\nu}_X}(u), T^{\veps}_{{\nu}_Y}(u) - T^{\veps}_{\calY^n}(u) \rangle\Big]d\mu(u) > \frac{q_{\alpha}}{\sqrt{n}}\Bigg\}.
    \end{eqnarray*}
    Now, notice that, under $H_1$ ,we have
     $\displaystyle \sqrt{n}\int \|T^{\veps}_{{\nu}_X}(u) - T^{\veps}_{{\nu}_Y}(u)\|^2 d\mu(u) \underset{n\to\infty}{\to} +\infty$. Further, it follows from Corollary $1$ in \cite{goldfeld2023limit} that: 
     \begin{itemize}
        \item $\sqrt{n} \int \left\langle T^{\veps}_{\calX^n}(u)-T^{\veps}_{{\nu_X}}(u), T^{\veps}_{{\nu}_X}(u) - T^{\veps}_{{\nu}_Y}(u)\right\rangle d\mu(u)$ and \\ $\sqrt{n} \int \left\langle T^{\veps}_{\calY^n}(u)-T^{\veps}_{{\nu_Y}}(u), T^{\veps}_{{\nu}_X}(u) - T^{\veps}_{{\nu}_Y}(u)\right\rangle d\mu(u)$ converge in distribution to some random variables with finite variance;
        \item $\sqrt{n} \int \left\langle T^{\veps}_{\calX^n}(u)-T^{\veps}_{{\nu_X}}, T^{\veps}_{\calY^n}(u)-T^{\veps}_{{\nu_Y}}(u)\right\rangle d\mu(u)$ converges to zero almost surely.
     \end{itemize}
    Combining all these observations, we conclude that $\displaystyle \pr\left(E_n>q_{\alpha}\,\,|H_1\right) \underset{n\to\infty}{\to}1$, from which we deduce the second statement.
    \item The proof of this part follows similarly as in part (i).
    \hfill \qed 
\end{enumerate}
\end{proof}

\begin{remark}~
    \begin{enumerate}
        \item  For the OT maps and OT potentials, the asymptotic limit theorems are not known in full generality. Whereas, in case of EOT, the limit theorems are proven in \cite{goldfeld2023limit}. This is why we provided the test statistics only through EOT maps and EOT potentials. Nevertheless, once limit theorems known in the OT case, test statistics can be defined for OT maps and potentials in the same way as we did for EOT.
        \item Note that the variances of the limiting random variables $E$ and $F$ are finite, 
        but not known explicitly. This is why we evaluate them numerically in the examples given in Section \ref{sec: Illustration}.
    \end{enumerate}

\end{remark}

\section{Experimental design on simulated data}\label{sec: Illustration}

In the preceding sections, we defined Q-Q plots and potential plots, for both OT and EOT approaches, and established the requisite theory to assess if two given sets of multivariate samples originate from a same distribution. In this section, we test how these theoretical results apply in practice, and check the effectiveness of the visual tools offered by OT and EOT plots on various simulated data. To this aim, we perform four first experiments, considering two samples drawn from distributions such that: (I) they are identical; (II) they have different dependence structures; (III) they are related by a scaling map; (IV) one of them exhibits outliers. 
Thereafter, we consider the same question, but now focusing on the tail of the distribution. Note that the potential function which gradient pushes forward a regularly varying (RV) probability measure into another regularly varying (RV) probability measure, is also (under some conditions) regularly varying; see \cite[Theorem 5.1]{Valk2018}.  Can we clearly distinguish between light and heavy tails with these (OT, EOT) Q-Q and potential plots, like in the case of geometric quantiles (or univariate quantiles)? To answer this question, we proceed to two additional experiments, where we compare multivariate Gaussian distribution with two different heavy-tailed distributions. All scenarios in the following examples are developed considering i.i.d and non i.i.d distributions.

More formally, using our previous notation, for each experiment, we proceed as follows: 
\begin{enumerate}[(i)]
    \item Consider two samples, $\mathcal{X}^n$ and $\mathcal{Y}^n$ of size $n$ each and a compact set $K_1$. In all the experiments we choose $K_1$ such that $\calX^n \cup \calY^n \subset K_1$; 
    \item Draw another sample $\calU^n$ from the uniform distribution on $B^d$, which serves as a reference for comparison. Let $K_2\subset B^d$ be a compact set. In all the experiments we choose $K_2$ such that $\calU^n \subset K_2$;
    \item  Q-Q plots: build the sets $\Delta_{i}({\cal X}^n,{\cal Y}^n:{\cal U}^n \cap K_2)$, $i=1, \cdots, d$, for the OT Q-Q plot and $\Delta^{\veps}_i(\mathcal{X}^n\cap K_1,\mathcal{Y}^{n}\cap K_1:{\cal U}^n)$ for the EOT Q-Q plots;
    \item  Potential plots: build  $\Gamma({\cal X}^n,{\cal Y}^n:{\cal U}^n \cap K_2)$ for the OT potential plot and $\Gamma^{\veps}(\mathcal{X}^n\cap K_1,\mathcal{Y}^{n}\cap K_1:{\cal U}^n)$ for the EOT potential plot;
    \item  Examine if the Q-Q and potential plots concentrate around the straight line denoted by $L$ with slope $1$ and intercept $0$;
    \item  If they do, infer from Theorems~\ref{thm: distinguish two sample OT} and ~\ref{thm: distinguish two sample EOT} that the samples share the same distribution. Otherwise, look for any discernible pattern present in the scatter plots and whether this pattern may unveil any distinct features. For instance, observe if the points appear to cluster around any specific nonlinear curve (see Figure~\ref{fig:normal vs mul_t});
    \item  Estimate the values of the test statistics \(E_n\) and \(F_n\), and estimate the corresponding p-values, to statistically assess the conclusions drawn from the plots. Note that the p-values are meaningful only when comparing distributions supported on compact sets, as Theorem~\ref{thm:p-values} is applicable to compactly supported distributions. However, when dealing with fully supported distributions, a partial analysis can be conducted by restricting them to a large enough compact set. Two Borel probability measures are considered identical if and only if their restrictions across all compact sets are identical.
    \item Finally we also study numerically the role of the regularisation parameter $\veps$ for EOT.
\end{enumerate}

\vspace{.5cm}
In the following examples, we present two kinds of plots: Q-Q plot and potential plot. In each Q-Q plot example, we display the OT and EOT Q-Q plots together in a single frame: the first row for OT and the second for EOT. In each plot, the $X$-axis corresponds to the first sample $\calX^n$ and the $Y$-axis to the second sample $\calY^n$. Also, the potential plots are presented together in a single frame: the left for OT and the right for EOT. Please note that all computations are performed using the POT (Python Optimal Transport) package; for detailed documentation, we refer to \cite{flamary2021pot}.

\subsection{Example: Comparison between two samples drawn from a multivariate (non i.i.d.) Gaussian distribution}\label{ssec:ex1}

Starting with step (i), we consider two sets of i.i.d. samples, $\calX^n$ and $\calY^n$,  drawn from distributions $\nu_X$ and $\nu_Y$, respectively, where $\nu_X = \nu_Y$ is a trivariate normal distribution with mean zero and covariance matrix $(\sigma_{ij})$, where, for instance, $\sigma_{11}=\sigma_{22}=\sigma_{33}=1$,  $\sigma_{12}=0.5$, $\sigma_{13}=0.2$ and $\sigma_{23}=0$. The number of observations in each sample is $1000$. 
Following steps (ii) to (iii), we obtain the OT and EOT Q-Q plots for these two samples, as displayed in Figure~\ref{fig:comp same Gaussian}. 
\begin{figure}[H]
    \centering
    \includegraphics[width=.8\linewidth]{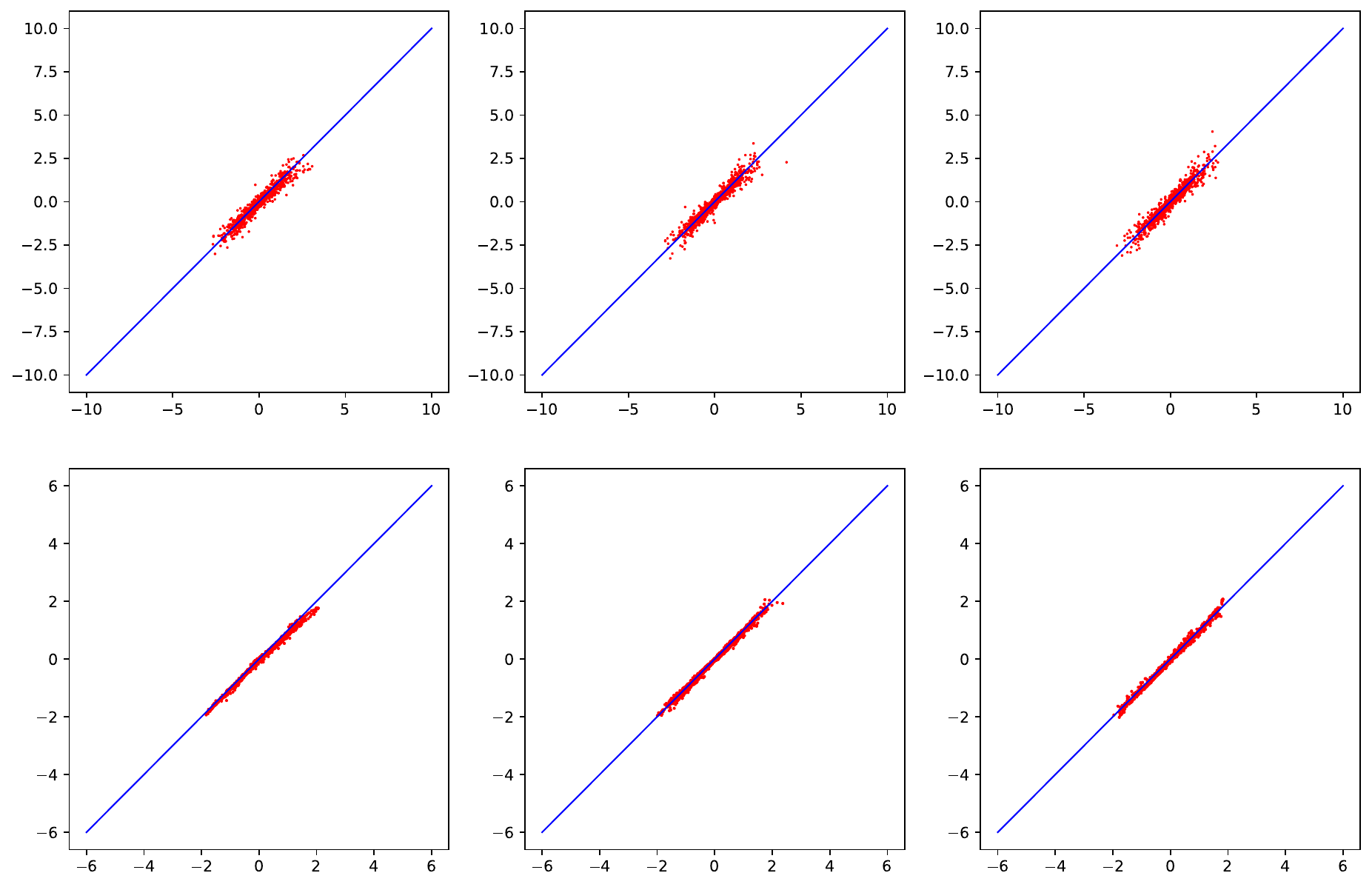}
    \caption{\small \sf Q-Q plots for two samples, drawn from a trivariate normal distribution with mean zero and covariance matrix $(\sigma_{ij})$, where $\sigma_{11}=\sigma_{22}=\sigma_{33}=1$,  $\sigma_{12}=0.5$, $\sigma_{13}=0.2$, $\sigma_{23}=0$. First row: OT Q-Q plot; second row: EOT Q-Q plot, with $\veps=10^{-2}$. The straight (blue) lines represent the line $L$. 
    }
    \label{fig:comp same Gaussian}
\end{figure}
Next, we provide the potential plots for this example following the step (iv); see Figure~\ref{fig:sG_vs_sG_potential}. Note that for EOT, we choose $\veps=10^{-2}$. 
\begin{figure}[H]
    \centering
    \includegraphics[width=.8\linewidth]{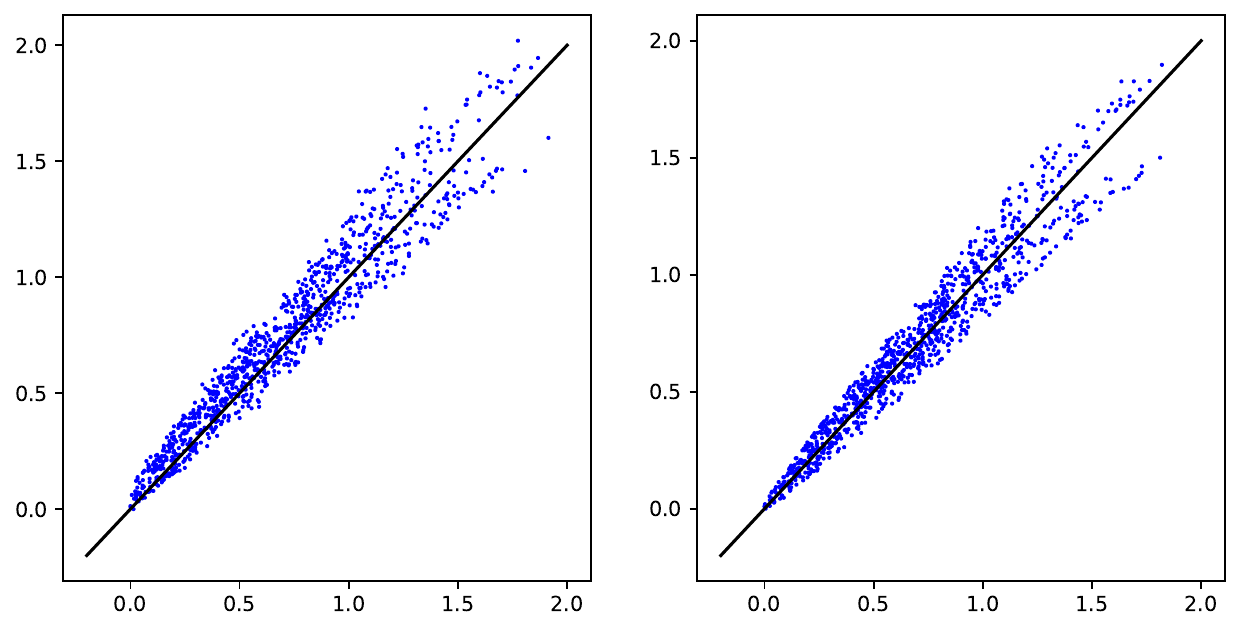}
    \caption{\small \sf Potential plots for two samples,  drawn from a trivariate normal distribution with mean zero and covariance matrix $(\sigma_{ij})$, where $\sigma_{11}=\sigma_{22}=\sigma_{33}=1$,  $\sigma_{12}=0.5$, $\sigma_{13}=0.2$, $\sigma_{23}=0$. The left plot compares the OT potentials, the right one EOT potentials taking $\veps=10^{-2}$. The straight (black) lines represent the line $L$.}
    \label{fig:sG_vs_sG_potential}
\end{figure}
As expected, we observe that, for both types of comparisons, Q-Q and potential ones, the scatter plot is concentrated along the line $L$. Therefore, according to step (vi), we can infer that the two samples are drawn from the same distribution. Although the spread around the straight line $L$ appears larger in the case of potential plots than in Q-Q plots, notice that the two sets of plots are shown on different scales, and that the dispersion in absolute terms is comparable in both cases. 

A relatively larger spread is observed in the extremal regions of the potential plots, which may be ascribed to poorer approximation of the potential function in the extreme region when having fewer observations.

Compared to the OT plots, we also observe sharper clustering around the straight line in case of EOT. A possible reason for this could be that EOT maps and potentials are more regular than their OT counterparts (\cite{Nutznote2022}). 

Next, we perform statistical test checking the similarity of the underlying distributions of the two samples ${\cal X}^n$ and ${\cal Y}^n$.
We set the null hypothesis as $H_0: \nu_X|_{K_2}=\nu_Y|_{K_2}$, and empirically estimate the values of the test statistics $E_n$ and $F_n$ defined in \eqref{dfn:teststat}. Additionally, the limiting distributions of $E_n$ and $F_n$ are empirically estimated, from which we deduce the $p$-values under $H_0$. We will proceed in the same way in all the illustrations. The values of the test statistics and $p$-values are reported in Table~\ref{table:pvalue_two_Gaussian}, varying the sample size $n$. Since the $p$-values are relatively high, whatever the sample size, it supports the assertion that the two samples have originated from the same distribution. 
\vspace{.3cm}
\begin{table}[h!]
    \begin{tabularx}{1\textwidth} { 
        | >{\centering\arraybackslash}X 
        | >{\centering\arraybackslash}X 
        | >{\centering\arraybackslash}X 
        | >{\centering\arraybackslash}X 
        | >{\centering\arraybackslash}X|}
       \hline
        $n$& $E_n$ & $p$-value of $E_n$ & $F_n$ & $p$-value of $F_n$\\
       \hline
       $250$  & $45.19$  & $0.1137$ & $ 14.19$ & $0.1559$ \\
          \hline
           $500$ & $32.17$ & $0.6578$ & $6.84$ & $0.6723$\\
          \hline
          $1000$  & $33.27$  & $0.5941$ & $ 6.84$ & $0.6723$ \\
      \hline
      \end{tabularx}  
      \caption{\small\sf Test statistics and $p$-values under the null hypothesis $H_0: \nu_X|_{K_2}=\nu_Y|_{K_2}$.}
  \label{table:pvalue_two_Gaussian}
\end{table}

\subsection{Example: Characterising dependency}

In multivariate setting, assessing dependency among marginals is an essential feature of multivariate statistical analysis. In this experiment, we compare two samples, one drawn from a distribution with independent marginals against one drawn from a distribution having dependent marginals, to see if there is any specific feature that can be observed from the Q-Q and potential plots. We consider two samples; the first one is drawn from ($\nu_X$), a trivariate standard normal distribution, and the second from ($\nu_Y$), a normal distribution with zero mean and covariance matrix $(\sigma_{ij})$, where $\sigma_{11}=\sigma_{22}=\sigma_{33}=1$,  $\sigma_{12}=0.9$, $\sigma_{13}=0$ and $\sigma_{23}=0$. The Q-Q and potential plots are displayed in Figures~\ref{fig:sG_vs_corrG_QQ} and \ref{fig:sG_vs_corrG_potential}, respectively.

In both Q-Q plots (OT and EOT), we observe that, although the points are concentrated along the line $L$, the concentration is stronger for the third component (especially for EOT), as compared to the plots for the first and second component.
Since the second distribution exhibits a high correlation ($0.9$) between the first and second marginals, would this pattern indicate some dependence? Answering this question will require further research and experiments, considering various dependence structures.
\begin{figure}[H]
    \centering
    \includegraphics[width=.8\linewidth]{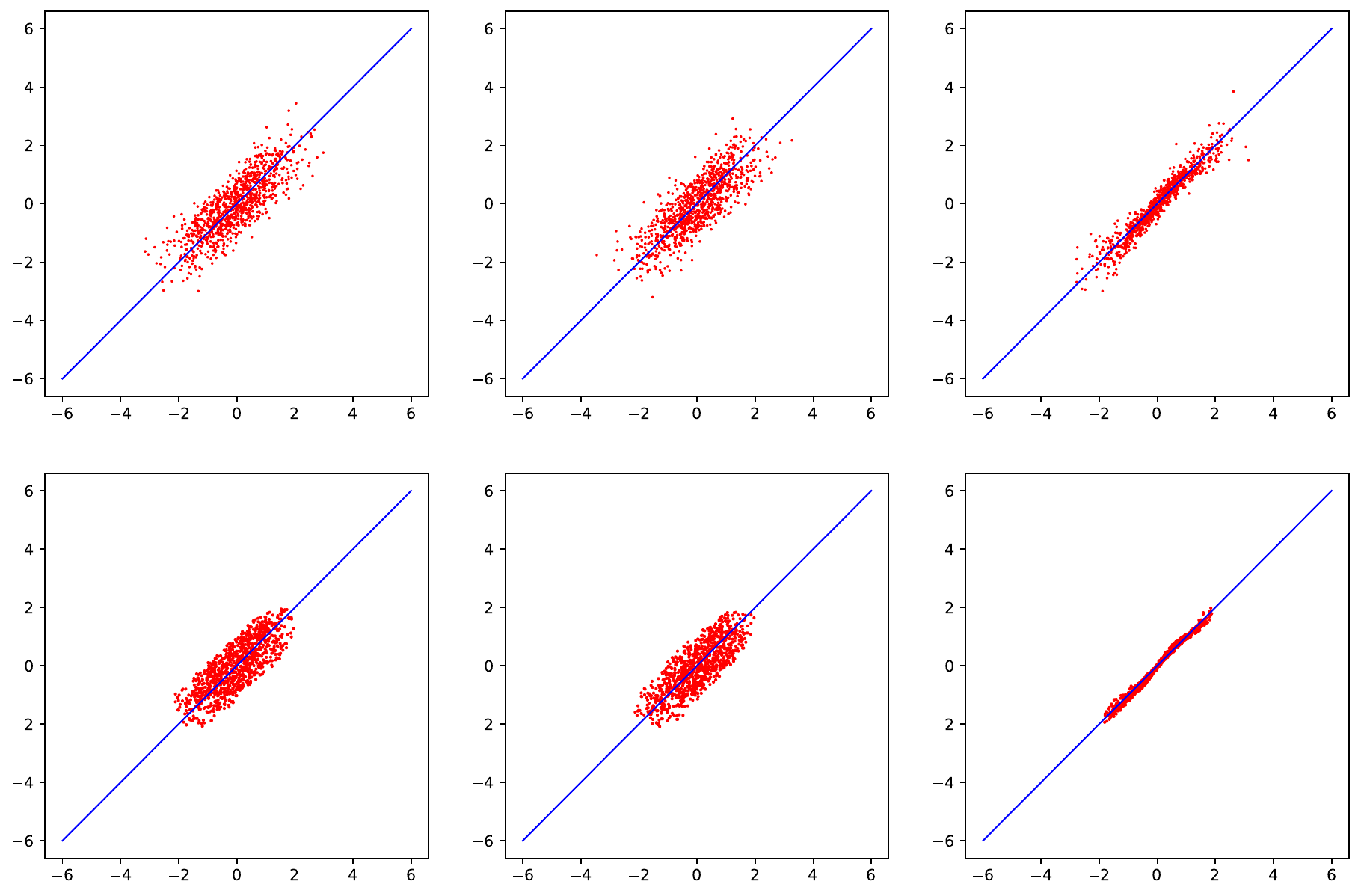}
    \caption{\small\sf Q-Q plots for two samples, each of size $1000$; the first sample is drawn from a trivariate standard normal distribution, the second from a normal distribution with zero mean and covariance matrix $(\sigma_{ij})$, where $\sigma_{11}=\sigma_{22}=\sigma_{33}=1$,  $\sigma_{12}=0.9$, $\sigma_{13}=0$, $\sigma_{23}=0$. The first row displays OT Q-Q plots, the second EOT Q-Q plots with $\veps=10^{-2}$.  The straight (blue) lines represent the line $L$.} 
    \label{fig:sG_vs_corrG_QQ}
\end{figure}
\begin{figure}[H]
    \centering
    \includegraphics[width=.8\linewidth]{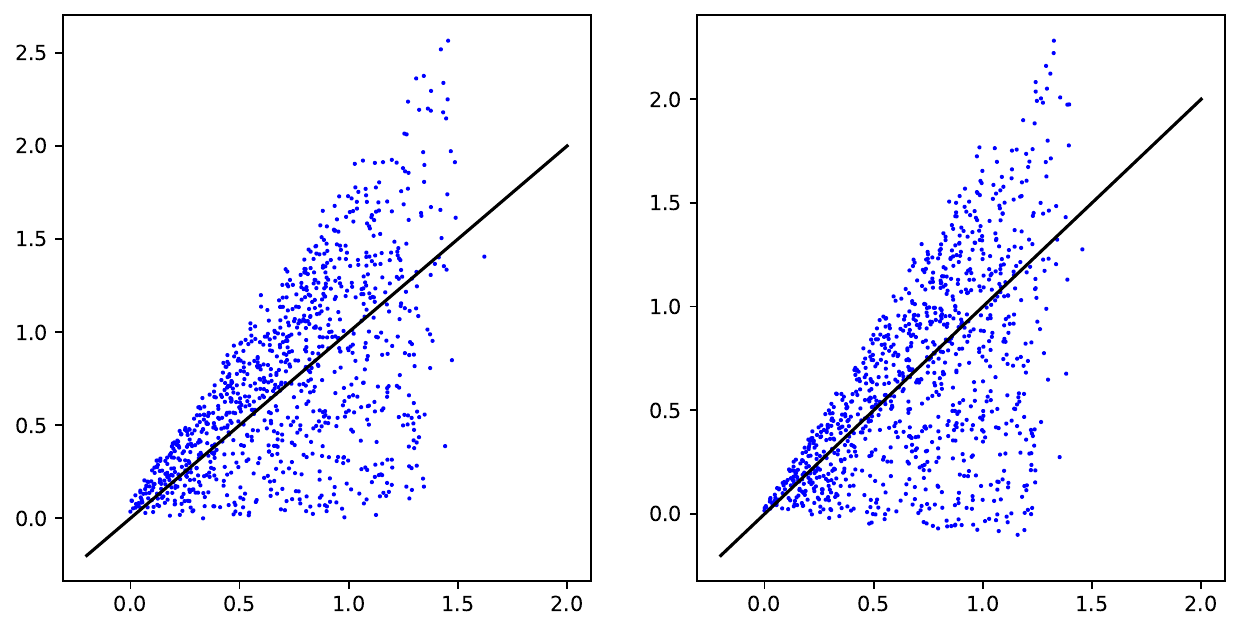}
    \caption{\small\sf Potential plots for 2 multivariate samples, each of size $1000$. The first sample is drawn from a trivariate standard normal distribution, the second from a normal distribution with mean 0 and covariance matrix $(\sigma_{ij})$, \!where $\sigma_{11}=\sigma_{22}=\sigma_{33}=1$,  $\sigma_{12}=0.9$, $\sigma_{13}=0$, $\sigma_{23}=0$. Left: OT potential plot; Right: EOT potential plot with $\veps=10^{-2}$.\!\! The straight (black) lines represent the line\! $L$.}
    \label{fig:sG_vs_corrG_potential}
\end{figure}
Turning to potential plots displayed in Figure~\ref{fig:sG_vs_corrG_potential}, we can assess that the samples considered for the experiment have originated from different distributions. However, we cannot deduce any specific pattern in the plots that would suggest a high dependence among some components. Here too, further research needs to be developed.
\\[1ex]
Finally, we empirically estimate the values of the test statistics $E_n$ and $F_n$ defined in \eqref{dfn:teststat}, as well as the corresponding $p$-values under $H_0: \nu_X|_{K_2}=\nu_Y|_{K_2}$. The values are reported in Table~\ref{table:pvalue_two_Gaussian-correlated}. The small $p$-values lead to the statement that the two samples have originated from two different distributions.
\vspace{.3cm}
\begin{table}[h!]
\begin{tabularx}{1\textwidth} { 
    | >{\centering\arraybackslash}X 
    | >{\centering\arraybackslash}X 
    | >{\centering\arraybackslash}X 
    | >{\centering\arraybackslash}X 
    | >{\centering\arraybackslash}X|}
   \hline
    $n$& $E_n$ & $p$-value of $E_n$ & $F_n$ & $p$-value of $F_n$\\
   \hline
   $250$  & $277.86$  & $0$ & $ 62.46$ & $0$ \\
      \hline
       $500$ & $469.56$ & $0$ & $166.46$ & $0$\\
      \hline
      $1000$  & $877.14$  & $0$ & $304.53$ & $0$ \\
  \hline
  \end{tabularx}
  \caption{\small\sf Test statistics and $p$-values under the null hypothesis $H_0: \nu_X|_{K_2}=\nu_Y|_{K_2}$.}
  \label{table:pvalue_two_Gaussian-correlated}
\end{table}

 Observe that the values of $E_n$ and $F_n$ are getting larger as the sample size $n$ increases. This may indicate that the two underlying distributions are not the same, since  we know that $E_n$ and $F_n$ tend to infinity in such a case. Also note that the $p$-values are exactly zero, which looks too good to be true. Although the limiting distributions of $E_n$ and $F_n$ are fully supported on the positive real line, our approximations of those are supported on bounded sets. Therefore, the large values of $E_n$ and $F_n$, which correspond to small probability regions of the limiting distributions, easily fall beyond the support of the approximated limiting distributions. To obtain non-zero $p$-values, either the approximations have to be more accurate so that their supports are large enough to contain $E_n$ and $F_n$, or the limiting distributions have to be known precisely.

\subsection{Example: Gaussian versus scaled Gaussian}\label{ssec:ex2}

We draw samples ${\cal X}^n$ from $\mu_X$, a standard normal distribution and ${\cal Y}^n$ from $\mu_Y$, a centered normal distribution with covariance matrix $\diag (1,4,1)$. The Q-Q plots are displayed in Figure \ref{fig:Gauss vs Scaled Gauss}. We observe that in the first and third columns in both rows, the points are concentrated around the line with slope $1$ (blue line) whereas in the second column, the points are concentrated around the line with slope $2$ (black line). This observation suggests that $T_{{\cal Y}n} = \text{diag}(1, 2, 1) T_{{\cal X}_n}$, indicating that the second distribution differs from the first by a scaling factor represented by the matrix $\text{diag}(1, 2, 1)$. Since $\eps$ is set very small $(10^{-2})$, the entropy regularised quantiles $T^{\eps}_{{\cal X}^n}$ closely approximate $T_{{\cal X}^n}$, and as a result, we see that in the second plot of the second row, the points are also approximately concentrated around the straight line with slope $2$ (black line). However, it is important to recall that the EOT map is not necessarily scale equivariant in general. 
\begin{figure}[H]
    \centering
    \includegraphics[width=.8\linewidth]{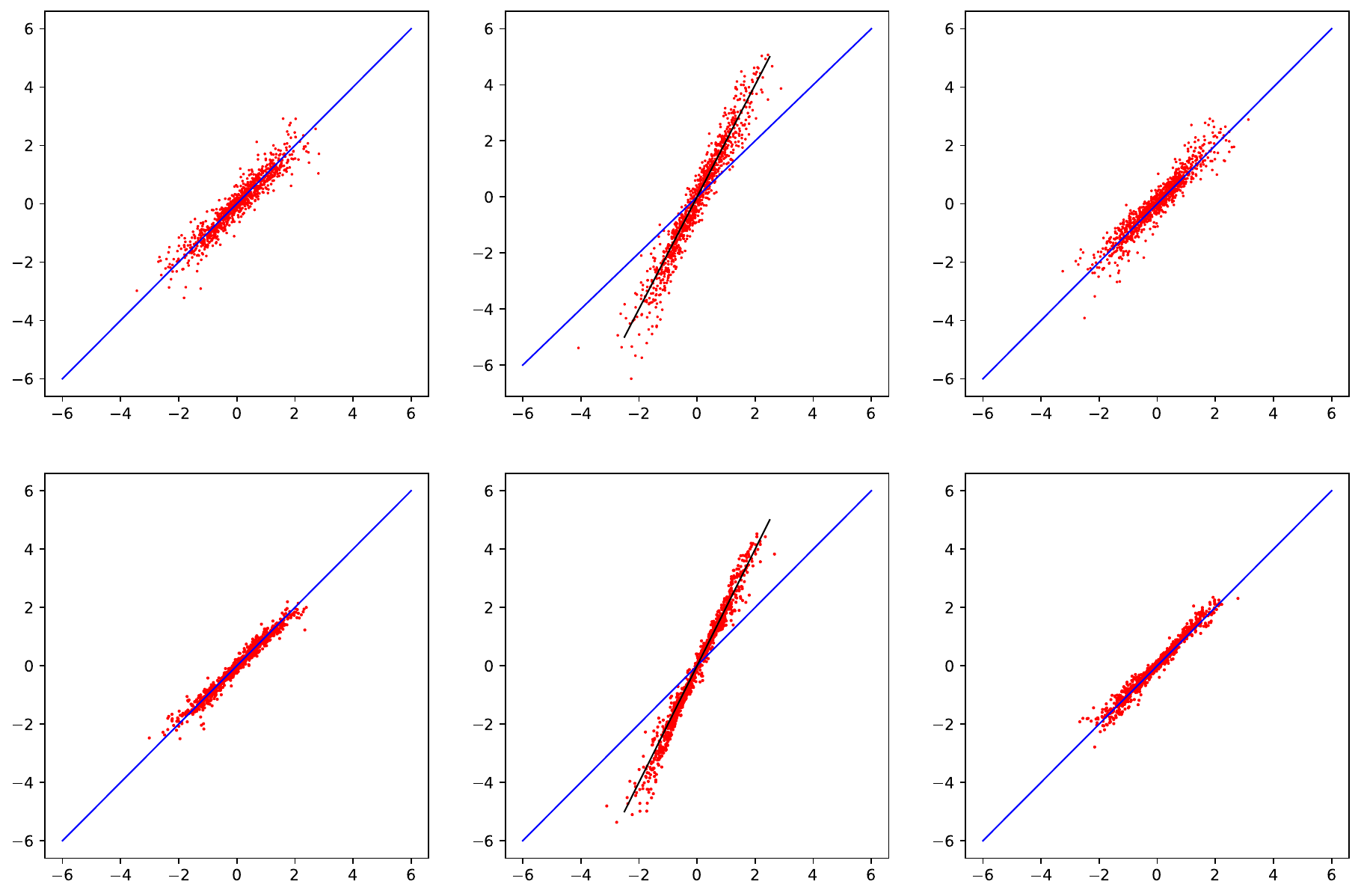}
    \caption{\small \sf Q-Q plots for two samples, the first one is drawn from a trivariate standard normal distribution, the second one from a trivariate normal distribution with covariance matrix $=diag(1,4,1)$. The first row displays OT Q-Q plot and the second one EOT Q-Q plot with $\veps=10^{-2}$. The straight blue lines represent the line $L$ and the black line in the second column has slope $2$ and intercept $0$.}
    \label{fig:Gauss vs Scaled Gauss}
\end{figure}
We present potential plots (left for OT and right for EOT) for the same pair of samples ${\cal X}^n$ and ${\cal Y}^n$ in Figure \ref{fig:G_vs_SG_pot}. Clearly, the scatter plots are spread out and not concentrated along the straight line $L$. This observation suggests that the underlying distributions corresponding to the two samples are not identical. But, unlike the Q-Q plots (shown in Figure \ref{fig:Gauss vs Scaled Gauss}), the potential plots do not offer additional information about the underlying distributions, such as the scaling in the sample $\calY^n$.
\begin{figure}[H]
    \centering
    \includegraphics[width=.8\linewidth]{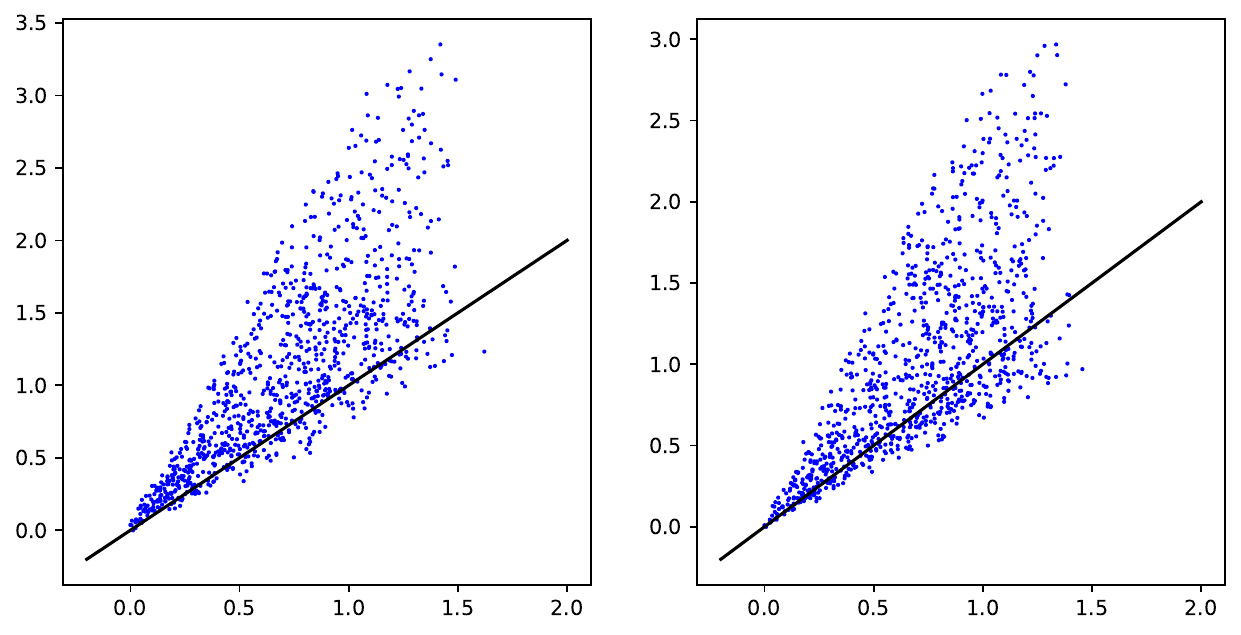}
    \caption{\small \sf Potential plots for two samples. The first sample is drawn from a trivariate standard normal distribution and the second one from a trivariate normal distribution with mean zero and covariance matrix $=diag(1,4,1)$. The left plot is the OT potential plot and the right one EOT potential plot with $\veps=10^{-2}$. The straight (black) lines represent the line $L$.}
    \label{fig:G_vs_SG_pot}
\end{figure}

Finally, we empirically estimate the values of the test statistics $E_n$ and $F_n$ defined in \eqref{dfn:teststat}, as well as the corresponding $p$-values under $H_0: \nu_X|_{K_2}=\nu_Y|_{K_2}$. The values are reported in Table~\ref{table:pvalue_two_Gaussian-scaled}. Since the $p$-values are relatively small, this supports the assertion that the two samples have originated from distinct distributions.
\vspace{.3cm}
\begin{table}[h!]
    \begin{tabularx}{1\textwidth} { 
        | >{\centering\arraybackslash}X 
        | >{\centering\arraybackslash}X 
        | >{\centering\arraybackslash}X 
        | >{\centering\arraybackslash}X 
        | >{\centering\arraybackslash}X|}
       \hline
        $n$& $E_n$ & $p$-value of $E_n$ & $F_n$ & $p$-value of $F_n$\\
       \hline
       $250$  & $222.95$  & $0$ & $ 63.58$ & $0$ \\
      \hline
       $500$ & $394.71$ & $0$ & $121.40$ & $0$\\
      \hline
      $1000$  & $796.47$  & $0$ & $267.15$ & $0$ \\
      \hline
      \end{tabularx}
      \caption{\small\sf Test statistics and $p$-values under the null hypothesis $H_0: \nu_X|_{K_2}=\nu_Y|_{K_2}$.}
        \label{table:pvalue_two_Gaussian-scaled}
\end{table}

\subsection{Example: Outlier Detection}\label{ssec:ex_outlier}

We show that outliers can be detected using OT and EOT Q-Q plots. We illustrate this with two simulated samples, ${\cal X}^n$ and ${\cal Y}^n$, drawn from trivariate standard normal distributions, each containing 1000 observations. Subsequently, we replace three observations in ${\cal Y}^n$ with outlier points $(8, 8, 8)$, $(9, 9, 9)$, and $(10, 10, 10)$, resulting in the transformed set ${\bar{\cal Y}}^n$. We follow steps (ii)--(iii) to build the Q-Q plots for ${\cal X}^n$ and ${\bar{\cal Y}}^n$, and display them in Figure \ref{fig:sG_vs_sGoutlier}. The three black points, which are far from the rest of the observations, clearly reveal the presence of outliers. Next, we present the potential plots (step (iv)) for these two samples in Figure~\ref{fig:sG_vs_sGoutlier_potential}. 

Notice that, although there are $3$ outliers present in the second sample, we only see one outlying point in the potential plot (compare this with Q-Q plots in Figure~\ref{fig:sG_vs_sGoutlier}). It appears that potential plots may be less informative in this case compared to the Q-Q plots.

It is worth noticing that we set the value of the regularisation parameter as $\varepsilon = 10^{-3}$ in this example, whereas in previous examples, we chose a larger value of $\varepsilon = 10^{-2}$. We observe that, with a higher value of $\varepsilon$, the EOT Q-Q plots do not distinctly separate the outliers. We shall study empirically the role of the regularisation parameter in visual analysis in Subsection \ref{ssec:regularisation parameter}.
\begin{figure}[H]
    \centering
    \includegraphics[width=.8\linewidth]{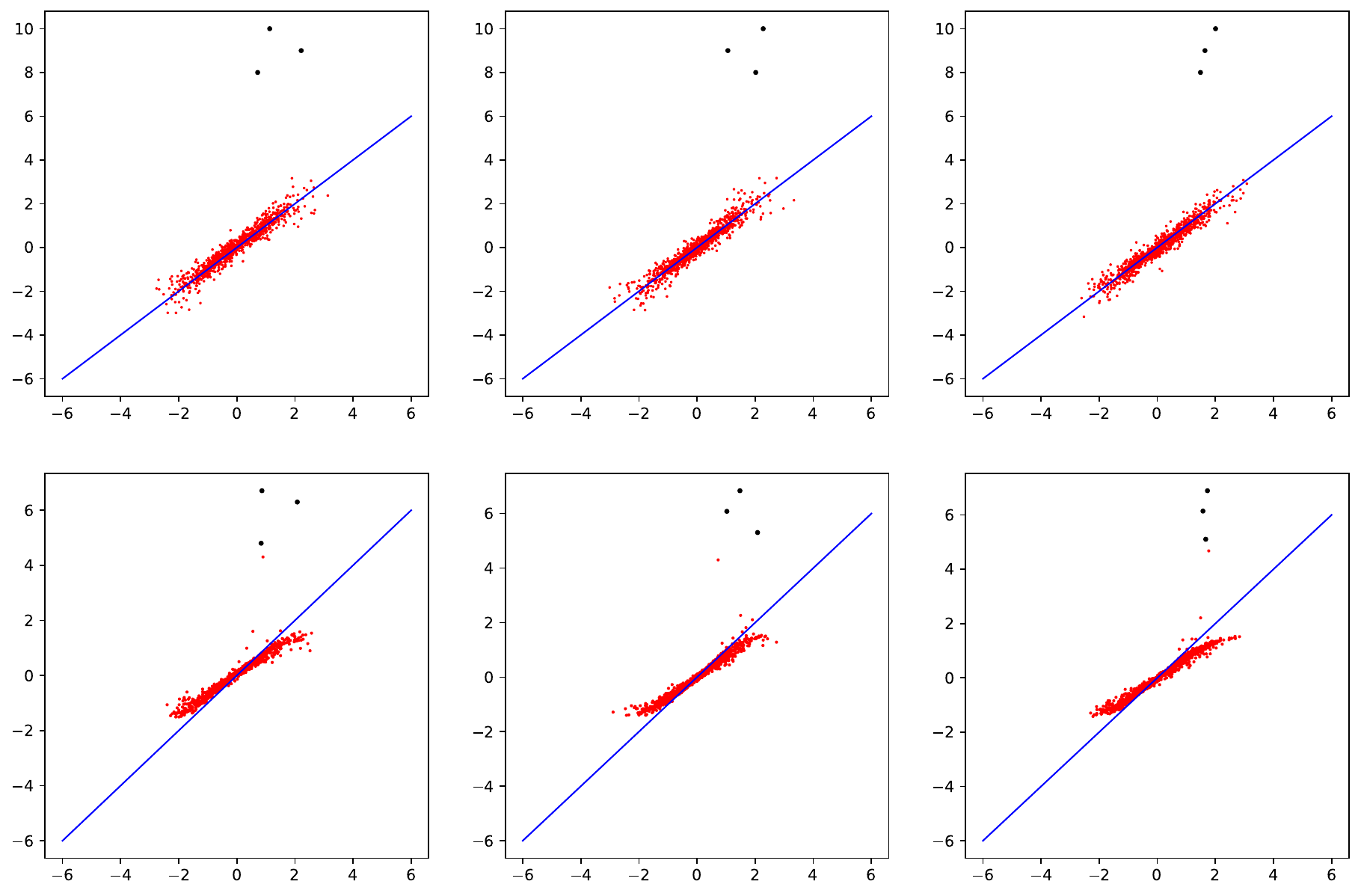}
    \caption{\small \sf Q-Q plots for two multivariate samples, each of size $1000$. Both samples are generated from a multivariate standard normal, while three observations in the second sample are replaced by three outliers. First and second rows display the OT and EOT Q-Q plots, respectively. The regularisation parameter for the EOT is chosen as $\veps=10^{-3}$. The straight (blue) lines represent the line $L$.}
    \label{fig:sG_vs_sGoutlier}
\end{figure}
\begin{figure}[H]
    \centering
    \includegraphics[width=.8\linewidth]{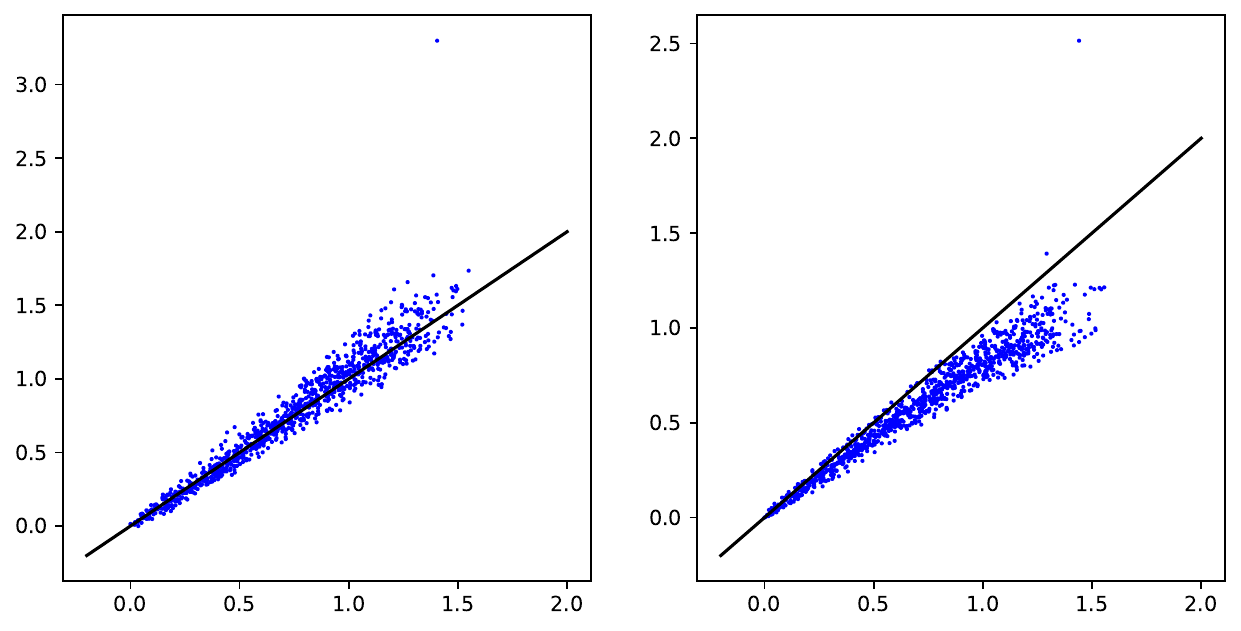}
    \caption{\small\sf Potential plots (left for OT, right for EOT with $\veps=10^{-3}$) for two multivariate samples, each of size $1000$. Both samples are generated from multivariate standard normal, while only the second sample is contaminated by some outliers. The straight (black) lines represent the line $L$.}
    \label{fig:sG_vs_sGoutlier_potential}
\end{figure}

\subsection{Example: Light tail vs Heavy tail}\label{ssec:ex_light_vs_heavy}

%
To compare the heaviness of tail distributions, we provide a first example in dimension $3$ considering two distributions: the standard normal distribution (for the light tail) and the Student's  $t$--distribution (for the heavy one). Another example comparing the trivariate standard normal distribution with i.i.d. Pareto($3$) marginals can be found in Appendix~\ref{App:LightHT}.

Recall that the density of the  Student's $t$--distribution in dimension $d$ is given by
\begin{equation}
    f(x)=\frac{\Gamma[(r+d)/2]}{\Gamma(r/2)r^{d/2}\pi^{d/2}|\Sigma|^{1/2}}\left[ 1+\frac{1}{r}(x-m)^T\Sigma^{-1}(x-m)\right]^{-(r+d)/2},
\end{equation}
where $m\in \real^d$, $\Sigma$ is a $d \times d$ symmetric positive semidefinite matrix and $\Gamma(\cdot)$ is the gamma function. The parameter $r>0$ determines the heaviness of the distribution in the sense that moments of order greater than $r$ are infinite. We follow step (i) and draw a sample $\calX^n$ of size $1000$ from the trivariate standard normal distribution $(\nu_X)$. Similarly we draw $\calY^n$  of the same size from a trivariate Student's $t$--distribution, with parameters $m=0$, $\Sigma=I_3$ (identity in $\real^3$) and  $r=3.2$ $(\nu_Y)$. We go through steps (ii)--(iii), and display the Q-Q plots in Figure~\ref{fig:normal vs mul_t}. 
\begin{figure}[H]
    \centering
    \includegraphics[width=1\linewidth]{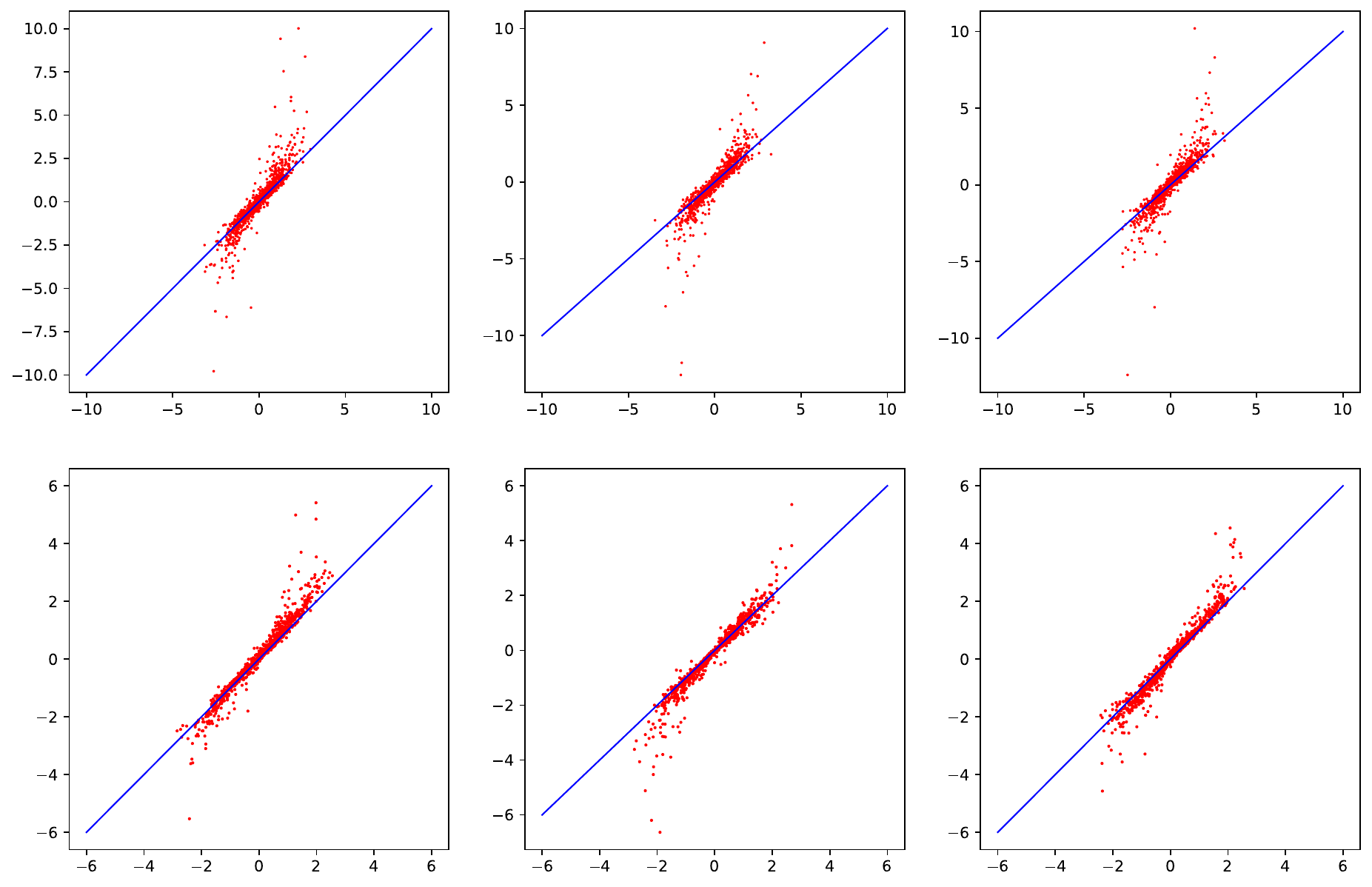}
    \caption{\small \sf Q-Q plots for two samples each of size $1000$, the first drawn from a trivariate standard normal distribution, the second from a multivariate Student's t-distribution. The first row display OT Q-Q plot and second EOT Q-Q plot, taking $\veps=10^{-3}$.  The straight (blue) lines represent the line $L$.}
    \label{fig:normal vs mul_t}
\end{figure}
Here, we observe that the OT and EOT quantiles for $\calY^n$ grow at a faster rate than those of $\calX^n$. The peculiar shape (S-shape) of the scatter plot seen in Figure~\ref{fig:normal vs mul_t} is very familiar in the univariate Q-Q plots involving heavy vs light tail comparisons. This hints at a similar behaviour in the OT plots. This feature, which is very useful in the univariate analysis, needs further exploration in the (E)OT setting.

\begin{table}[h!]
    \begin{tabularx}{1\textwidth} { 
        | >{\centering\arraybackslash}X 
        | >{\centering\arraybackslash}X 
        | >{\centering\arraybackslash}X 
        | >{\centering\arraybackslash}X 
        | >{\centering\arraybackslash}X|}
       \hline
        $n$& $E_n$ & $p$-value of $E_n$ & $F_n$ & $p$-value of $F_n$\\
       \hline
       $250$  & $670.64$  & $0$ & $ 13.30$ & $0.33325$ \\
      \hline
       $500$ & $1410.92$ & $0$ & $33.52$ & $0.003375$\\
      \hline
      $1000$  & $2771.30$  & $0$ & $92.38$ & $0.000125$ \\
      \hline
      \end{tabularx}      
      \caption{\small\sf Test statistics and $p$-values under the null hypothesis $H_0: \nu_X|_{K_2}=\nu_Y|_{K_2}$.}
          \label{table:pvalue_Gaussian-t3,2}
        \end{table}
We then present potential plots (step (iv)) for this particular example in Figure~\ref{fig:normal vs mul_t potential}. It can be seen from the plots that the points are scattered around a nonlinear (increasing) curve above the line $L$. This implies that the potential function associated with the second sample has a higher growth rate. Since the (E)OT maps are gradient of (E)OT potentials, a higher growth rate of potential implies that the corresponding distribution has a heavier tail than that of first one.

\begin{figure}[H]
    \centering
    \includegraphics[width=.8\linewidth]{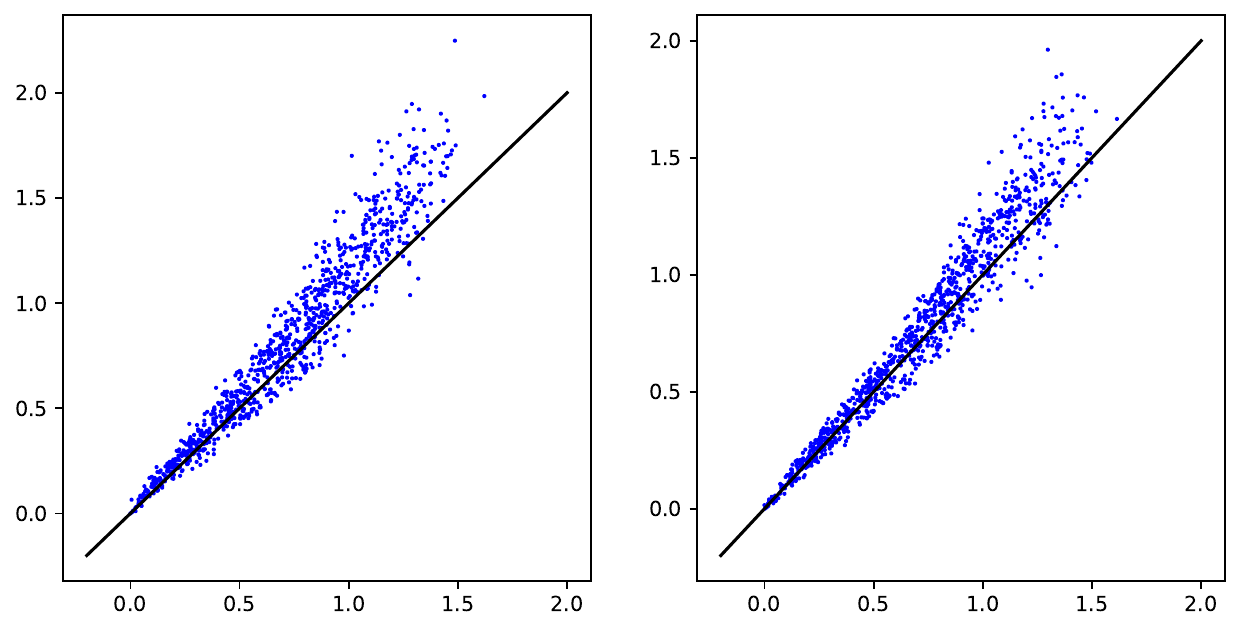}
    \caption{\small \sf Potential plots (left for OT, right for EOT with $\veps=10^{-3}$) for two multivariate samples, each of size $1000$; the first one is drawn from a trivariate standard normal distribution and the second from a multivariate  Student’s t-distribution. The straight (black) lines represent the line $L$.}
    \label{fig:normal vs mul_t potential}
\end{figure}
The $p$-values corresponding to $E_n$ and $F_n$ are provided in Table~\ref{table:pvalue_Gaussian-t3,2}, although not so informative given the question on tail behaviour.

\subsection{The effect of the regularisation parameter in EOT} 
\label{ssec:regularisation parameter}
 
As stated in Proposition \ref{thm:unique characterisation by EOT}, both EOT map and potential uniquely characterise distributions, regardless of the value of the regularisation parameter $\veps$. However, the regularisation parameter plays an important role in the visual analysis. 
For a reasonably small value of $\veps$ (depending upon the distribution), the EOT map and potential closely approximate the OT map and potential, respectively. As a consequence, the EOT plots (Q-Q and potentials) look very similar to the OT plots (Q-Q and potentials); see for e.g. Figures~\ref{fig:comp same Gaussian}--\ref{fig:G_vs_SG_pot}. 

On the other hand, the EOT map (resp. potential), with a large value of $\veps$, is far from being a close approximation of the OT map (resp. potential). 

Taking back the setup of Example~\ref{ssec:ex1}, we compute and display in Figures~\ref{fig:sG_vs_sG_diff_parameters} and \ref{fig:sG_vs_sG_potentials_diffparameters}, respectively, the EOT Q-Q plots and potential plots when varying $\veps$. We consider three values of $\veps$: $10^{-3}, 10^{-2}$ and $10^{-1}$, respectively. We observe that points are not only scattered around the line $L$, they also begin to concentrate around a point on the line as the value of $\veps$ increases.

\begin{figure}[H]
    \centering
    \includegraphics[width=.8\linewidth]{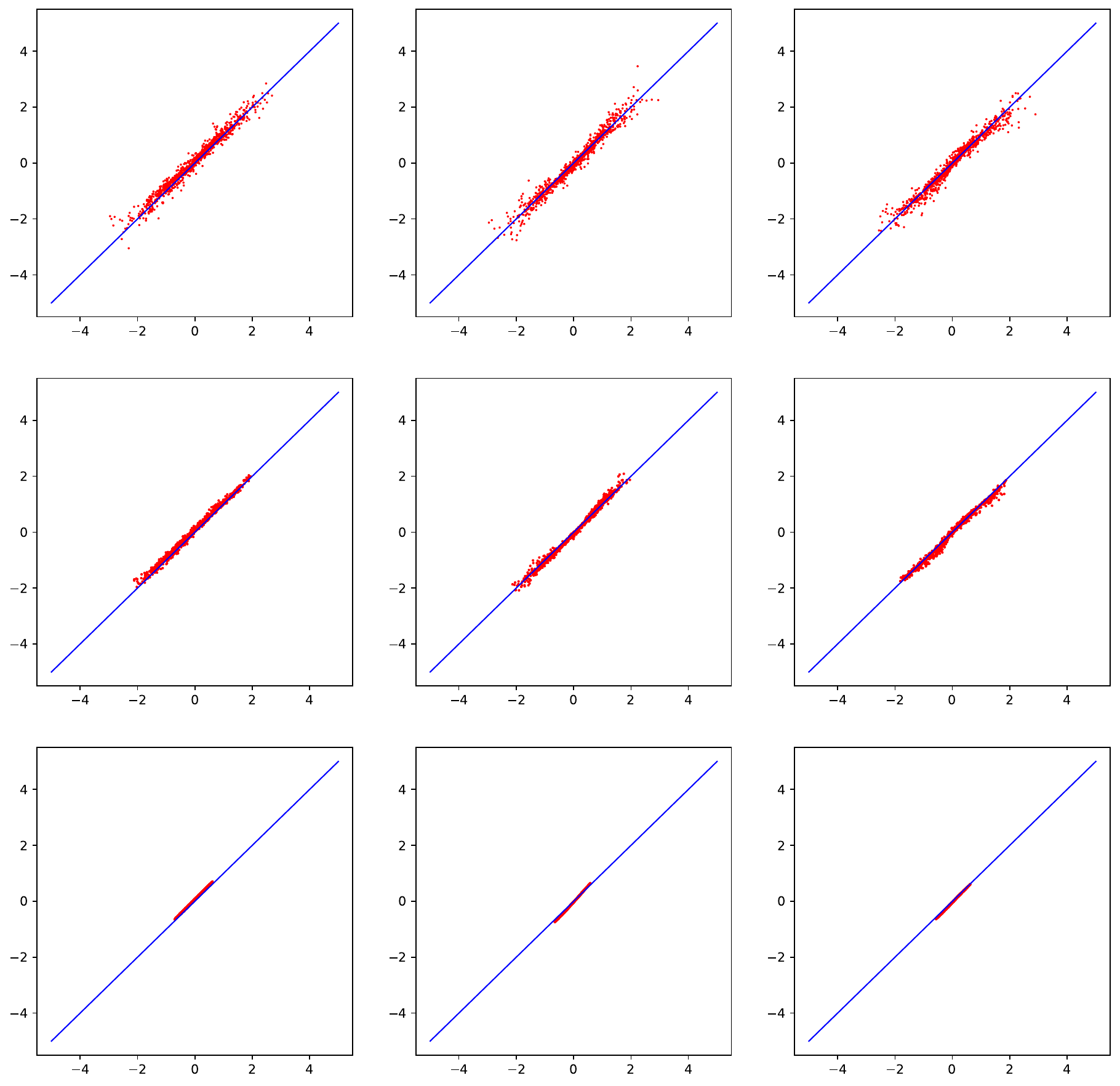}
    \caption{\small \sf EOT Q-Q plots for two samples, both generated from a trivariate standard normal distribution. The value of the regularisation parameter $(\veps)$ is $10^{-3}$ for the first row, $10^{-2}$ for the second row and $10^{-1}$ for the third row. The straight (blue) lines represent $L$.}
    \label{fig:sG_vs_sG_diff_parameters}
\end{figure}
Recall that, as $\varepsilon$ increases to $+\infty$, the EOT map converges to a constant map, with the constant being the mean of the target distribution. Therefore, for larger $\varepsilon$ (e.g., in this case $\veps=10^{-1}$), the EOT map is close to a constant, resulting in concentration around a point in the Q-Q plot. 
We next consider the same samples as in Example~\ref{ssec:ex_light_vs_heavy}. We display the EOT Q-Q plots in Figure~\ref{fig:EOTQQ_sG_vs_mul_t_diff_par}, for three different values of $\veps$, $10^{-3}, 10^{-2}$ and $10^{-1}$, respectively. Observe that, although the points are deviating from the line $L$, which indicates that the two samples are non similar, they do not reveal the tail heaviness (compared with Example \ref{ssec:ex1}).
\begin{figure}[H]
    \centering
    \includegraphics[width=.8\linewidth]{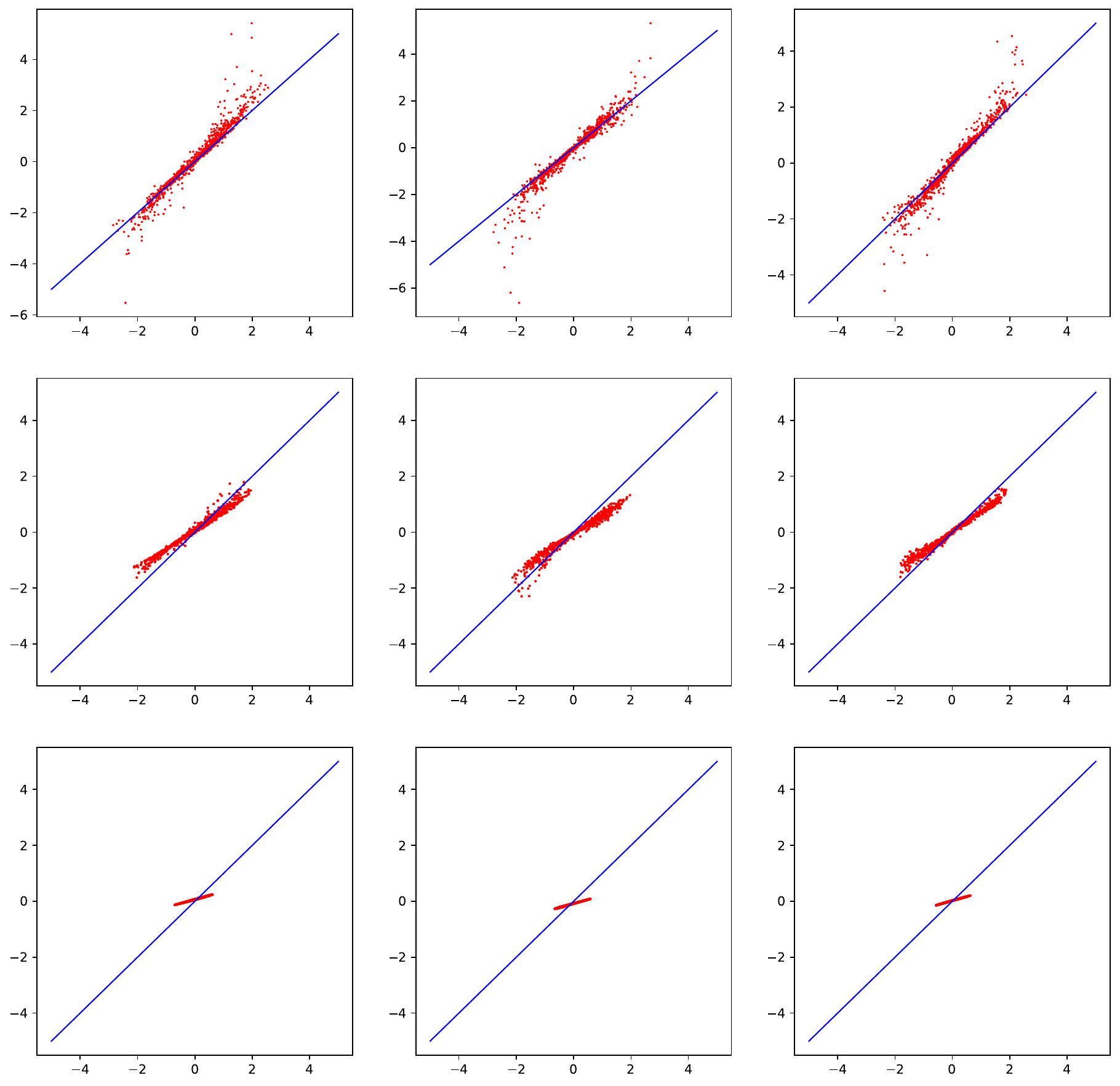}
    \caption{\small \sf EOT Q-Q plots for two samples. The samples are chosen as in Example~\ref{ssec:ex_light_vs_heavy}. The value of regularisation parameter $(\veps)$ is $10^{-3}$ for the first row, $10^{-2}$ for the second row and $10^{-1}$ for the third row. The blue lines represent $L$.}
    \label{fig:EOTQQ_sG_vs_mul_t_diff_par}
\end{figure}
Therefore, for a large value of $\veps$, while the EOT Q-Q plot can distinguish samples from different distributions, it may not reveal specific details such as the heaviness of the tails.\\[1ex]
Let us now discuss the EOT potential plots for different values of $\veps$. We first consider the setup in Example~\ref{ssec:ex1}, let both samples are drawn from a trivariate standard normal distribution. We display the EOT potential plots in Figure~\ref{fig:sG_vs_sG_potentials_diffparameters}, for three different values of $\veps$, $10^{-3}, 10^{-2}$ and $10^{-1}$, respectively.
Observe that, for large values of $\veps$, the plots are not convincing enough to infer that the two sets of samples are generated from the same  distribution.
\begin{figure}[H]
    \centering
    \includegraphics[width=.8\linewidth]{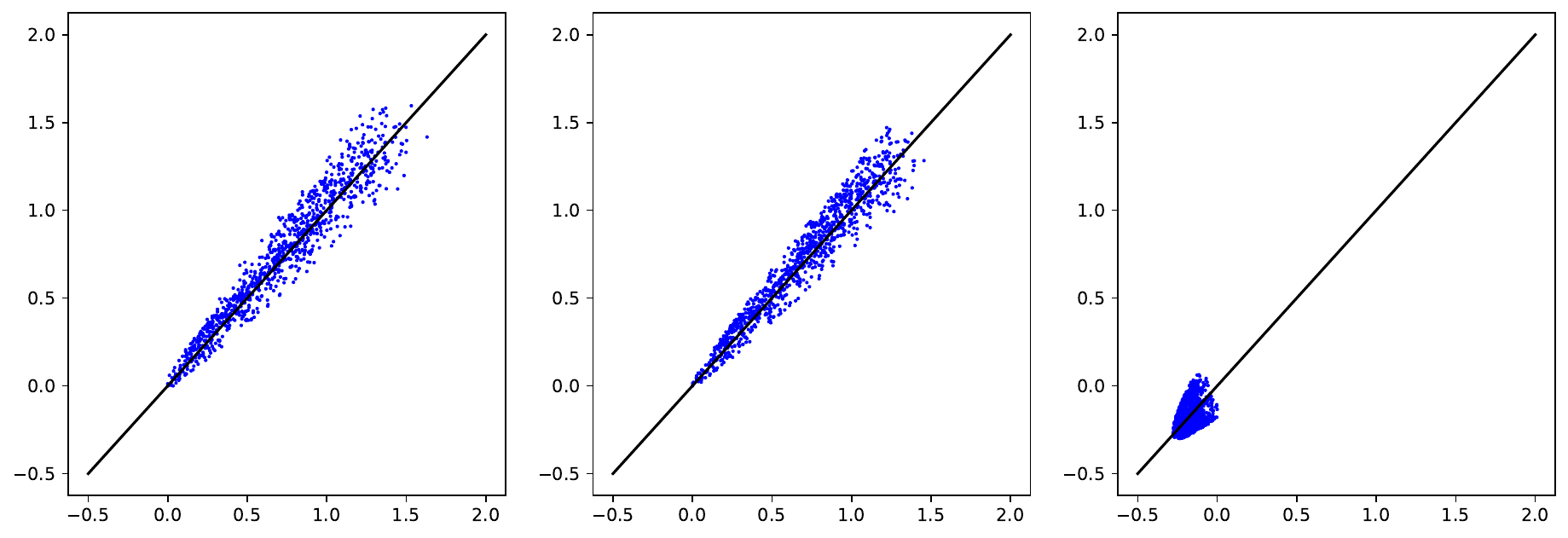}
    \caption{\small \sf EOT potential plots for two samples, both generated from a trivariate standard normal distribution. The value of the regularisation parameter $(\veps)$ for the left plot is $10^{-3}$, middle $10^{-2}$ and right $10^{-1}$. The blue lines represent $L$.}
    \label{fig:sG_vs_sG_potentials_diffparameters}
\end{figure}
\begin{figure}[H]
    \centering
    \includegraphics[width=.8\linewidth]{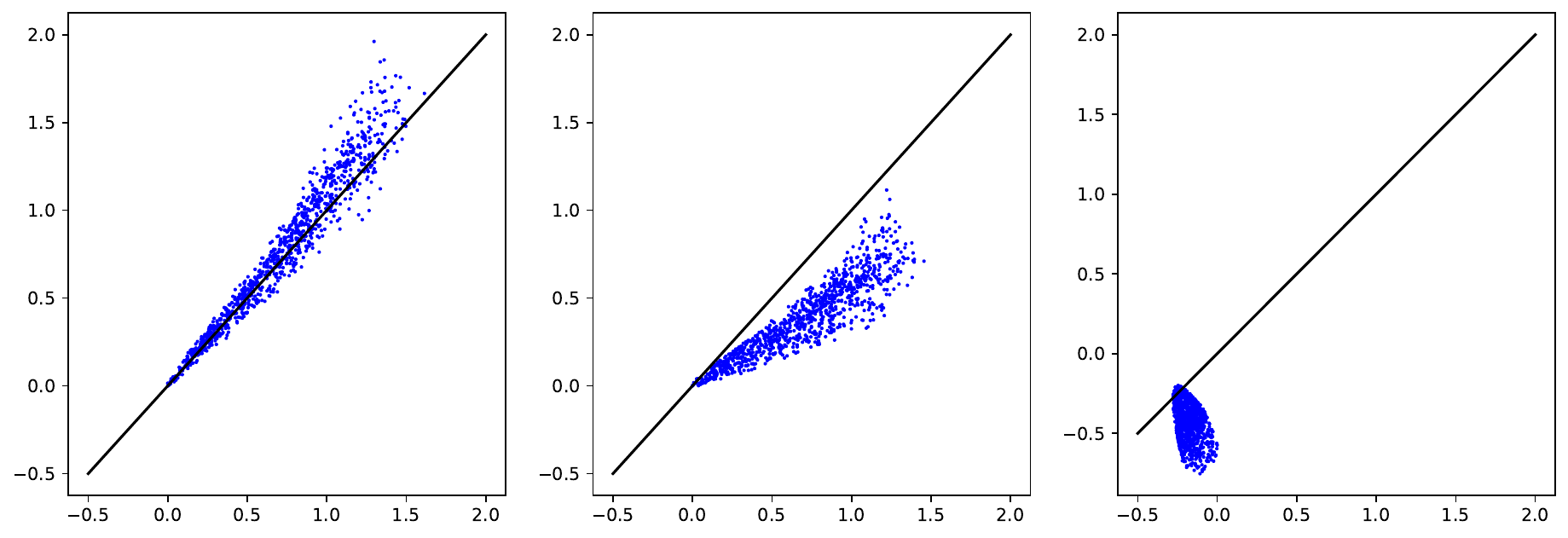}
    \caption{\small \sf EOT potential plots for two samples. We consider the same sample as in part $(b)$ of Example~\ref{ssec:ex_light_vs_heavy}. The value of the regularisation parameter $(\veps)$ is $10^{-3}$ for the left plot, $10^{-2}$ for the middle one and $10^{-1}$ for the right one. The blue lines represent $L$.}
    \label{fig:sG_vs_mult_potential_diffparam2}
\end{figure}
We show another example of potential plots in Figure~\ref{fig:sG_vs_mult_potential_diffparam2}. The samples are the same as in Example~\ref{ssec:ex_light_vs_heavy}. Observe that, although the plots for each value of $\veps$ suggest that the samples are not very similar, it is difficult to infer visually (especially when $\veps$ is large) that one of the samples has a heavier tail than the other (Unless the EOT plot is very close to the OT one as in Figure~\ref{fig:normal vs mul_t potential}).


It seems therefore ideal to choose small values of $\veps$ for better visual analysis. Nevertheless, it is important to note that smaller the value of $\veps$, longer the algorithm takes to converge. Moreover, if $\veps$ is chosen too small, the algorithm (Sinkhorn) need more iteration to converge.

\subsection{Key takeaways}

\begin{itemize}
    \item[(i)] We observe that both (OT and EOT) Q-Q plots and potential plots can efficiently identify the similarity between distributions. We experiment this with i.i.d and non i.i.d Gaussian distributions.
    \item[(ii)] When the samples are generated from a same distribution (in our example, a Gaussian distribution) with one of them having a scaling or shifting, we can identify this effect with Q-Q plots. Potential plots do not identify the shift, but the scale. 
    \item[(iii)] We have shown with some examples that outliers can be detected with OT Q-Q plots and potential plots. For EOT, one needs to choose the value of the regularisation parameter to be small enough to observe them.
    \item[(iv)] Considering two examples, we showed that both Q-Q and potential plots can be used as a visual tool to compare tail distributions. For instance, comparing a light tail sample (e.g. Gaussian sample) with a heavy tail one (e.g. multivariate Student's t-distribution), we could identify on the plots the heaviest tail between the two distributions.
    \item[(v)] When comparing distributions in high dimension, potential plots give a $2$-dimensional representation, while a large number of componentwise plots is needed in the case of OT (and also geometric) Q-Q plots. As such, potential plots can be very useful in applied fields, e.g. risk management, as they offer:
    \begin{itemize}
        \item A discriminating tool between light and heavy tail, 
        \item A visual validation tool for multivariate modelling.
    \end{itemize} 
\end{itemize}
Given these characteristics, to compare visually two multivariate distributions, we would recommend to proceed as follows:
\begin{enumerate}
    \item Plot the potential function to detect if the two samples are drawn from a same multivariate distribution;
    \item To obtain the whole comparison with specific features, consider the (E)OT Q-Q plots. 
\end{enumerate}

As a last remark, if comparing two samples which one suspects to come from a same family of distributions, then one may standardize the data to compute the (E)OT potential and Q-Q plots.

\section{Application on real data}\label{sec:ex_real_data}

In the previous section, we considered many scenarios and experiments with simulated samples to have a better understanding of (E)OT Q-Q and potential plots as a visual tool to compare multivariate distributions. Now, we can turn to applications on real data. We consider two examples.

The first example is the Fisher's Iris dataset, which can be downloaded from\\ 
\href{https://archive.ics.uci.edu}{https://archive.ics.uci.edu}, a standard dataset used in statistics. It was also considered in \cite{Dhar2014} for analysing multivariate Q-Q plots based on geometric quantile, giving us a way to compare the results obtained when choosing two types of multivariate quantiles (see Section~\ref{sec:comparisonGeom}). The dataset has $4$ variables, and $50$ observations for each of the $4$ variables. Due to the relatively smaller size of this dataset, we consider another example offering a larger sample size. This second example is the Turkish rice Osmanic dataset, downloaded also from the link 
\href{https://archive.ics.uci.edu}{https://archive.ics.uci.edu}. It has $5$ variables, and $2180$ observations for each of the $5$ variables.

\subsection{Example 1: Fisher's Iris data}\label{applic:Iris}

This Fisher's Iris dataset is constituted of three multivariate samples corresponding to three varieties: Iris Setosa, Iris Versicolour and Iris Virginica. Each variety consists of $50$ observations and each observation contains the measurement of $4$ variables, namely: sepal length, sepal width, petal length and petal width. Therefore, each sample is $4$-dimensional with a size of $50$. Following the steps (ii) to (iv) described in Section~\ref{sec: Illustration} to build the (E)OT potential and Q-Q plots, we compare each of the three samples with a 4-dimensional Gaussian sample of size 50. This choice of comparison is also motivated by the fact that \cite{Dhar2014} showed a strong evidence of normality of those samples using the geometric Q-Q plots. For a fair comparison, we follow the procedure of standardising the dataset, like in \cite{Dhar2014}, before computing various quantiles and potentials.
%
With the (E)OT approach, we obtain the potential plots displayed in Figure~\ref{fig:Iris_potential} for the three varieties of iris, and the Q-Q plots in Figure~\ref{fig:Iris_QQ}, for the Iris Setosa, Iris Versicolour, and Iris Virginica, respectively. Note that we also computed the associated test statistics and $p$-values but choose not to report them here, as they heavily rely on the asymptotics of the test statistics, which is not compatible with such a small sample size. 
Looking at the potential plots, we observe that the points are very much spread out, which would hint at non--Gaussianity of the samples (Iris Setosa, Iris Versicolour and Iris Virginica). However, as already pointed out, the sample size is very small for any reasonable inference. 
\begin{figure}[H]
\begin{minipage}{0.32\textwidth}
    \centering
    \includegraphics[width=.9\linewidth]{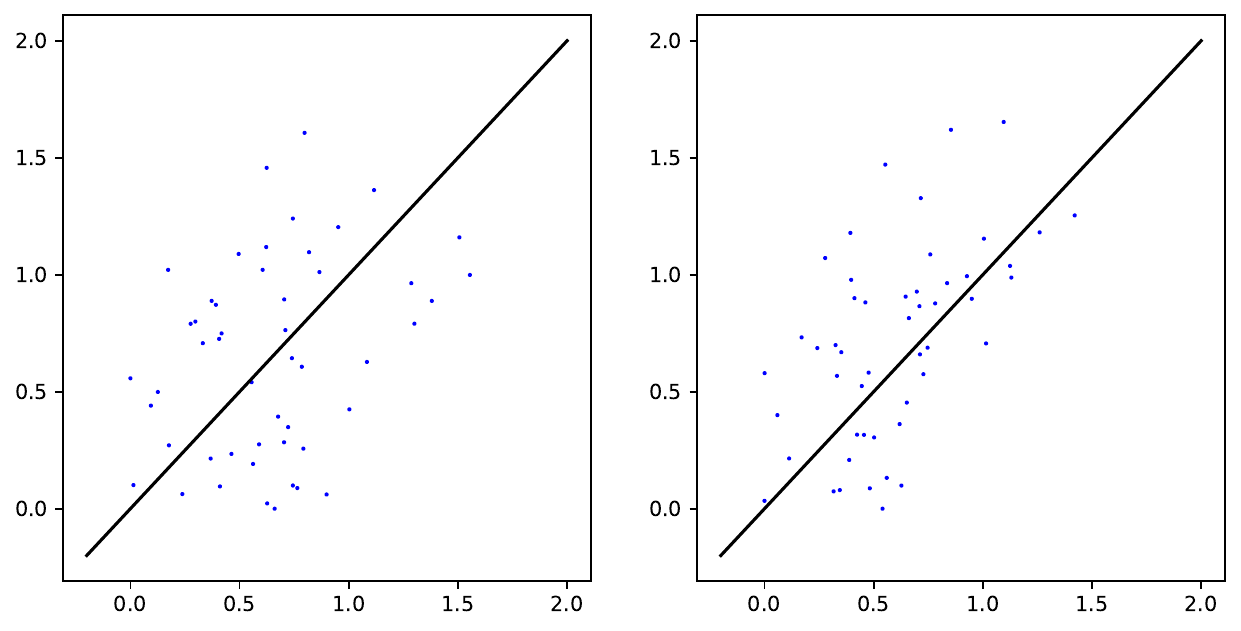}
   \\{\small\sf Iris Setosa data}
\end{minipage}
\hfill
\begin{minipage}{0.32\textwidth}
    \centering
    \includegraphics[width=.9\linewidth]{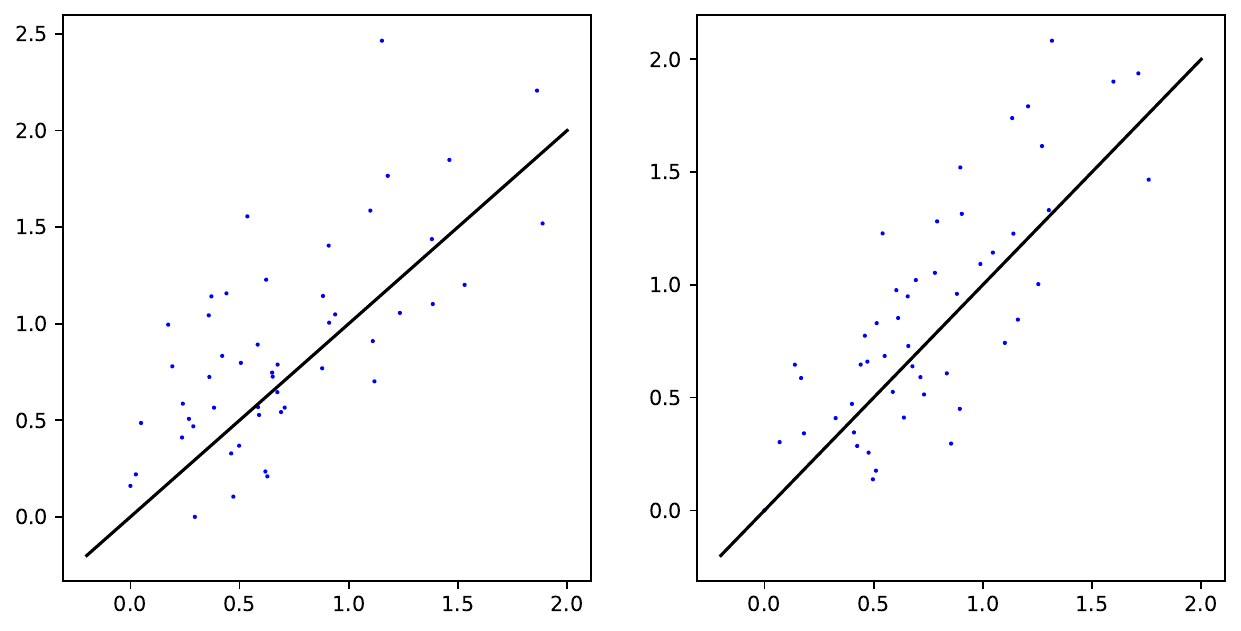}
    \\{\small\sf Iris Versicolour data}
\end{minipage}
\hfill
\begin{minipage}{0.32\textwidth}
        \centering
        \includegraphics[width=.9\linewidth]{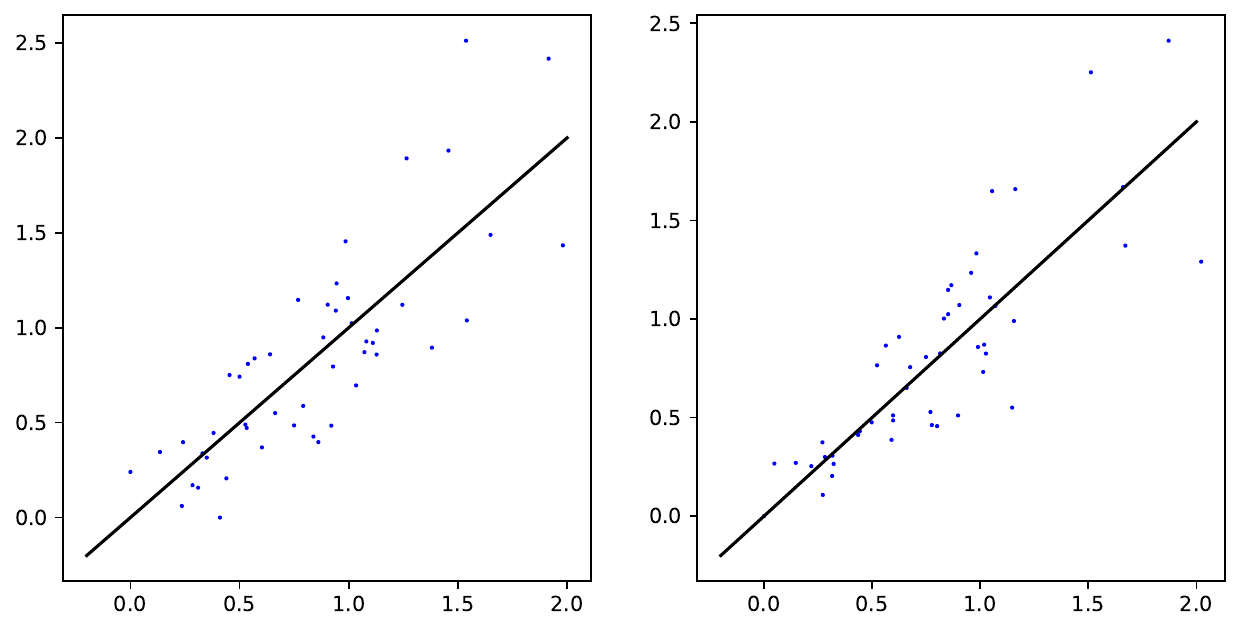}
        \\{\small\sf Iris Virginica data}
\end{minipage}
\caption{\footnotesize\sf Potential plots for two multivariate samples, each of size $50$; the first one is drawn from a $4$-dimensional standard normal  distribution and the second one is the standardised Iris Setosa data for the first (left) pair of plots, the standardised Iris Versicolour data for the second (middle) pair of plots, and the standardised Iris Virginica data for the third (right) pair of plots. For each pair, the left plot is OT potential plot and the right one is EOT potential, choosing $\veps=10^{-3}$. The straight (black) lines represent the line $L$.}
\label{fig:Iris_potential}
\end{figure}
We now look at the (E)OT Q-Q plots, to compare the data with a multivariate standard Gaussian distribution. In Figure~\ref{fig:Iris_QQ}, for each of the three samples, the points are loosely concentrated around the straight line $L$ in the OT Q-Q plots, while in EOT Q-Q plots, they appear a bit more concentrated around the line $L$.

\begin{figure}[H]
    \centering
    \includegraphics[width=.6\linewidth]{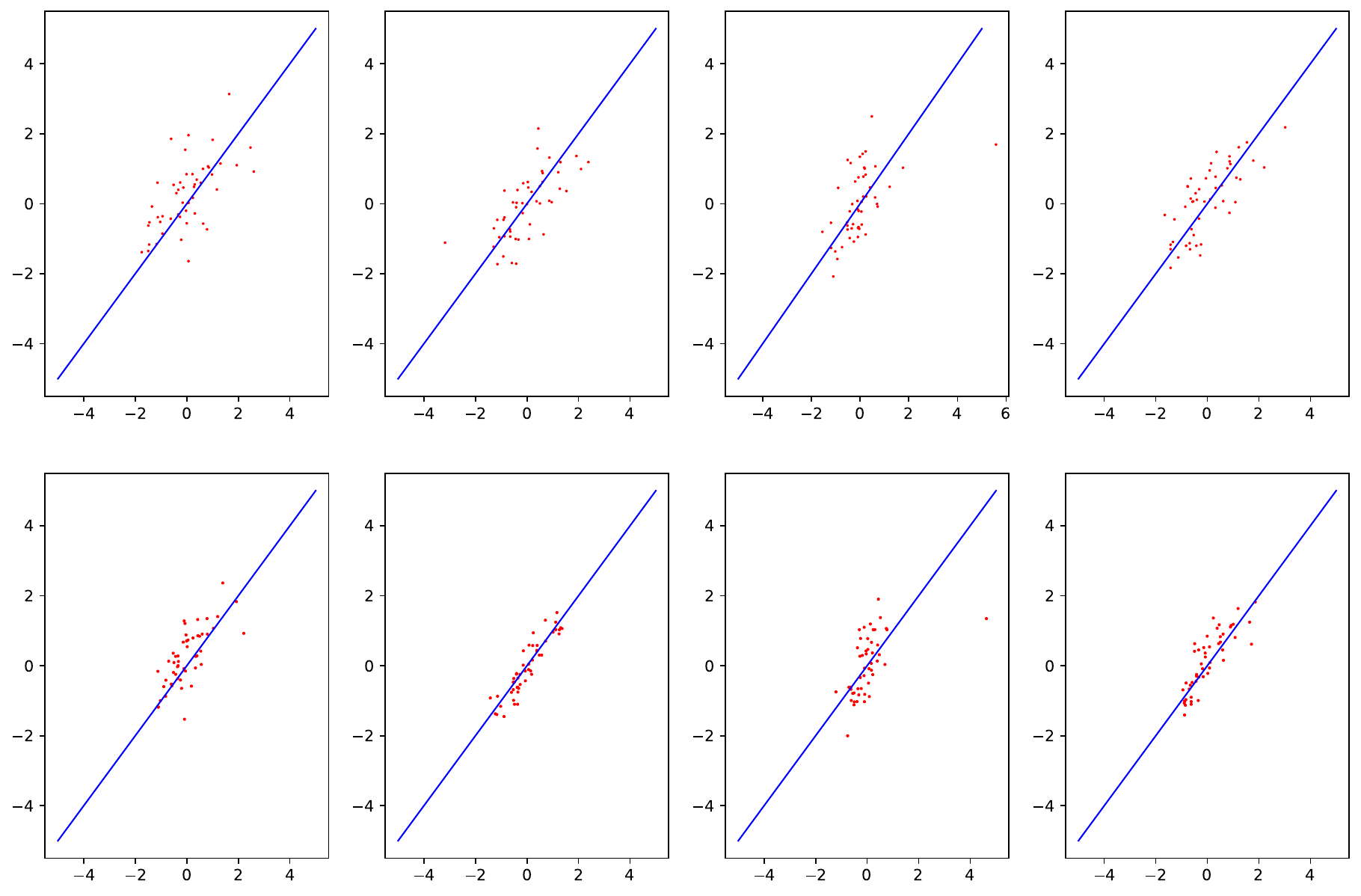}
        \\{\sf Iris Setosa data} 
        \\
        \includegraphics[width=.6\linewidth]{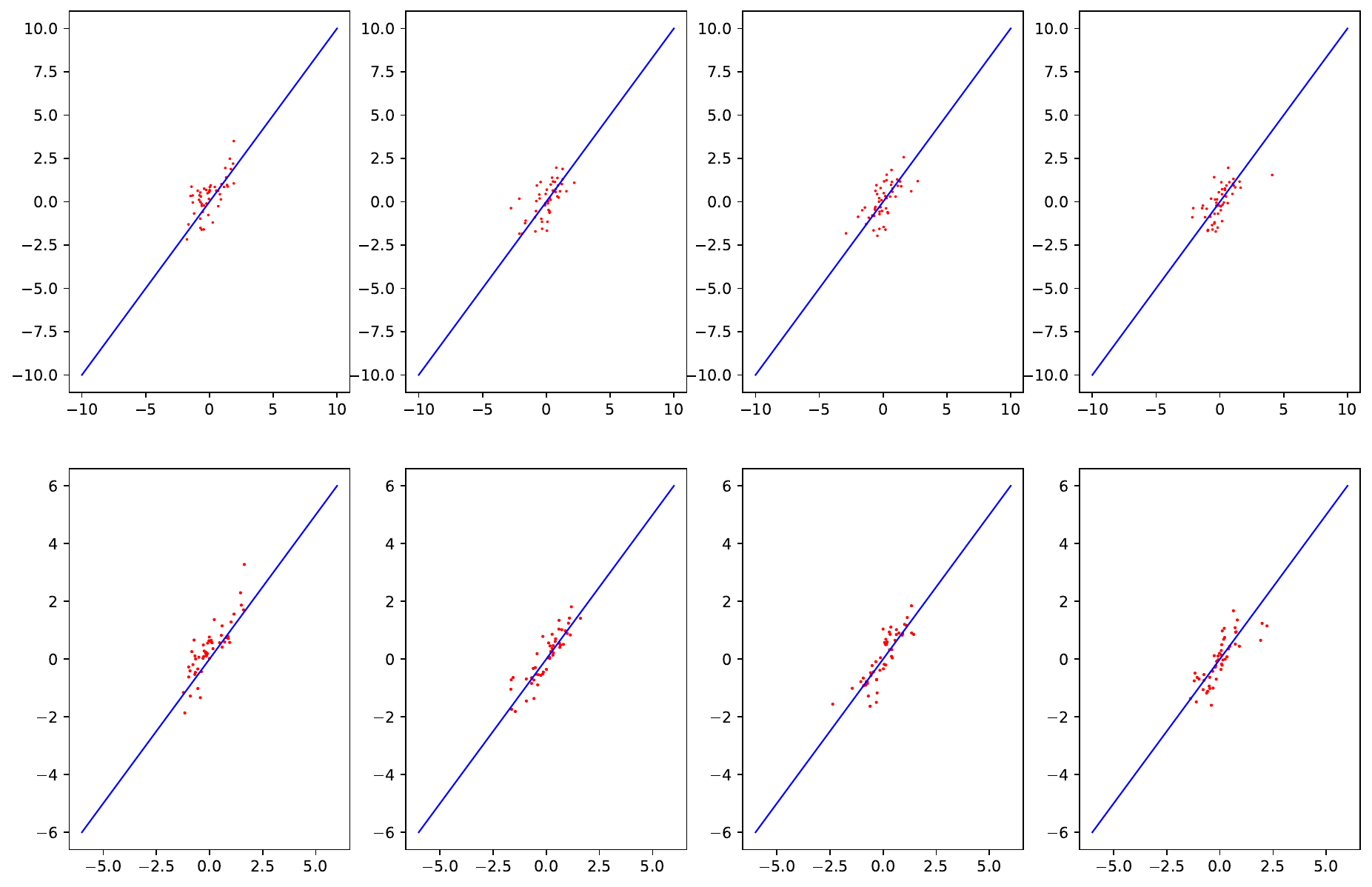}
        \\{\sf Iris Versicolour data}
        \\
        \includegraphics[width=.6\linewidth]{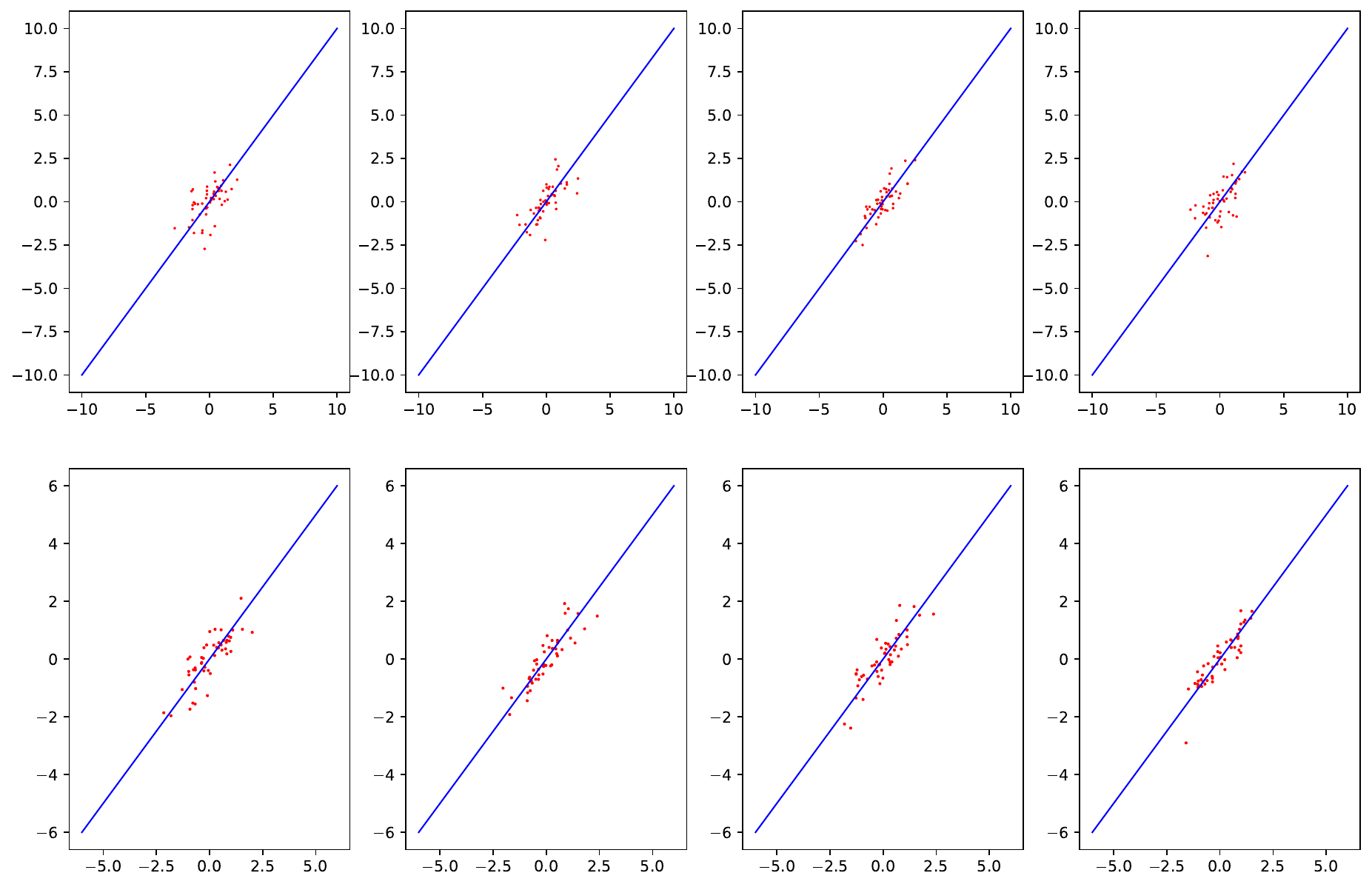}
        \\{\sf Iris Virginica data}
    \caption{\footnotesize\sf Q-Q plots for two samples, each of size $50$, the first one is generated from a $4$-dimensional standard Gaussian and the second one is, respectively, the standardised Iris Setosa, standardised Iris Versicolour and standardised Iris Virginica data. For each Iris variety considered, the first row displays OT Q-Q plots and the second one EOT Q-Q plots with $\veps=10^{-3}$.  The straight (blue) lines represent the line $L$.} 
    \label{fig:Iris_QQ}
\end{figure}

\subsection{Example 2: Turkish rice Osmanic data}\label{ex2:rice}
\vspace{-2ex}
\begin{figure}[H]
    \centering
    \includegraphics[width=.8\linewidth]{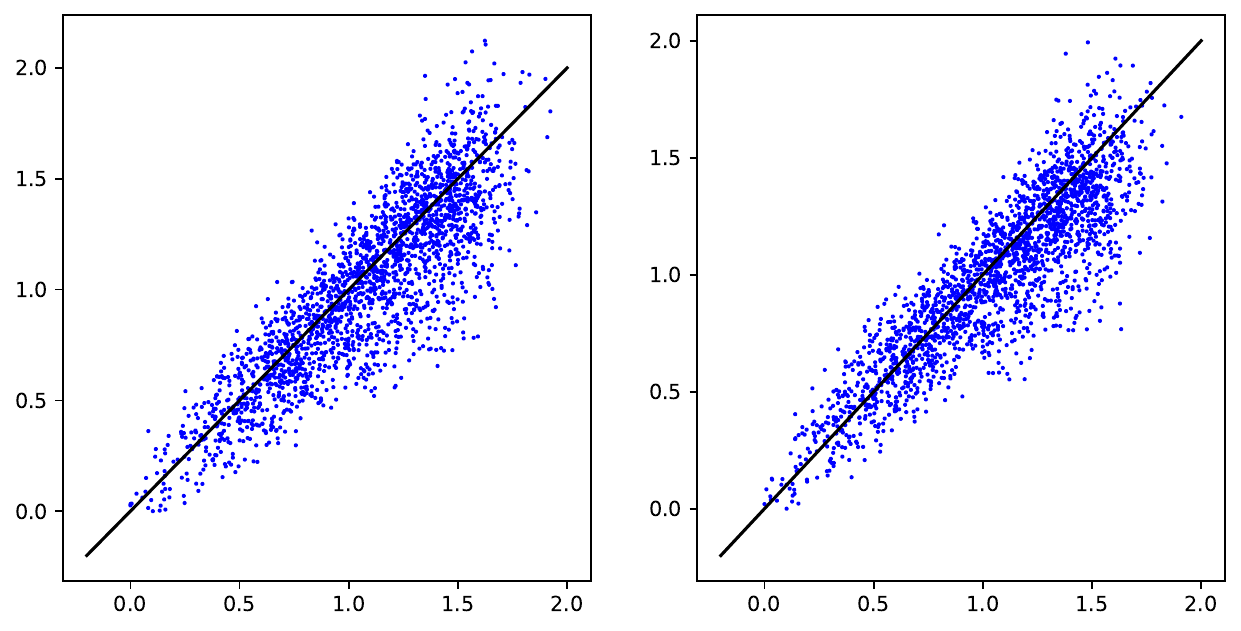}
    \vspace{-2ex}
    \caption{\small\sf Potential plots for two multivariate samples, each of size $2180$; the first one is drawn from a $5$-dimensional standard Gaussian distribution, and the second one is the standardised Turkish rice Osmanic data. The left plot is OT potential plot and the right one EOT potential plot. For EOT, we take $\veps=0.5\times 10^{-2}$. The straight (black) lines represent the line $L$.}
    \label{fig:rice_potential}
\end{figure}
\begin{figure}[H]
    \centering
    \includegraphics[width=.9\linewidth]{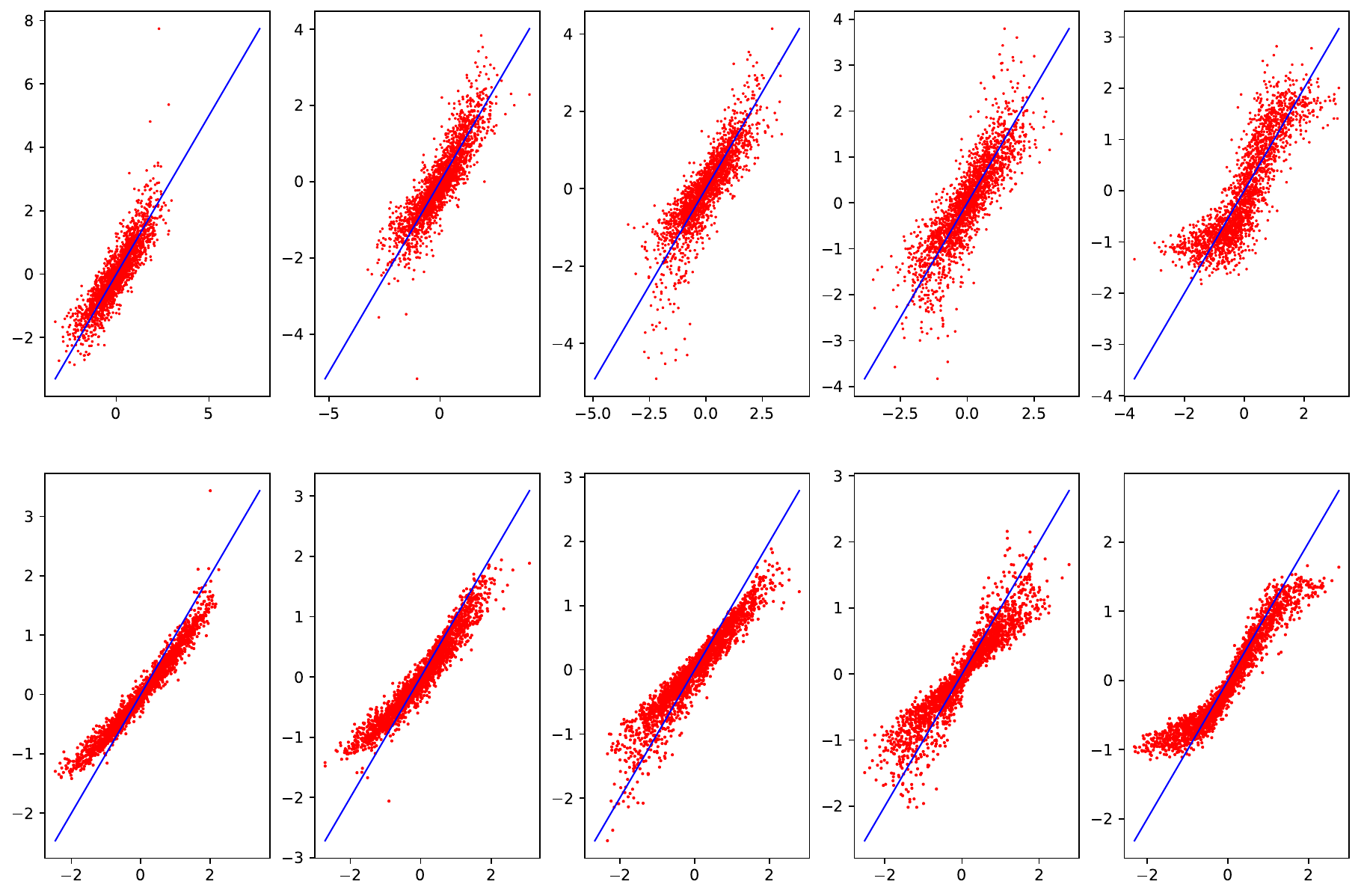}
    \vspace{-2ex}
    \caption{\small\sf Q-Q plots (left for OT, right for EOT with $\veps=0.5\times 10^{-2}$) for two multivariate samples, each of size $2180$; the first one is drawn from a $5$-dimensional standard Gaussian distribution and the second one is the standardised Turkish rice Osmanic data. The straight (black) lines represent the line $L$.}
    \label{fig:rice_QQ}
\end{figure}
In this second example, we consider the Turkish rice Osmanic dataset of size $2180$, with 5 variables that correspond to some features of the rice: Perimeter, Major Axis Length,	Minor Axis Length, Convex Area, and Extent. 
As previously, we want to compare the 5-dimensional standardised empirical distribution of this data with a $5$-dimensional standard Gaussian distribution.
%
We draw the (E)OT potential plots (see Figure~\ref{fig:rice_potential}), followed by the (E)OT Q-Q plots (see Figure~\ref{fig:rice_QQ}). For the EOT approach, we choose the regularisation parameter $\veps$ to be $0.5\times10^{-2}$.
\\[1ex]
Looking at Figure~\ref{fig:rice_potential}, we observe that the points are  spread out around the line $L$ (not as concentrated along $L$ as e.g. in Figure ~\ref{fig:sG_vs_sG_potentials_diffparameters}). So, we would deduce that the underlying distribution of the rice sample may not be well modelled by a multivariate standard Gaussian distribution. Let us move to the Q-Q plots in Figure~\ref{fig:rice_QQ} to gain more visual insights.
There, both OT and EOT Q-Q plots clearly indicate that the rice sample is drawn from a non-Gaussian distribution. Moreover, we observe a heavy tail behavior, for each component. 

Since the sample size is reasonable, let us complete this empirical analysis by computing the test statistics $E_n$ and $F_n$ (defined in \eqref{dfn:teststat}) and the corresponding $p$-values under the null hypothesis that the measure of the data (denoted $\nu_{\text{rice}}$) is Gaussian (denoted $\nu_G$), {\it i.e.} $H_0: \nu_{\text{rice}}|_{K_2}=\nu_G|_{K_2}$.  
\vspace{.3cm}
\begin{table}[h!]
    \begin{tabularx}{1\textwidth} { 
        | >{\centering\arraybackslash}X 
        | >{\centering\arraybackslash}X 
        | >{\centering\arraybackslash}X 
        | >{\centering\arraybackslash}X 
        | >{\centering\arraybackslash}X|}
       \hline
        $n$& $E_n$ & $p$-value of $E_n$ & $F_n$ & $p$-value of $F_n$\\
       \hline
       $250$  & $277.57$  & $0.44825$ & $ 76.73$ & $0.00025$ \\
      \hline
       $500$ & $469.02$ & $0$ & $180.31$ & $0$\\
      \hline
      $1000$  & $950.45$  & $0$ & $312.65$ & $0$ \\
      \hline
      \end{tabularx}      
      \caption{\small\sf Test statistics and $p$-values under the null hypothesis $H_0: \nu_{\text{rice}}|_{K_2}=\nu_G|_{K_2}$.}
  \label{table:pvalue_riceG}
\end{table}

The $p$--values reported in Table~\ref{table:pvalue_riceG} are small, as we could expect from what we observed on the various plots. This supports the assertion that the Turkish rice Osmanic data does not come from a Gaussian distribution.

\section{Comparison with geometric Q-Q plot}
\label{sec:comparisonGeom}

In this section, we aim at comparing the QQ-plots as graphical tools (to compare two distributions) when adopting two main approaches, the (E)OT one and the geometric one. To do so, we display the OT, EOT, and geometric Q-Q plots for various simulated and real datasets, then, focus on comparing if one of the methods provides more informative or stronger visual insights about the two datasets being compared. For the procedure regarding geometric Q-Q plots, we refer the reader to \cite{Dhar2014}.

\subsection{Simulated data}

We begin by considering two sets of samples, each containing 1000 observations. The first set is drawn from a 5-dimensional standard Gaussian distribution, while the second set is drawn from a 5-dimensional distribution, chosen as follows. The marginals are independent, with the first four following a univariate standard Gaussian distribution, and the fifth one following a univariate Pareto distribution with parameter $3.2$. 
In Figure~\ref{fig:5d_sG_vs_4sG+1sP32}, we depict $3$ types of multivariate Q-Q plots: the first row represents the OT Q-Q plots, the second row shows the EOT Q-Q plots when taking $\veps=0.5\times10^{-2}$, and the last row displays the geometric Q-Q plots.
%
\begin{figure}[H]
    \centering
    \includegraphics[width=.8\linewidth]{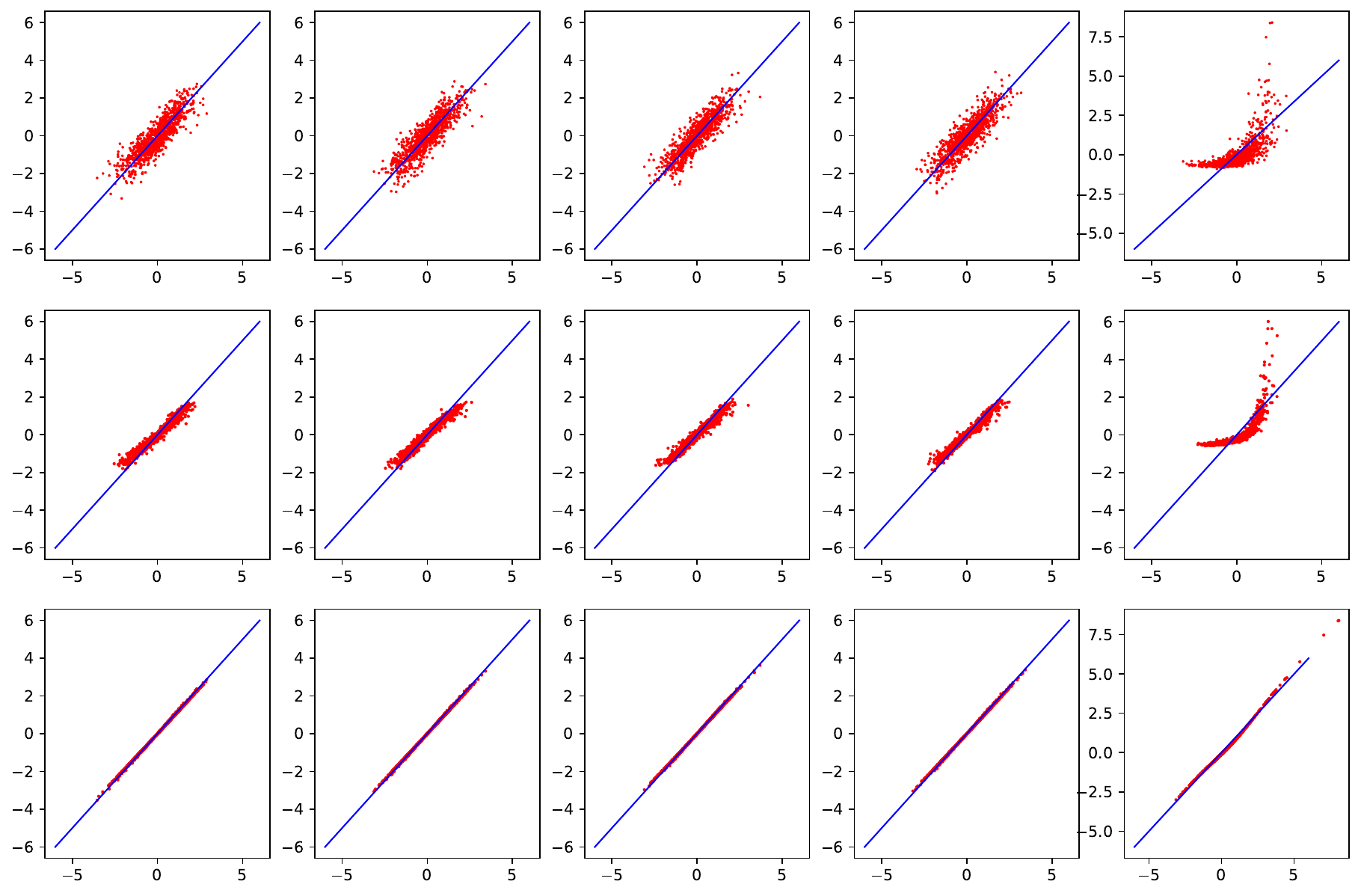}
        \caption{\small\sf Q-Q plots for two samples, the first one is drawn from a $5$-dimensional standard normal distribution, the second one from a $5$-dimensional distribution whose first $4$ marginals are standard normal and the last one is Pareto with parameter $3.2$, all the marginals being independent of each other. The first row displays OT Q-Q plots, the second EOT Q-Q plots with $\veps=0.5\times10^{-2}$, and the last one geometric Q-Q plots. The straight (blue) lines represent the line $L$.}
    \label{fig:5d_sG_vs_4sG+1sP32}
\end{figure}
We observe in Figure~\ref{fig:5d_sG_vs_4sG+1sP32} that OT and EOT Q-Q plots clearly reveal the presence of a heavy tail for the $5$th component, whereas it is not so obvious from the geometric one. In the 5th component of the geometric Q-Q plot, we see some slight movement away from the straight line $L$, but very mild in comparison with the OT and EOT approaches.

Next, we perform a similar experiment but replace the 5th marginal distribution for the second sample by a Student's $t$-distribution with $3.2$ degrees of freedom. 
The OT, EOT, and geometric Q-Q plots are displayed in Figure~\ref{fig:5d_sG_vs_4sG+1Stud_t_32}.

We observe that, while the OT and EOT Q-Q plots reveal the presence of heavy tails in the second distribution, the geometric Q-Q plot does not. Moreover, in the latter case, the point cloud becomes highly aligned with the straight line $L$, suggesting that the two samples are drawn from the same distribution.

\begin{figure}[H]
    \centering
    \includegraphics[width=.8\linewidth]{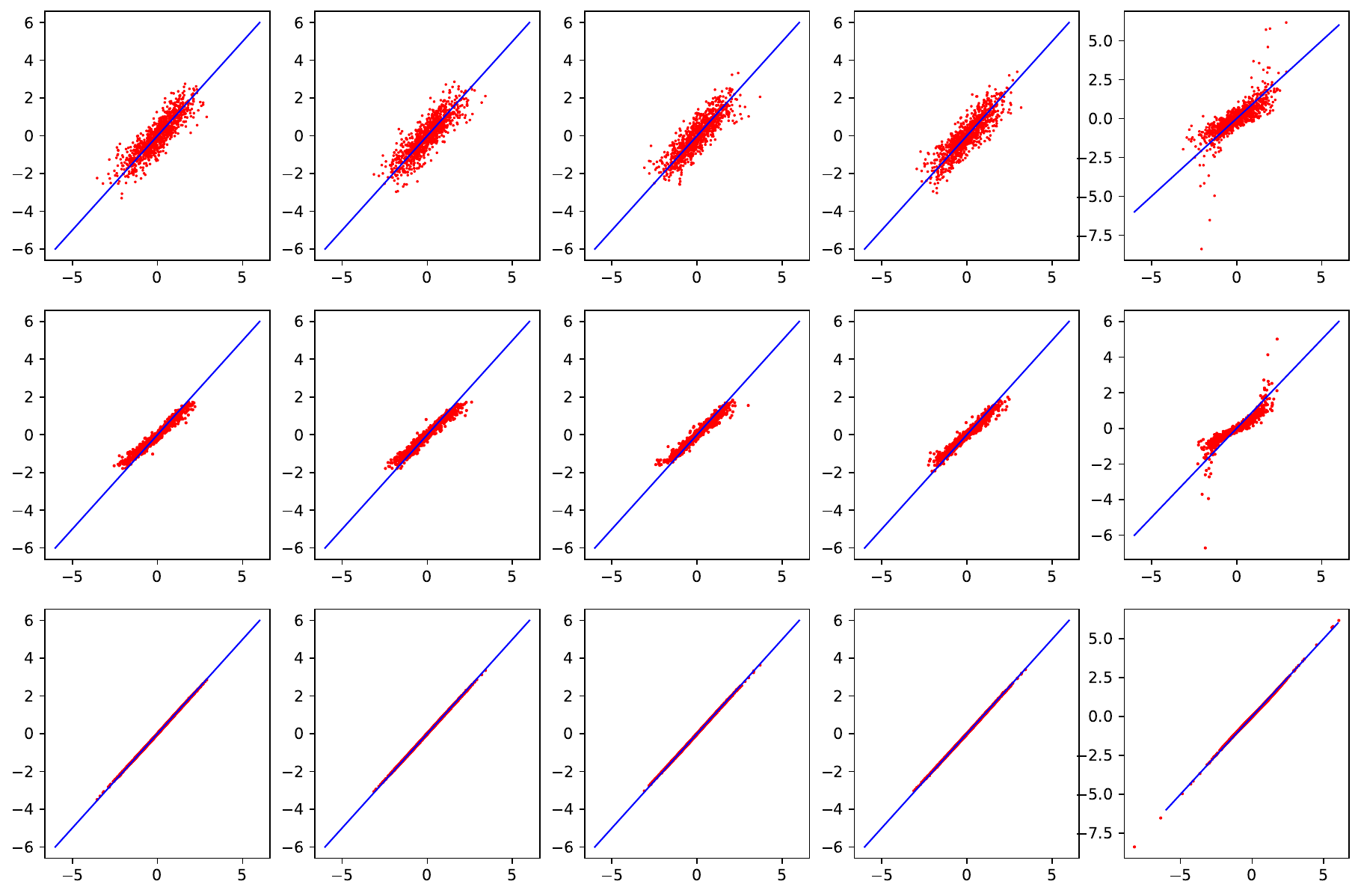}
        \caption{\small\sf Q-Q plots for two samples each of size $1000$, the first one is drawn from a $5$-dimensional standard normal distribution, the second one from a $5$-dimensional distribution whose first $4$ marginals are standard normal and the last one is a Student's t-distribution with $3.2$ degrees of freedom, all marginals being independent of each other. The first row displays OT Q-Q plots, the second one EOT Q-Q plots with $\veps=0.5\times10^{-2}$, and the last one geometric Q-Q plot. The straight (blue) lines represent the line $L$.}
    \label{fig:5d_sG_vs_4sG+1Stud_t_32}
\end{figure}

\subsection{Real data}
\label{subsec:osmanic}
Here, we go back to Example~\ref{ex2:rice}, choosing the Turkish rice Osmanic dataset as second sample, while the first one is drawn from the $5$-dimensional standard Gaussian distribution. 

We note that both OT and EOT Q-Q plots indicate that the second sample is drawn from a distribution that is different from the Gaussian distribution. Additionally, we observe distinct tail behavior in the third and fifth components. However, the geometric Q-Q plot fails to differentiate between the two samples.

In Appendix~\ref{App:comparisonGeom}, we perform a similar comparison, but now comparing the unaltered real data with a multivariate Gaussian distribution whose parameters are learned from the data.

\begin{figure}[H]
    \centering
    \includegraphics[width=.8\linewidth]{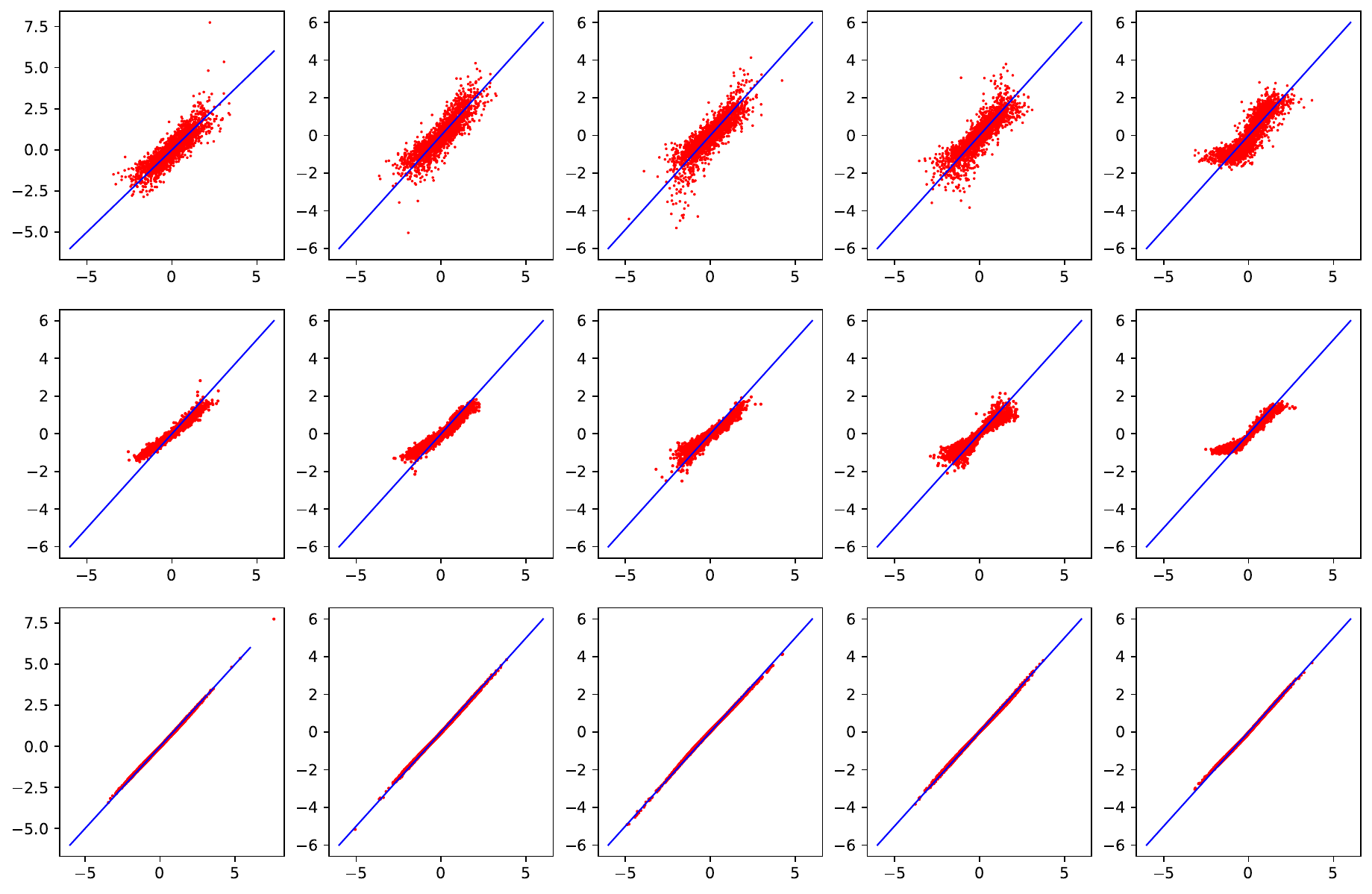}
        \caption{\small\sf Q-Q plots for two samples, each of size $2180$. The first sample is drawn from a $5$-dimensional standard normal distribution, while the second sample consists of $2180$ observations of standardised five features of Turkish rice Osmanic. The first row displays OT Q-Q plot, the second one EOT Q-Q plot with $\veps=0.5\times10^{-2}$ and the last one geometric Q-Q plot.}
    \label{fig:sG_vs_Osmanic-Std}
\end{figure}

\section{Conclusion} 
\label{sec:Q-Qconclusion}

In recent times, multivariate OT based quantiles have become a very important object of research because it reflects many useful theoretical properties of univariate quantiles. Although OT is well understood theoretically, it is computationally very challenging for a large sample; it lacks statistical stability. Many possible solutions have been proposed to circumvent these issues, entropy regularisation (EOT) being one of them. 

In this setting, considered both OT and EOT (maps and potentials) and obtained theoritical results that show that these (E)OT tools can be used to characterise distributions uniquely. Based on this property, we built multivariate Q-Q plots and potential plots by which one can visually compare two multivariate distributions. 

(E)OT potential plots are interesting visual tools when comparing multivariate distributions, as they provide a $2$-dimensional  representation, which is an advantage w.r.t. Q-Q plots, especially in high dimension. However, Q-Q plots provide a visual comparison for specific features (scaling, shifting, outliers, tail heaviness) which is more obvious and readable than potential do. 

Now, comparing OT versus EOT approach, we could observe that EOT Q-Q plots, with small regularization parameter, can determine well if two distributions are statistically similar or not, but they do not reveal informations about special features such as tail heaviness. Note that it is important to choose the value of $\veps$ sufficiently small so that the EOT map closely approximates the OT map, otherwise, for big values of $\eps$, the EOT map becomes close to a constant map. It is also observed that the smaller the values of $\veps$, the longer the algorithm takes to execute. Also, if $\veps$ is chosen very small, it happens sometimes that the code in \cite{flamary2021pot} breaks down because of encountering NaN value.

Finally, we proposed statistical tests associated with the EOT Q-Q and potential plots, based on the available statistical stability results (central limit theorems) for EOT. Once limit theorems will be available for OT, we will be able to construct test statistics for OT as well, in a similar way as for EOT. Note that limit theory for OT is an area of active research, for recent developments on this topic, we refer e.g. to \cite{sadhu2023stability, manole2023central}. 

Our future goal is to study the asymptotic of empirical estimators of the (E)OT quantile and potential function, specifically their tail behavior, as we did for geometric quantiles and half-space depths (see \cite{preprint1SKV2023}). It will be helpful to have a better understanding of such statistical tools in view of their applications.

\bibliographystyle{plainnat}
\bibliography{LitSibsankar2}

\newpage 

\appendix

{\bf \Large Appendix - Additional examples}

\section{Another example for the comparison of heavy versus light tail}
\label{App:LightHT}

Let $\nu_X$ be the push forward measure of the trivariate standard normal distribution (denoted $\nu$) by the function $f(x_1, x_2, x_3)=(1+|x_1|, 1+|x_2|, 1+|x_3|)$, {\it i.e.} $\nu_X=f \sharp \nu$. Let $\nu_Y$ be a probability distribution on $\real^3$, with  i.i.d. Pareto($3$) marginals (with density function given by $p(x)=3x^{-4}, \, x \geq 1$). Note that both the distributions $\nu_X$ and $\nu_Y$ are supported on the first orthant. We consider two samples, each of size $1000$, drawn from $\nu_X$ and $\nu_Y$, respectively. Then, we follow the steps $(ii)-(iii)$ to provide the Q-Q plots corresponding to these samples, which are displayed in Figure \ref{fig:normal vs pareto}. Clearly, the scatter plots do not cluster around the straight line $L$ (in blue). We further notice that in each plot, the points exhibit a nonlinear behaviour which shows significant deviation from $L$. Observe that, in each componentwise plots, the high (or extreme) quantiles corresponding to ${\cal Y}^n$ grow faster than those of ${\cal X}^n$, implying that the sample ${\cal Y}^n$ represent a distribution with relatively heavier tail.
\begin{figure}[H] 
    \centering
    \includegraphics[width=.8\linewidth]{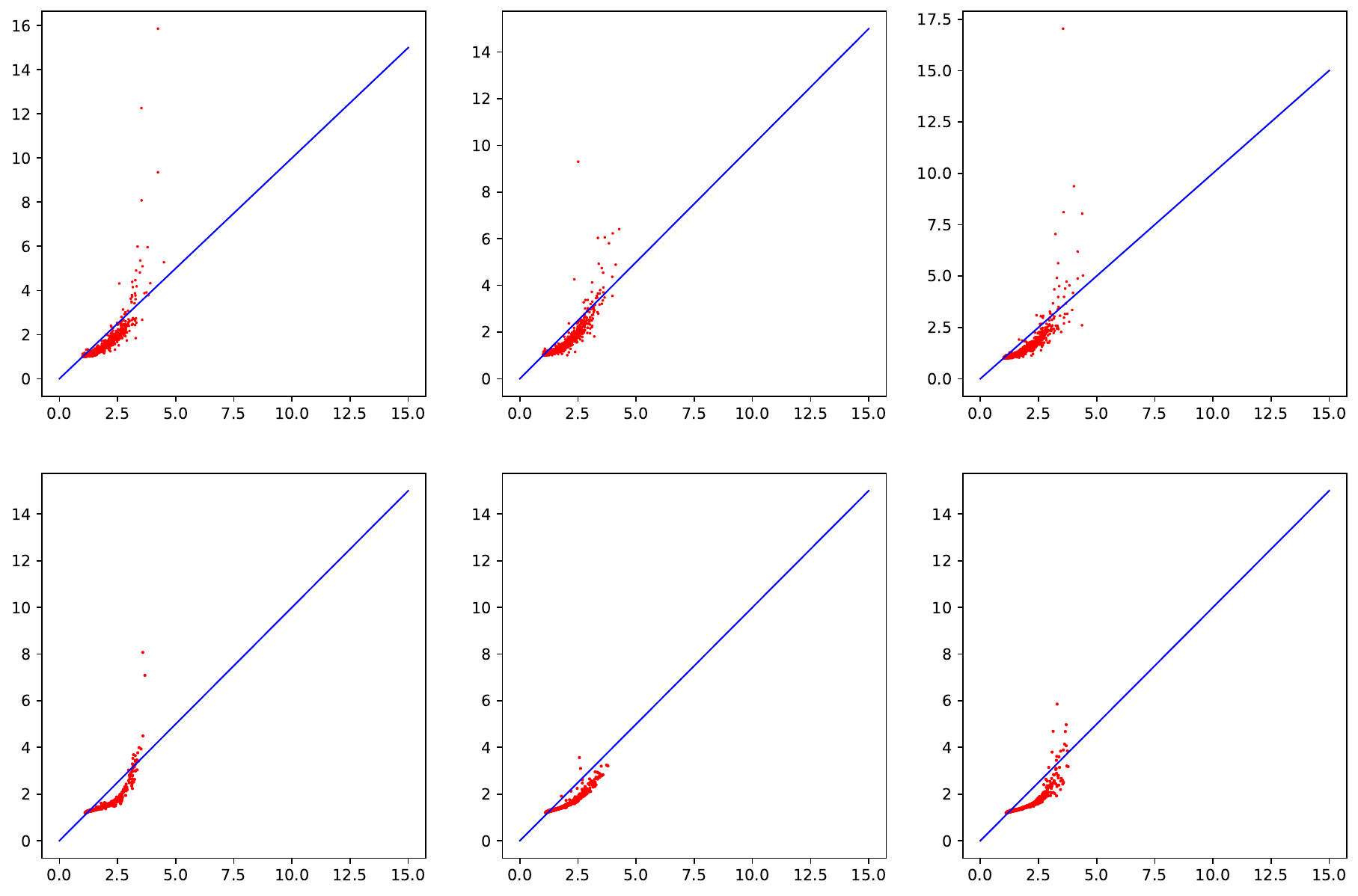}
    \caption{\small \sf Q-Q plots for two samples, the first one is drawn from $\nu_X$ and the second from $\nu_Y$. The first row displays OT Q-Q plot and the second EOT Q-Q plot with $\veps=10^{-3}$.  The straight (blue) lines represent the line $L$.}
    \label{fig:normal vs pareto}
\end{figure}
Next, the potential plots (step (iv)) for this particular example are displayed in Figure~\ref{fig:sG_vs_sP_par3_potential}. It can be observed that the points are scattered around a nonlinear (increasing) curve above the line $L$. This implies that the potential function associated with the second sample has a higher growth rate. Since the quantile is the potential gradient, higher growth rate of potential implies that, in absolute term, the quantiles are bigger, thereby suggesting that the corresponding distribution is heavy tailed.
\begin{figure}[H]
    \centering
    \includegraphics[width=.8\linewidth]{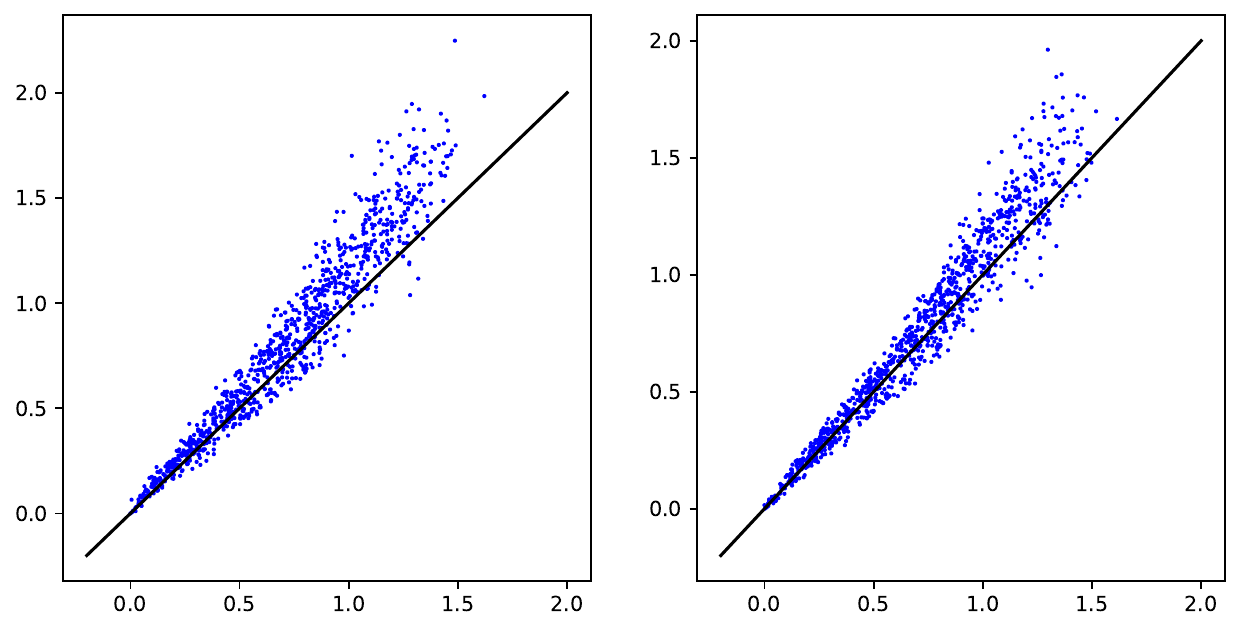}
    \caption{\small\sf Potential plots (left for OT, right for EOT with $\veps=10^{-3}$) for two multivariate samples, each of size $1000$; the first sample is drawn from $\nu_X$ and the second from $\nu_Y$. The straight (black) lines represent the line $L$.}
    \label{fig:sG_vs_sP_par3_potential}
\end{figure}

\section{Comparison among OT, EOT and geometric Q-Q plots }
\label{App:comparisonGeom}

We consider the same examples on simuated and real data as in Section~\ref{sec:comparisonGeom}, but now we perform the comparison  of the raw data with an appropriate multivariate Gaussian distribution whose parameters (mean and covariance for instance) are learned from the raw data.

\subsection{Simulated data}

We begin by considering two samples, each containing 1000 observations.

Using the notation set forth in previous sections, $\calY^n$ is the sample to be compared; it is drawn from a $5$-dimensional distribution with independent marginals, such that the first four marginals follow a univariate standard Gaussian distribution, and the fifth marginal follows a univariate Pareto distribution with parameter $3.2$. The sample $\calX^n$ is drawn from a $5$-dimensional Gaussian distribution with mean ${\hat m}$ and covariance ${\hat \Sigma}$, where ${\hat m}$ and ${\hat \Sigma}$ are the sample mean and the sample covariance, respectively, of $\calY^n$. 

In Figure ~\ref{fig:5d_sG_estpar_vs_4sG+1sP32}, we depict various types of multivariate Q-Q plots: the first row represents the OT Q-Q plot, the second row shows the EOT Q-Q plot, and the last row displays the geometric Q-Q plot.

\begin{figure}[H]
    \centering
    \includegraphics[width=.8\linewidth]{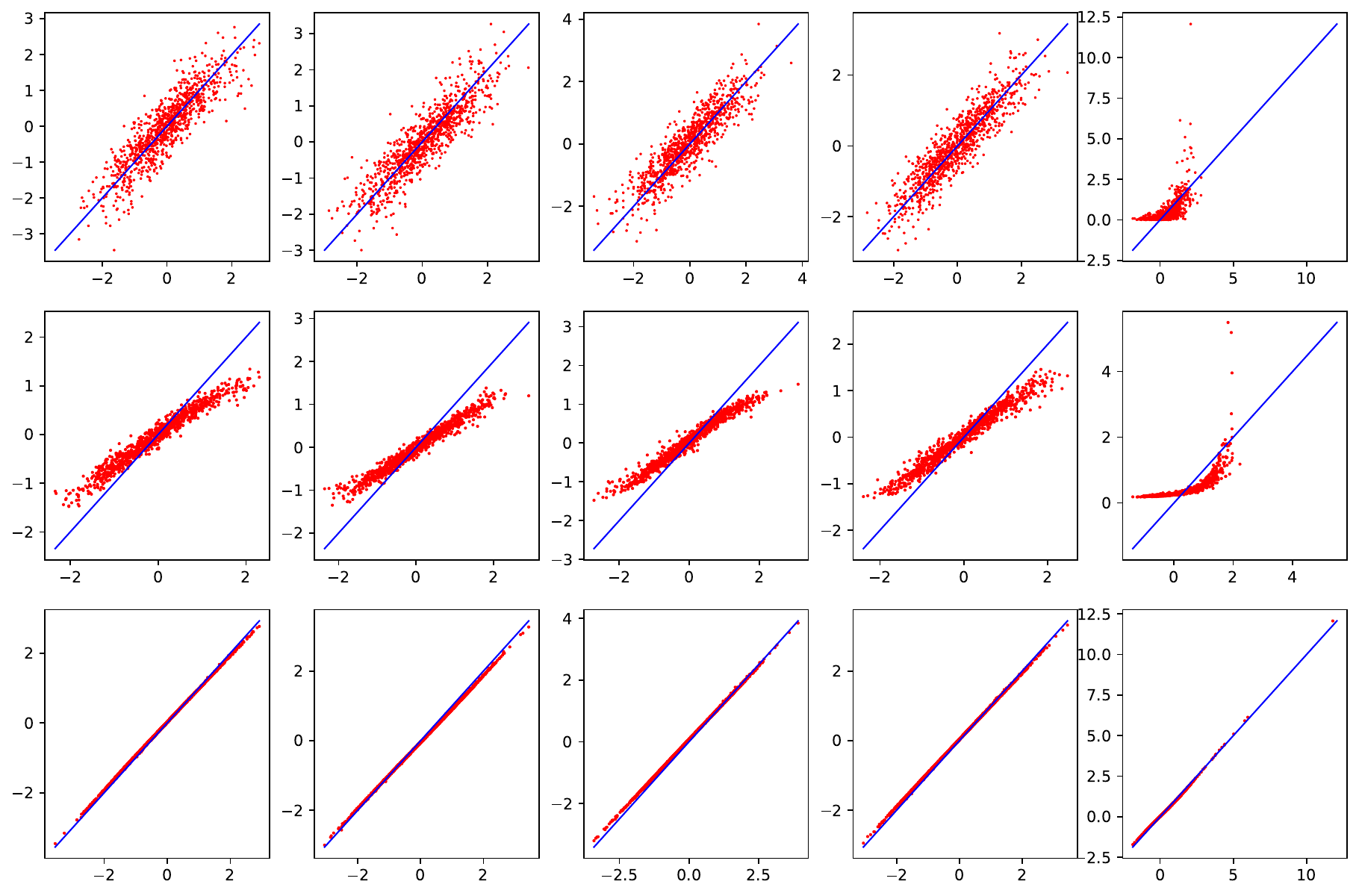}
        \caption{\small\sf Q-Q plots for two samples, the first one is drawn from a $5$-dimensional normal $(\hat m, \hat\sigma)$ distribution, the second one from a $5$-dimensional distribution whose first $4$ marginals are standard normal and the last one is Pareto with parameter $3.2$, all the marginals being independent of each other. The first row displays OT Q-Q plots, the second one EOT Q-Q plot with $\varepsilon = 0.5\times 10^{-2}$, and the last one geometric Q-Q plots. The straight (blue) lines represent $L$.}
    \label{fig:5d_sG_estpar_vs_4sG+1sP32}
\end{figure}
We observe from Figure~\ref{fig:5d_sG_estpar_vs_4sG+1sP32}, that the OT and EOT Q-Q plots clearly reveal about the presence of a heavy tail in the fifth component, whereas it is not so obvious from the geometric one. In the fifth component of the geometric Q-Q plot, we see some slight deviation from the straight line $L$, but it is really mild in comparison to the OT and EOT.\\ [1ex]

Next, we perform a similar experiment but with different marginal distribution. $\calY^n$ is drawn from a $5$--dimensional distribution, where the marginals are independent of each other. The first four marginals follow a univariate standard Gaussian distribution, while the last marginal follows a Student's $t$-distribution with degree of freedom $3.2$. $\calX^n$ is drawn from the $5$--dimensional multivariate Gaussian distribution with mean ${\hat m}$ and covariance ${\hat \Sigma}$, where ${\hat m}$ and ${\hat \Sigma}$ are the sample mean and sample covariance of $\calY^n$, respectively. The OT, EOT, and geometric Q-Q plots are displayed in Figure~\ref{fig:5d_sG_estpar_vs_4sG+1st32}.

\begin{figure}[H]
    \centering
    \includegraphics[width=.8\linewidth]{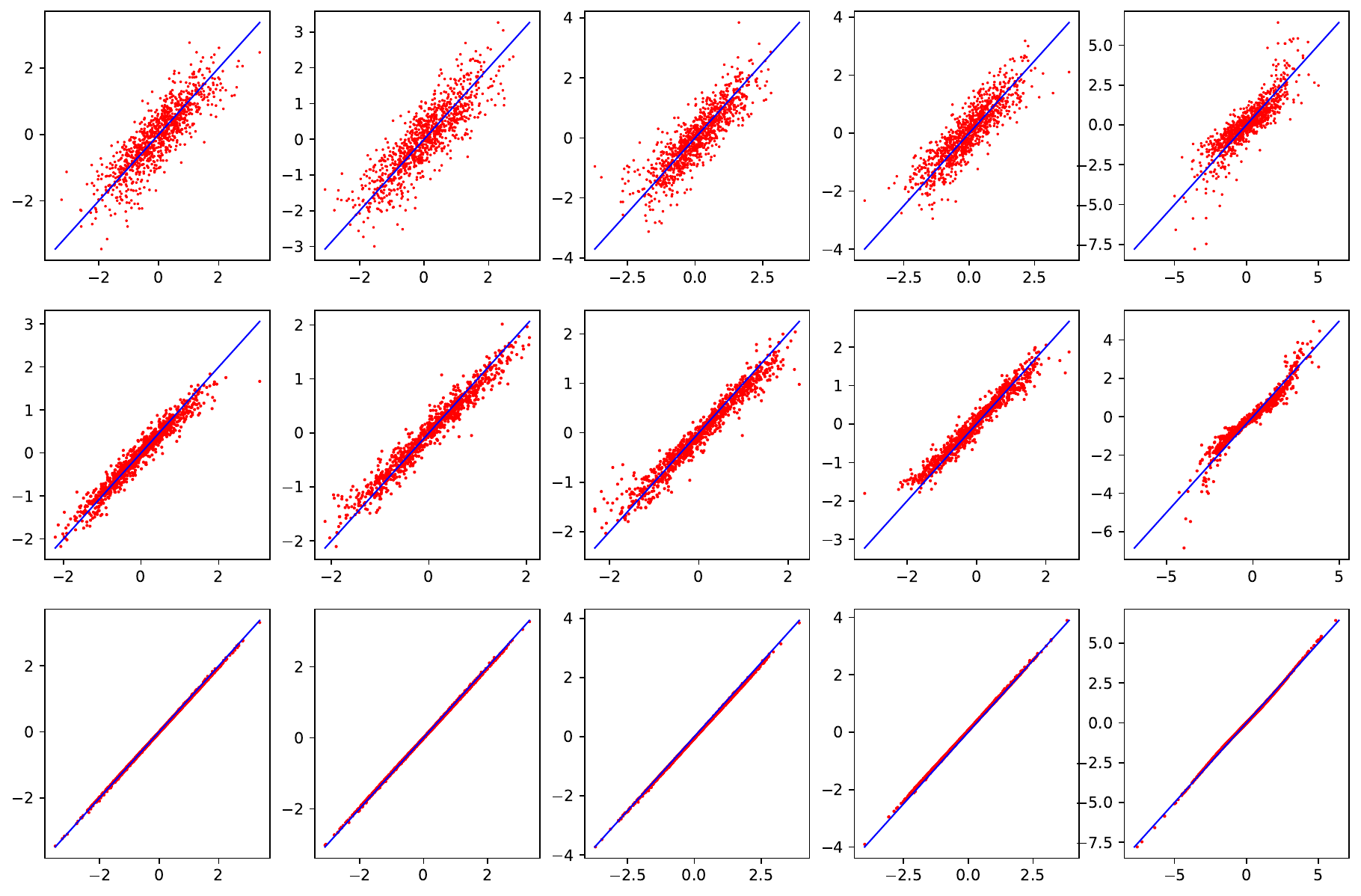}
    \caption{\small\sf Q-Q plots for two samples each of size $1000$, the first one is drawn from a $5$-dimensional normal $(\hat m, \hat\Sigma)$ distribution, the second one from a $5$-dimensional distribution whose first $4$ marginals are standard normal and the last one is Studen's t distribution with parameter $3.2$ and all the marginals are independent of each other. The first row displays OT Q-Q plot, the second one EOT  Q-Q plot with $\varepsilon=0.5\times 10^{-2}$, and the last one geometric Q-Q plot. The straight lines in blue represent the line $L$. }
    \label{fig:5d_sG_estpar_vs_4sG+1st32}
\end{figure}

We observe that, while the OT and EOT Q-Q plots reveal the presence of heavy tails in the $5$ th margnal, the geometric Q-Q plot does not. Moreover, in the latter case, the point cloud becomes highly aligned with the straight line $L$, suggesting that the two samples are drawn from the same distribution.

\subsection{Real data}

Now let us move to real data, considering the same data as in Section~\ref{subsec:osmanic}. The sample $\calY^n$ consists of $2180$ observations of five features of Turkish rice Osmanic (Perimeter, Major Axis Length,	Minor Axis Length, Convex Area, and Extent). $\calX^n$ is the sample drawn from the $5$--dimensional Gaussian distribution with mean ${\hat m}$ and covariance ${\hat \Sigma}$, where ${\hat m}$ and ${\hat \Sigma}$ are the sample mean and the sample covariance, respectively, of $\calY^n$. The Q-Q plots are displayed in Figure~\ref{fig:RiceEstimatedMoments-Gauss}.

\begin{figure}[H]
    \centering
    \includegraphics[width=.8\linewidth]{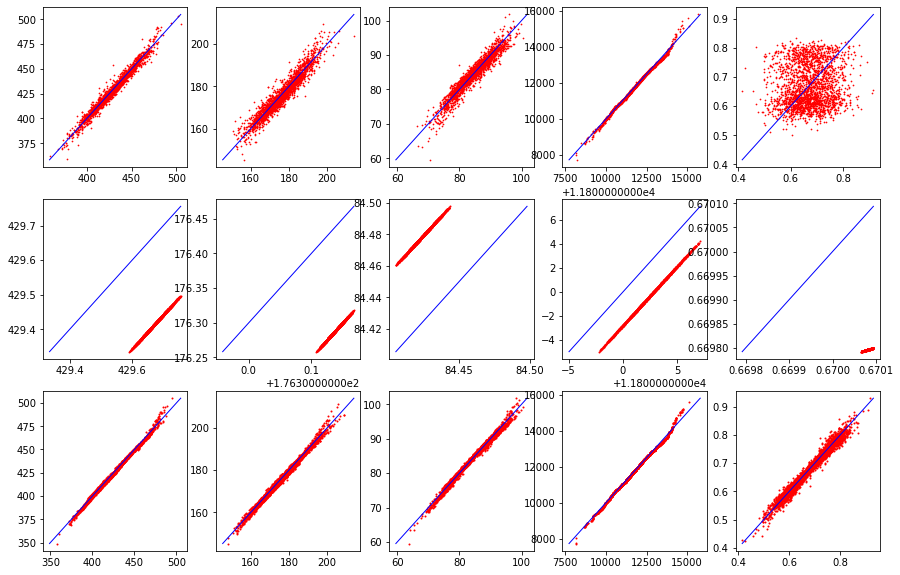}
        \caption{\small\sf Q-Q plots for two samples, each of size $2180$. The first sample is drawn from the $5$-dimensional Gaussian distribution with mean $\hat{m}$ and covariance matrix $\hat{\Sigma}$, while the second sample consists of $2180$ observations of five features of Turkish rice Osmanic. The first row displays OT Q-Q plots, the second one EOT Q-Q plots with $\veps=0.5\times10^{-2}$ and the last one geometric Q-Q plots. The straight (blue) lines represent $L$.}
    \label{fig:RiceEstimatedMoments-Gauss}
\end{figure}

We observe that all the Q-Q plots indicate that the second sample is drawn from a distribution different from the Gaussian distribution with mean $\hat{m}$ and covariance matrix $\hat{\Sigma}$. Notably, the EOT Q-Q plots look very different as compared to the OT and geometric Q-Q plots, possibly due to the regularization parameter $\veps$ not being sufficiently small. Due to computational limitations, we were unable to select $\veps$ smaller than $0.5 \times 10^{-2}$. On the other hand, the OT Q-Q plots more prominently highlight the dissimilarities between the two samples, particularly evident in the $5$-th component (note that the scaling is very different when comparing with other components) of the plots. It is important to note that none of the plots exhibit clear features of the distribution, like heavy tails. Compared to the Q-Q plots conducted after standardizing both samples in Example~\ref{ex2:rice} and Section~\ref{subsec:osmanic}, this drawback indicates that standardizing the samples when working with real data might be a better approach.


\end{document}